\documentclass{article}

\usepackage[utf8]{inputenc}
\usepackage[T1]{fontenc}
\usepackage[sort]{natbib}

\usepackage{amsmath}
\usepackage{amssymb}
\usepackage{amsthm}
\usepackage{amsfonts}
\usepackage{fullpage}
\usepackage{thm-restate}
\usepackage{enumitem}
\usepackage{subcaption}

\usepackage[linesnumbered, boxed, vlined]{algorithm2e}
\DontPrintSemicolon
\SetKwInOut{Input}{Input}
\SetKwInOut{Output}{Output}
\SetKwProg{function}{Function}{:}{}

\usepackage{algorithmic}

\usepackage{graphicx}
\usepackage{enumitem}
\usepackage{textgreek}
\usepackage{tabularx}
\usepackage{xspace}
\usepackage[linktoc=section]{hyperref}
\usepackage[nameinlink,capitalize,nosort]{cleveref}
\usepackage[dvipsnames,svgnames]{xcolor}
\usepackage{tcolorbox}

\definecolor{ForestGreen}{rgb}{0.0333,0.4451,0.0333}
\definecolor{DarkRed}{rgb}{0.65,0,0}
\definecolor{Red}{rgb}{1,0,0}
\hypersetup{
    unicode=false,          
    colorlinks=true,        
    linkcolor=BrickRed,          
    citecolor=OliveGreen,        
    filecolor=magenta,      
    urlcolor=cyan           
}

\newtheorem{theorem}{Theorem}[section]
\newtheorem{corollary}[theorem]{Corollary}
\newtheorem{lemma}[theorem]{Lemma}

\newtheorem{definition}[theorem]{Definition}
\newtheorem{observation}{Observation}[section]

\makeatletter
\newcommand{\multiline}[1]{%
  \begin{tabularx}{\dimexpr\linewidth-\ALG@thistlm}[t]{@{}X@{}}
    #1
  \end{tabularx}
}
\makeatother

\newcommand{\PTime}{\textsc{P}\xspace}

\newcommand{\polylog}{{\ensuremath\mathsf{polylog}}}
\newcommand{\NC}{\ensuremath{\mathsf{NC}}\xspace}
\newcommand{\CC}{\ensuremath{\mathsf{CC}}\xspace}
\newcommand{\RNC}{\ensuremath{\mathsf{RNC}}\xspace}
\newcommand{\ef}{\ensuremath{\mathsf{EF}}\xspace}
\newcommand{\efone}{\ensuremath{\mathsf{EF1}}\xspace}
\newcommand{\efk}[1]{\ensuremath{\mathsf{EF}}$#1$}
\newcommand{\efx}{\ensuremath{\mathsf{EFX}}\xspace}
\newcommand{\po}{\ensuremath{\mathsf{PO}}\xspace}
\newcommand{\efonepo}{\ensuremath{\mathsf{EF1\cap PO}}\xspace}
\newcommand{\APX}{\ensuremath{\mathsf{APX}}\xspace}
\newcommand{\NP}{\ensuremath{\mathsf{NP}}\xspace}
\newcommand{\PPAD}{\ensuremath{\mathsf{PPAD}}\xspace}

\newcommand{\defn}[1]{\emph{\textbf{#1}}}

\renewcommand{\emptyset}{\varnothing}

\def \eps {\varepsilon\xspace}

\def \cM {\mathcal{M}\xspace}
\def \cG {\mathcal{G}\xspace}
\def \cA {\mathcal{A}\xspace}
\def \cP {\mathcal{P}\xspace}

\def \cI {\mathcal{I}\xspace}
\def \gap {\cG\xspace}

\title{Improved Parallel Algorithms for EF1 Allocations}

\author{Kishen N Gowda\thanks{Supported in part by NSF grant CNS-2317194} \\ University of Maryland\and D Ellis Hershkowitz\thanks{Supported in part by NSF grant CCF-2403236 and NSF CAREER Award CCF-2540331.}\\ Brown University \and Richard Z Huang\footnotemark[2]\\Brown University \and Gregory Kehne\\Washington University in St.\ Louis}
\date{}

\begin{document}

\maketitle

\begin{abstract}
Allocating $m$ indivisible goods among $n$ agents is a fundamental task in fair division. Recent work of Garg and Psomas [AAMAS 2025] initiated the study of parallel algorithms for envy-free up to one good (\efone) allocations, giving \NC algorithms for $2$ and $3$ agents. They also showed \CC-hardness results for simulating the classic Round Robin algorithm for \efone allocations, even when each agent values at most $3$ goods and each good is valued by at most $3$ agents.

We strengthen these results. For the case of $2$ agents, we quadratically improve the depth from $O(\log ^ 2 m) $ to $O(\log m)$ and the work from $O(m \log m)$ to $O(m)$. Furthermore, we significantly generalize beyond $3$ agents by giving \NC algorithms for any constant number of agents.  We also give randomized algorithms with depth $\tilde{O}(m/n)$ and polynomial work. As corollaries of these results, we obtain \NC algorithms whenever each agent values at most $\polylog(m)$ goods and each good is valued by at most $O(1)$ agents, and \RNC algorithms when each agent values at most $\polylog(m)$ goods. As such, our algorithms bypass the \CC-hardness of Garg and Psomas by not simulating Round Robin. We also complement the aforementioned \CC-hardness by showing the \CC-completeness of simulating Round Robin. Lastly, beyond \efone allocations, we show that computing envy-free up to $k$ goods allocations is possible for $k \approx \sqrt{m}$ in \RNC, or $k = m^{\eps}$ in sublinear depth for any constant $\eps > 0$. 
\end{abstract}
\pagenumbering{gobble}

\newpage
\pagenumbering{arabic}
\setcounter{page}{1}

\section{Introduction}


The fair allocation of indivisible goods is a central problem in computational social choice \cite{brandt16handbook}. Here, we are given $m$ goods and $n$ agents and must decide how to assign goods to agents in such a way as to satisfy some notion of fairness. One natural such notion is \defn{envy-freeness} (\ef): no agent $a$ should prefer another agent $b$'s bundle of goods over their own \cite{foley66resource}. Unfortunately envy-freeness is unobtainable, as illustrated by the case of two agents and one good. A classic relaxation of envy-freeness which circumvents this issue is envy-freeness-up-to-one-good (\efone): no agent $a$ should prefer another agent $b$'s goods over their own as long as they are allowed to ignore their favorite of $b$'s goods \cite{budish11combinatorial,lipton04approximately}.

A simple way of seeing that \efone allocations always exist is a proof by algorithm: in particular, the classic Round Robin protocol. Here, we fix an ordering on the agents and then, in this order, each agent repeatedly chooses their favorite among the remaining goods. Assuming agents' values for goods are additive---that is, agent $a$ gives good $g$ value $v_a(g)$ and gives goods $S$ value $v_a(S) := \sum_{g \in S} v_a(g)$---then it is a simple exercise to see that the resulting allocation is \efone. Motivated by the extensive work on algorithms for \efone allocations and the scale of many \efone instances of practical interest, \cite{garg2023fairly} initiated the study of parallel algorithms for \efone allocations.  

Efficient parallel algorithms for \efone allocations based on Round Robin seem quite hopeless. In particular, an efficient parallel algorithm is generally understood to be an \NC algorithm---namely, one that has polylogarithmic depth and polynomial work.\footnote{We make use of standard work-depth models of parallel computation; see Section~\ref{sec:prelims} for details.}  Round Robin, however, induces chains of linear-length sequential dependencies since each good chosen by each agent delicately depends upon all prior choices of all other agents. Formalizing these issues, \cite{garg2023fairly} showed that simulating Fixed-Order Round Robin---where the ordering of the agents is part of the input---is Comparator-Circuit-hard (\CC-hard). \CC-hardness is a standard notion of parallel hardness widely thought to rule out \NC algorithms \cite{mayr92complexity,cook14complexity}. Even worse, this \CC-hardness holds in the case of additive valuations, even when each agent has non-zero value for at most $3$ goods and each good has at most $3$ agents that give it non-zero value.

While several other efficient algorithms for \efone allocations are known, including envy-cycle elimination \cite{lipton04approximately} and recursive probabilistic serial \cite{aziz24best}, like Round Robin these algorithms allocate goods one-at-a-time and so appear fundamentally sequential. 
Still other approaches appear computationally intractable, optimizing \NP-hard \cite{plaut20almost} and even \APX-hard objectives \cite{caragiannis19unreasonable,lee2017apxnsw} or finding equilibria in games for which the task is \PPAD-complete \cite{budish11combinatorial,othman16complexity}.

That successful approaches for \efone allocations are sequential may not be an accident; \efone is a global property of instances in that changing the allocation of a single good can dramatically affect envy relationships between the donor, the recipient and other agents. Likewise, it is not clear how to efficiently combine two disjoint \efone allocations of goods with a common set of agents into a single allocation---a major barrier to divide-and-conquer-type parallel algorithms.

The apparent difficulty of computing \efone allocations in parallel for general instances motivates our central question:
\begin{quote}\centering
    \textit{Does there exist a reasonable class of \efone allocation instances\\ which admit efficient parallel algorithms?}
\end{quote}
\noindent Instances with a small number of agents is among the most promising such classes. In particular, there is extensive work on algorithms for fair division with boundedly-many agents, including on the query complexity of \efone allocations for $2$ or $3$ agents \cite{oh21fairly}, the comparison-based query complexity for \efone allocations \cite{bu24logarithmic} for $2$ or $3$ agents, \efone allocations that are Pareto optimal for constantly-many agents \cite{mahara2025polynomial} and envy-free up to any good (\efx) allocations for $2$ \cite{plaut20almost} or $3$ agents \cite{chaudhury24efx}; we give more details on these works in \Cref{sec:relwork}. In the spirit of these works, \cite{garg2023fairly} recently gave \NC algorithms with $O(\log ^2m)$ depth and $O(m \log m)$ work for $2$ or $3$ agents with additive valuations. However, no \NC algorithms are known for \efone allocations for even only $4$ agents.

\subsection{Our Contributions}
In this work, we significantly improve on the existing parallel algorithms for \efone allocations.

First, for the case of two agents, we quadratically improve the $O(\log ^ 2 m)$ depth of \cite{garg2023fairly} to $O(\log m)$ while also improving the work from $O(m \log m)$ to $O(m)$.



\begin{restatable}{theorem}{twoagents}\label{thm:2-agents}
    There is an $O(\log m)$ depth and $O(m)$ work algorithm that outputs an \efone allocation for two agents with additive valuations.
\end{restatable}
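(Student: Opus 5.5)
The plan is a parallel cut-and-choose along an \emph{arbitrary} fixed ordering of the goods. I will show that contiguous bundles suffice, so no sorting is needed. Index the goods $1,\dots,m$ and write $[i..j]$ for the goods with indices between $i$ and $j$ (empty if $i>j$). Agent $1$ chooses a cut point $k$, splitting the goods into $A=[1..k]$ and $B=[k+1..m]$. Agent $2$ takes whichever of $A,B$ it values more, and agent $1$ gets the other. Agent $2$ is then envy-free, hence \efone. The whole argument therefore reduces to choosing $k$ so that agent $1$ is \efone no matter which bundle it receives, and doing so in $O(\log m)$ depth and $O(m)$ work.

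\textbf{Choosing the cut.} Suppose agent $1$ receives $B$. It suffices that it does not envy $A\setminus\{k\}$, i.e.\ $v_1([1..k-1])\le v_1([k+1..m])$. Suppose instead it receives $A$. It suffices that it does not envy $B\setminus\{k+1\}$, i.e.\ $v_1([1..k])\ge v_1([k+2..m])$. Define, for $1\le k\le m$,
\[
g(k) := v_1([1..k-1]) - v_1([k+1..m]).
\]
The two conditions are exactly $g(k)\le 0$ and $g(k+1)\ge 0$, where the second is vacuous when $k=m$ since then $B=\emptyset$. Since $g(k+1)-g(k)=v_1(k)+v_1(k+1)\ge 0$, the function $g$ is nondecreasing. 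Moreover $g(1)=-v_1([2..m])\le 0$. So I take $k^*$ to be the \emph{largest} $k$ with $g(k)\le 0$. Then either $k^*=m$, or $g(k^*+1)>0$, and in both cases both conditions hold. This yields an \efone allocation.

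\textbf{Parallel implementation.} Compute the prefix sums $P(j)=v_1([1..j])$ for all $j$ by a standard parallel prefix-sum (scan) in $O(\log m)$ depth and $O(m)$ work. Then $g(k)=P(k-1)-(P(m)-P(k))$ is available for all $k$ in $O(1)$ depth and $O(m)$ work. The index $k^*$ is the maximum $k$ with $g(k)\le 0$, obtained by a balanced-tree max-reduction over the indicator values in $O(\log m)$ depth and $O(m)$ work. Finally, $v_2(A)$ and $v_2(B)$ are two sums, computable by another reduction (or a prefix sum of $v_2$), again in $O(\log m)$ depth and $O(m)$ work. Writing the output allocation takes $O(1)$ depth and $O(m)$ work. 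All primitives used are standard and work in the EREW/CREW PRAM models of \Cref{sec:prelims}, and the totals match the claimed bounds.

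\textbf{Main obstacle.} The only real subtlety is the cut condition. The naive choice, the first prefix whose value is at least that of the remaining suffix, does not directly give \efone for agent $1$ in both directions, because the removed good must lie on the correct side of the cut. The fix is the monotone potential $g$, which compares each side with the boundary good removed. The key point to verify is that choosing the largest $k$ with $g(k)\le 0$ simultaneously certifies \efone for agent $1$ in both directions. Once this is checked, everything else is routine scans and reductions.
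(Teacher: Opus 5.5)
Your proof is correct, and it takes a genuinely different route from the paper.

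The paper builds a balanced binary tree over the goods and runs a recursive game. At each node, the agent moving first picks the child subgame that maximizes its own utility gap. An inductive lemma shows that the first mover's gap is nonnegative, and at least the negation of its gap when moving second. This makes the root's first mover envy-free. The other agent is shown to be EF1 by deleting the good at the end of the first mover's path of first choices, then summing over the $\log m$ residual subgames in which the other agent moves first. The outcome is computed by a bottom-up dynamic program over four gap values per node.

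Your contiguous cut-and-choose replaces this machinery with a single threshold crossing of $g$, computed from one prefix sum of $v_1$, and its correctness check is two lines. It also produces connected bundles with respect to any fixed order of the goods. Since the graph-instance corollary uses the two-agent theorem only as a black box, that corollary goes through unchanged with your routine.

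What the paper's game offers in return is cardinality balance and a recursive structure that consults both valuations at every level. Concretely, each agent receives exactly half of the padded goods, which a single cut does not ensure. The depth and work bounds of the two approaches coincide.

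Two minor remarks. First, monotonicity of $g$ is not needed: maximality of $k^*$ together with $g(1)\le 0$ already gives $g(k^*)\le 0$ and $g(k^*+1)>0$. Second, the paper works in the binary fork-join model rather than EREW/CREW PRAM. There, your elementwise ``$O(1)$-depth'' steps cost $O(\log m)$ depth because of forking, which is still within budget.
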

\noindent

\noindent Our algorithm for the two-agent case can be understood as implementing a certain recursive game. An instance of the game comes with an ordering over the agents and, in turn, decomposes into two subgames. In each game, the agent that goes first chooses one subgame in which to go first and one subgame in which to go second. The base case consists of the agent that goes first and chooses their favorite of two goods, leaving the other for the agent that goes second. The key to making this work is carefully choosing and reasoning about the rule by which agents choose the subgame in which they go first---importantly, agents do not just choose the subgame that allows them to maximize the utility they receive.

As a corollary of \Cref{thm:2-agents}, we obtain an equally efficient algorithm for previously studied  ``graph'' instances \cite{christodoulou23fair,afshinmehr25efx,sgouritsa25existence,bhaskar25extending,Misra2025EFGraph}, where each good is positively valued by at most two agents.

\noindent 

Next, we significantly expand the set of instances for which \NC algorithms are known by giving  an \NC  algorithm for \efone allocations for any constant number of agents (as opposed to the algorithm of \cite{garg2023fairly} for $n\leq 3$ agents). More specifically, we show that Fixed-Order Round Robin can be implemented in \NC for any constant number of agents.
\begin{restatable}{theorem}{reachabilityreduction}\label{thm:reachabilityreduction}
    There is an \NC algorithm that, given $O(1)$ agents with additive valuations and any ordering on these agents, outputs the \efone allocation of Round Robin with this ordering.
\end{restatable}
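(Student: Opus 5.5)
The plan is to simulate Round Robin as a walk in a polynomial-size deterministic state graph, and then to follow that walk in parallel by pointer jumping. Fix the ordering $a_1,\dots,a_n$ with $n=O(1)$. In parallel, each agent $i$ sorts the goods by $v_i$, breaking ties consistently (say by index), to get a list $L_i$. This takes $O(\log m)$ depth and polynomial work. Round Robin can then be seen as each agent repeatedly taking the first good of $L_i$ that is not yet taken.

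The key step is a compact description of the state. Describe a moment of the execution by a tuple $(p_1,\dots,p_n,i)$, where $i$ is the agent whose turn it is and $p_j$ is one plus the position in $L_j$ of the last good agent $j$ picked ($p_j=1$ if $j$ has not picked yet). I claim that the set of taken goods equals the union over $j$ of the prefixes $L_j[1..p_j-1]$.
\begin{itemize}
\item Every good that $j$ picked lies in this prefix, because $j$ picks along $L_j$ in increasing position.
\item Every good in $L_j[1..p_j-1]$ was already taken when $j$ made its last pick, since otherwise $j$ would have preferred it.
\end{itemize}
So the tuple determines the entire remaining set. There are at most $n(m+1)^n=\mathrm{poly}(m)$ tuples.

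Next, define the transition function $f$ on tuples. From $(p_1,\dots,p_n,i)$, agent $i$ moves $p_i$ to one past the first position $q\ge p_i$ such that $L_i[q]$ is not in $\bigcup_{j\neq i}L_j[1..p_j-1]$, and the turn passes to the next agent. If no goods remain, the tuple is fixed. To compute $f$ on every tuple at once:
\begin{itemize}
\item Precompute $\mathrm{pos}_j(g)$, the position of each good $g$ in $L_j$.
\item For each tuple, good $g$ is taken iff $\mathrm{pos}_j(g)<p_j$ for some $j$. This gives $O(m)$ independent constant-size tests per tuple.
\item Find the first available position in $L_i$ with a parallel prefix/minimum computation.
\end{itemize}
The total is $O(\log m)$ depth and polynomial work.

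Round Robin is then the walk $s_0,f(s_0),f^2(s_0),\dots$ from the start tuple $(1,\dots,1,1)$, for $m$ steps. Compute $f^{2^k}$ for $k\le\lceil\log m\rceil$ by repeated squaring; each squaring is a parallel table lookup over all polynomially many tuples. Then obtain $s_t=f^t(s_0)$ for every $t\le m$ in parallel by composing along the binary expansion of $t$. The good picked at step $t$ is read off from the single coordinate that changes between $s_t$ and $s_{t+1}$, namely $L_i[p_i'-1]$ where $p_i'$ is the new value of the coordinate. That good is assigned to $i$. This recovers the exact Round Robin allocation in $O(\log^2 m)$ depth and polynomial work, and that allocation is \efone as noted in the introduction.

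I expect the main obstacle to be the state-compression claim. The proof must handle ties carefully, so that ``favorite remaining good'' coincides with ``first untaken good in $L_i$''; fixing one tie-breaking rule per agent does this. The rest is standard reachability and path-following in a functional graph. The exponent $n$ in the state count is exactly why this approach needs a constant number of agents.
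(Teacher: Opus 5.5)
Your proposal is correct and is essentially the paper's argument. Configurations are tuples of positions in the agents' preference lists (polynomially many when $n=O(1)$, with the allocated set recovered as the union of the prefixes $L_j[1..p_j-1]$). All successors are computed independently in parallel, and the Round Robin execution is read off as the path from the all-ones configuration in the resulting out-degree-one graph; the differences are only cosmetic---you transition per pick with a turn index rather than per full round, and you follow the path by pointer doubling instead of parallel tree contraction---and your two-sided proof that the union of prefixes equals the taken set usefully spells out a step the paper leaves implicit.
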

\noindent Our approach for the $O(1)$ agents case is roughly as follows. We view Round Robin as a deterministic process where each round corresponds to a particular \emph{configuration} of available goods. Then, we show that, given any configuration, the chosen good of the corresponding round and configuration of the next round of Round Robin can be computed efficiently in parallel. This allows us to construct a directed graph whose nodes are configurations and whose edges encode valid round-to-round transitions. Since Round Robin is deterministic, the configurations reachable from the initial state form a unique directed path. Recovering this path, together with the allocations stored along it, yields the final allocation.

The challenge in instantiating this approach is that, naively, the number of possible configurations can be exponentially large since the number of subsets of goods is exponential. This would yield a graph that is too large for an \NC algorithm. The key trick to overcome this is to observe that it is not possible for any subset of goods to be available in a given round of Round Robin; rather, the goods available in a given round of Round Robin can be uniquely recovered from each agent's favorite remaining good. Thus, we only need to construct a configuration for each possible arrangement of remaining favorite goods of agents and, as long as we have $O(1)$ agents, the number of such configurations is polynomially bounded.


Similar to our application of our two agent algorithm to graph instances, we show that our algorithm for $O(1)$ agents can be used as a subroutine to obtain \NC algorithms for certain sparse ``hypergraph instances'' whose edges correspond to goods. The basic idea of our hypergraph algorithm is to color the goods/edges of the hypergraph and then use \Cref{thm:reachabilityreduction} to allocate the goods of each color sequentially. As a special case of the hypergraph instances we solve, we obtain the following result.
\begin{restatable}{corollary}{bypassHard}\label{thm:bypassHard}
   There is an \NC algorithm that outputs an \efone allocation when valuations are additive, each agent positively values at most $\polylog(m,n)$ goods and each good is positively valued by at most $O(1)$ agents.
\end{restatable}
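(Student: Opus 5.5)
The plan is to treat the instance as a hypergraph $H$ whose vertices are the agents and whose hyperedges are the goods: good $g$ is the hyperedge $\{a : v_a(g) > 0\}$. Goods valued by nobody can be given to an arbitrary agent at the end without creating envy, since no agent values them. By assumption every hyperedge has size at most $r = O(1)$ and every vertex has degree at most $\Delta = \polylog(m,n)$.

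\textbf{Step 1: coloring.} Color the goods so that any two goods sharing an agent get different colors; this is a proper edge coloring of $H$. The line graph of $H$ has maximum degree at most $r\Delta$, so $r\Delta+1 = \polylog(m,n)$ colors suffice. Such a coloring can be found in \NC by standard deterministic $(\text{degree}+1)$-coloring algorithms, or more simply by first computing an $O((r\Delta)^2)$-coloring with Linial-style color reduction in $O(\log^* m)$ rounds. Any polylogarithmic number of colors is enough.

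\textbf{Step 2: sequential phases.} Process the color classes $C_1, C_2, \dots$ one after another. Within a class, every agent positively values at most one good. Each good $g \in C_i$ is handled by running Round Robin, via \Cref{thm:reachabilityreduction}, on the constant-size set of agents $e_g$ with the single good $g$. In effect $g$ goes to the agent of $e_g$ that comes first in a suitable ordering.

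\textbf{Step 3: maintaining \efone, the main obstacle.} A fixed ordering could let the same agent keep winning and create envy beyond one good. So, following the envy-graph intuition, in phase $i$ I give $g$ to an agent of $e_g$ that currently envies no other member of $e_g$. Envy only matters between agents of a common hyperedge: agent $a$ has zero value for every good held by an agent sharing no hyperedge with $a$, since any good $a$ values lies in a hyperedge containing $a$. The invariant is that the allocation restricted to each agent's valued goods is \efone. Since the goods of one phase touch pairwise disjoint agent sets, each agent gains at most one good per phase, so each phase resembles one round of a local envy-cycle or Round Robin step.

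Making this rigorous is the delicate part. If the agents of $e_g$ have a cyclic envy pattern, I will call \Cref{thm:reachabilityreduction} on the whole constant-size subinstance consisting of $e_g$ and their past goods to rebalance it. Alternatively, I will use a global ordering of agents that rotates between phases, so that each pair of agents alternates priority in a Round-Robin-like way. The crux is checking that each agent $a$'s envy toward a neighbor $b$ is bounded by the single most valuable good $a$ sees in $b$'s bundle. I would first try the rotating-priority argument: if $b$ ever takes a good $a$ values, then $a$ gets the next priority over $b$.

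\textbf{Complexity.} There are $\polylog$ phases. Each phase is a parallel collection of constant-size calls to \Cref{thm:reachabilityreduction}, each in \NC. The total depth is therefore $\polylog \cdot \polylog$ and the total work is polynomial, so the whole algorithm is in \NC.
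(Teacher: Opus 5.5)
Your coloring, the observation that within a class each agent values at most one good and the goods touch pairwise disjoint agent sets, and the complexity count are all fine. The gap is Step~3: it carries the entire content of the corollary, you leave it open, and each rule you propose fails.

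(i) As stated, giving $g$ to an agent of $e_g$ that \emph{envies no one} in $e_g$ is the wrong direction. Take two agents and two goods each worth $1$ to both; they share agents, so they lie in different classes. The first good goes to some agent $a$. Now $b$ envies $a$ and $a$ envies nobody, so your rule gives the second good to $a$ as well, and $b$ envies $a$ by two goods.

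(ii) The rotating priority ignores values. Take two agents and goods $g_1,\dots,g_{2t}$ all valued by both, indexed by the order in which their classes are processed. Your rule just alternates recipients. Calling $a$ the agent that gets $g_1$, set $v_a(g_{2s-1})=\varepsilon$ and $v_a(g_{2s})=1$; then $a$ envies $b$ by about $t$ goods. Within a phase an agent either receives its unique good or not, so there is no ``pick your favorite'' step to compensate. The choice must depend on the current envy relation.

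(iii) Re-running Round Robin on $e_g$ and their past goods reshuffles goods that outsiders value. Consider goods $(a_1,a_3,b,c)$ with $v_i=(1,1,3,2)$, $v_j=(3,3,1,2)$, and $v_k$ equal to $1$ on $a_1,a_3$ and $0$ on $b,c$. The allocation $A_i=\{a_1,c\}$, $A_j=\{a_3,b\}$, $A_k=\emptyset$ is \efone, and $i,j$ envy each other. Round Robin on $\{i,j\}$ with these four goods, in either order, gives $j$ both $a_1$ and $a_3$, so $k$ now envies $j$ by two goods.

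The missing idea, which the paper uses, is local envy-cycle elimination followed by giving priority to a \emph{source}. Before allocating $g$, rotate bundles along cycles of the envy graph induced on $e_g$, moving bundles only among agents of $e_g$. This preserves \efone for every pair. An agent off the cycle keeps its bundle and faces the same collection of other bundles. Agents on the cycle strictly improve, so none envies the new holder of its former bundle. Each rotation strictly decreases the number of envy edges among agents of $e_g$, so $O(r^2)$ rotations suffice.

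Then give $g$ to an agent of $e_g$ whom no member of $e_g$ envies. Every $k\in e_g$ then envies the recipient up to at most the single good $g$. Agents outside $e_g$ value $g$ at $0$, and disjointness of agent sets within a class keeps the parallel updates independent.

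The paper implements this slightly differently. It groups goods with identical agent sets into $M_e$ and colors the line graph on distinct hyperedges. It then runs the constant-agent Round Robin on $M_e$ in a topological order of the acyclic induced envy graph (\Cref{thm:hypergraph_nc}), checking \efone of $A\cup B$ pair by pair. With the source rule, your one-good-per-phase version is a valid and slightly more elementary variant that does not even need \Cref{thm:reachabilityreduction}. It may use more phases than the grouped version, which is harmless since each agent values only $\polylog(m,n)$ goods.
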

\noindent  Notably, this bypasses the aforementioned \CC-hardness of \cite{garg2023fairly} for simulating Round Robin (which holds when each agent values at most $3$ goods and each good is valued by at most $3$ agents). Thus, crucially, while \Cref{thm:reachabilityreduction} simulates Round Robin, when we use \Cref{thm:reachabilityreduction} as a subroutine for our sparse hypergraph instances we end up with a new algorithm which does not simulate Round Robin.

The above result shows that efficient parallel algorithms are possible whenever the number of agents is small. Our next result shows that efficient parallel algorithms are possible if the number of agents is large. In particular, we give depth $\tilde{O}(m/n)$ algorithms with polynomial work as described below.
\begin{restatable}{theorem}{mnsmall}\label{thm:m/n-small}
    There is a randomized algorithm that runs in depth $O\left(\frac{m\log^3 m}{n}\right)$ and work $O(m^{5.5}\log^3 m)$ and computes an \efone allocation with high probability. 
\end{restatable}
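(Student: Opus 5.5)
The plan is to run ``batched'' Round Robin in $\lceil m/n\rceil$ phases. In each phase every agent receives exactly one good, or, in the last phase, every remaining good is given to a distinct agent. Each phase is computed by a single randomized parallel matching computation of polylogarithmic depth. This gives depth $O(\frac{m}{n}\cdot \mathrm{polylog}(m))$ directly.

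\textbf{The phase invariant.} In phase $t$, let $R_t$ be the set of remaining goods. We compute an injective assignment $g_t(a)\in R_t$ of agents to goods that saturates the agents when $|R_t|\ge n$, and saturates $R_t$ otherwise. It must satisfy the following: for every agent $a$ and every good $h\in R_t$ left unassigned in this phase, $v_a(g_t(a))\ge v_a(h)$.

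\textbf{Why the invariant gives \efone.} Fix agents $a,b$. For every $t\ge 1$, the good $g_{t+1}(b)$ lies in $R_{t+1}$, so it was unassigned in phase $t$. Hence $v_a(g_t(a))\ge v_a(g_{t+1}(b))$. Summing over $t$ and using that $a$ is served in every phase in which $b$ is served after phase $1$ (in the final partial phase we match goods to the agents, so care is needed here), we get
\[
v_a(A_a)\ \ge\ v_a\bigl(A_b\setminus\{g_1(b)\}\bigr).
\]
For the final phase, I would handle the issue by only requiring the inequality pairwise for $a$ served in phase $t$. Indeed $a$ is served in all phases $1,\dots,T-1$, and $b$'s goods from phases $2,\dots,T$ are charged to $a$'s goods from phases $1,\dots,T-1$. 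So the final phase causes no problem.

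\textbf{Existence and computation of the assignment.} Break ties in values by good index, so each agent $a$ has a strict ranking $r_a(\cdot)$ of $R_t$. Consider a minimum-total-rank matching that saturates the smaller side, in the bipartite graph between agents and $R_t$ with cost $r_a(g)$. If some unassigned $h$ had $r_a(h)<r_a(g_t(a))$, then reassigning $a$ to $h$ keeps the matching saturating and strictly lowers the cost, a contradiction. So any such matching satisfies the invariant, and existence is immediate (sequential Round Robin within a phase also shows this).

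Moreover, some optimal matching uses only each agent's top $n$ goods of $R_t$. An agent matched outside its top $n$ could be moved to an unassigned good within its top $n$, since at most $n-1$ of those are taken. So we can prune the graph to $O(n^2)$ edges, and all costs are at most $n$. This pruning is done by a parallel sort or selection per agent, in $O(\log m)$ depth.

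We then find a minimum-weight maximum matching with polynomially bounded weights using the Mulmuley--Vazirani--Vazirani isolation lemma. Random weights are scaled on top of the rank costs. The algorithm computes determinants or inverses of the Tutte/Edmonds matrix with entries $2^{w}$ in $O(\log^2 m)$ to $O(\log^3 m)$ depth, and succeeds with probability $1-1/\mathrm{poly}(m)$. A union bound over $\le m$ phases gives high probability overall.

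\textbf{Accounting and the main obstacle.} The depth is $O(\log^3 m)$ per phase times $m/n$ phases, plus the $O(\log m)$ pruning. The work per phase is dominated by the matrix computations on matrices of dimension $O(n)$ (or $O(m)$ without pruning) with $O(\log m)$-bit-scaled entries. Summed over phases, this should fit within $O(m^{5.5}\log^3 m)$.

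I expect the main obstacle to be exactly this bookkeeping. It requires choosing the matrix size (pruned vs.\ unpruned) and the bit-lengths of the isolating weights so that the work bound matches the stated $m^{5.5}\log^3 m$. It also requires correctly handling the partial last phase and tie-breaking so that the exchange argument stays strict.
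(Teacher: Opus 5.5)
Your proposal is correct and follows essentially the paper's route: $m/n$ rounds of rank-weighted maximum matchings on the remaining goods, an exchange argument showing each agent's matched good is at least as good as every good left over, and then the Round-Robin charging of $g_{t+1}(b)$ to $g_t(a)$, which is in fact a cleaner justification of \efone than the paper's contradiction argument. The work bookkeeping you flag is settled as in the paper by invoking \Cref{thm:matching} as a black box in each round (at most $mn$ edges, at most $m+n$ vertices, $O(\log m)$ repetitions, $m/n$ rounds), so the pruning step is unnecessary---and note that pruning would still leave up to $\min(m,n^2)$ goods, not $O(n)$.
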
 
\noindent The above result is based on repeatedly solving certain instances of maximum matching which, in turn, is known to admit \RNC algorithms \cite{Karp1986Constructing}. Similar to our application of \Cref{thm:reachabilityreduction} to obtain \Cref{thm:bypassHard},  we can apply \Cref{thm:m/n-small} to obtain the below corollary.
\begin{restatable}{corollary}{bypassHardAgain}\label{thm:bypassHardAgain}
   There is an \RNC algorithm that outputs an \efone allocation when valuations are additive, each agent positively values at most $\polylog(m,n)$ goods.
\end{restatable}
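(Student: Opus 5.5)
Corollary~\ref{thm:bypassHardAgain} will follow by combining \Cref{thm:m/n-small} with a preprocessing step that makes the number of goods small relative to the number of agents. Let $d = \polylog(m,n)$ bound the number of goods any agent positively values.

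\textbf{Step 1: discard worthless goods.} Call a good \emph{dead} if no agent values it positively. Dead goods never affect \efone, since removing them from a bundle changes no agent's value for any bundle. So we allocate them arbitrarily at the end, say all to agent $1$. This takes $O(1)$ depth and linear work. We also discard agents that value no good; they can receive nothing and still envy no one.

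\textbf{Step 2: decompose into independent components.} Form the bipartite agent--good graph $H$ whose edges are the pairs $(a,g)$ with $v_a(g)>0$. Compute its connected components in \NC using standard parallel connectivity. Goods in different components are valued by disjoint agent sets, and every agent values only goods in its own component. Hence, if each component receives an \efone allocation of its goods among its agents, the union is \efone globally. Indeed, agent $a$ assigns value zero to every bundle portion coming from another component, so envy toward such goods is vacuous. Within a component with $n_C$ agents, each agent values at most $d$ goods, and every good is valued by some agent in the component. Therefore the number of goods satisfies $m_C \le d\, n_C$.

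\textbf{Step 3: apply \Cref{thm:m/n-small} in parallel.} Run the algorithm of \Cref{thm:m/n-small} on all components simultaneously. The depth on component $C$ is
\[
O\!\left(\frac{m_C \log^3 m_C}{n_C}\right) = O(d \log^3 m) = \polylog(m,n),
\]
and the total work is $\sum_C O(m_C^{5.5}\log^3 m_C) = \mathrm{poly}(m)$. Each run succeeds with high probability in $m_C$. Small components therefore need care when taking a union bound. The main obstacle I expect is exactly this: making the failure probability small in the global size $m$ rather than in $m_C$.

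I would try two fixes, in this order. First, run $O(\log m)$ independent repetitions per component in parallel. Checking \efone is easy in parallel, so we can verify each output and keep any valid one. Second, components with $m_C \le \polylog(m)$ can be handled deterministically by \Cref{thm:reachabilityreduction}-free means, namely sequential Round Robin in $O(m_C)=\polylog$ depth. Either way, the overall procedure is \RNC and outputs an \efone allocation with high probability.
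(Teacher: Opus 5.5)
Your proof goes through with the first of your two repairs, but it takes a detour the paper does not need.

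\textbf{The paper's route.} The paper's proof is your Step~1 followed by a single application of \Cref{thm:m/n-small} to the whole instance. Once worthless goods are set aside, every remaining good is positively valued by some agent, and each agent positively values at most $d=\polylog(m,n)$ goods. So $m\le dn$ globally. (The paper writes that $m$ is at most $\polylog(m,n)$; the intended and needed statement is $m/n\le\polylog(m,n)$.) Hence the depth $O(\frac{m}{n}\log^3 m)$ is already polylogarithmic, and the success probability is high in the size of the global instance.

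\textbf{Why Step~2 is unnecessary.} Your bound $m_C\le d\,n_C$, summed over components, gives exactly this global bound. So Step~2 buys nothing in depth; indeed $\max_C m_C/n_C\ge m/n$, so the global application is at least as good. Splitting into components is precisely what creates the union-bound obstacle you then have to repair. Discarding agents with no positively valued goods is likewise unnecessary, since keeping them only increases $n$.

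\textbf{Your first repair works.} Run $\Theta(\log m)$ independent copies per component and keep one that passes the \NC \efone check. This makes each component fail with probability $m^{-\Theta(1)}$, provided each copy succeeds with probability bounded below by a constant independent of the component's size. That proviso holds: single-agent components are trivial, and for $n_C\ge 2$ the guarantee of \Cref{thm:m/n-small} gives it. Alternatively, you can boost each matching call $\Theta(\log m)$ times using the global $m$ and skip verification. A union bound over at most $m$ components then finishes.

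\textbf{Your second repair does not work on its own.} Handling components with $m_C\le\polylog(m)$ by sequential Round Robin leaves components with $m_C$ just above the threshold. Each of these still fails with probability only $1/\mathrm{poly}(m_C)$, which is $1/\polylog(m)$. There can be $m/\polylog(m)$ such components, so the union bound still fails. Your ``either way'' is therefore too strong.
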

\noindent Notably this also bypasses the \CC-hardness of \cite{garg2023fairly} while only assuming a bound on the number of goods each agent positively values; however, it comes at the cost of using randomization.

We also complement the \CC-hardness of \cite{garg2023fairly} by showing that simulating Round Robin---in particular, Fixed-Order Round Robin where the ordering on agents is given as input---is not just hard for the class \CC but also complete for it.

\begin{restatable}{theorem}{cccomplete}\label{thm:forr-cc}
    Fixed-Order Round Robin is \CC-Complete.
\end{restatable}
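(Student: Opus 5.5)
Hardness is already known: \cite{garg2023fairly} showed Fixed-Order Round Robin is \CC-hard. So the content of the theorem is membership in \CC, and I will prove it by a logspace (or \NC) reduction from the decision version of Fixed-Order Round Robin to the Comparator Circuit Value Problem (CCV). The decision version I have in mind is: given an instance, an agent order, a good $g$ and an agent $a$, is $g$ allocated to $a$? The strategy is to encode Round Robin as a \emph{stable-matching-like} or \emph{sorting-network-like} process. For this I will use the standard characterization that \CC is exactly the class of problems logspace-reducible to CCV, and equivalently to lexicographically-first stable matching. Of these, the latter seems the most natural target. Round Robin resembles a serial dictatorship in which each agent's picks are interleaved with everyone else's.

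The first step is to recast Round Robin as a one-sided matching with repeated agent copies. Write $k=\lceil m/n\rceil$ and replace each agent $a$ by copies $a^{(1)},\dots,a^{(k)}$. Order all copies by $(\text{round},\text{position in order})$, and give every copy of $a$ the preference list of goods sorted by $v_a$, with ties broken by index. Under this encoding, Round Robin is exactly serial dictatorship on the copies in that order.

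The second step is to turn serial dictatorship into a stable matching. Define each good's preference over copies to be the global picking order. I then claim that the unique stable matching of this two-sided market equals the serial dictatorship outcome. The argument is induction on the picking order: the first copy gets its favorite good, since otherwise that pair blocks, and so on down the order. Because stable matching is in \CC (Subramanian; Mayr--Subramanian), deciding whether $g$ is matched to some copy of $a$ is in \CC. The construction itself is clearly logspace computable, since it only needs to sort values and write down indices. If I prefer to use CCV directly, the alternative is to simulate serial dictatorship with comparator gates on wires indexed by goods, where a copy ``taking'' its best available good is realized by a cascade of comparators along its preference list. This is the classical reduction of lexicographically first maximal matching to CCV.

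I expect the main obstacle to be the second step, namely making sure that the reduction really lands in \CC. Uniqueness and equality with serial dictatorship under a common priority order is standard, but the stable-matching-in-\CC results apply to the specific lexicographically-first or man-optimal matching, not to an arbitrary stable matching. Here, however, a common priority order makes the stable matching unique, so whichever stable matching the \CC algorithm outputs is the correct one. I also need to double-check that tie-breaking is handled consistently with the problem definition, and that the reduction is logspace rather than merely \NC, as \CC-membership requires. If I have to fall back on CCV directly, the difficulty becomes the gate-level encoding of the ``first available good on my list'' selection using only fanout-free comparators. I would handle this with the standard gadget from the lexicographically-first-maximal-matching $\le$ CCV reduction.
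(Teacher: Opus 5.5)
Your proposal matches the paper's proof. Hardness is taken from Garg and Psomas. Membership comes from the same reduction to stable matching: $m/n$ copies of each agent, every good ranking the copies by the common order $\sigma$ repeated $m/n$ times, and the observation that a common priority list on one side forces a unique stable matching, namely the serial dictatorship outcome, which is exactly Fixed-Order Round Robin. The only cosmetic difference is that the paper pads with valueless goods so that $m/n$ is an integer, giving two sides of size $m$ with complete lists. You would want the same padding, with dummies ranked last, in place of your $\lceil m/n\rceil$ copies, so the reduction lands in the standard complete-preference stable matching problem.
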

\noindent The above result is based on a log-space reduction to the stable matching problem.


We next study what non-trivial parallel algorithms are possible if we relax beyond \efone. In particular, we study envy-free up to $k$ (\efk{\left(k\right)}) allocations for $k > 1$. First, we show that randomized \NC (\RNC) algorithms are possible if we allow for $k \approx \sqrt{m}$.
\begin{restatable}{theorem}{efrootm}\label{thm:efrootm} There is an \RNC algorithm with depth $O(\log n)$ and work $O(m)$ for additive valuations that outputs an \efk{\left(\tilde{O}\left(\sqrt{m/n}\right)\right)} allocation with high probability.\footnote{With high probability means at least $1- \frac{1}{n^c}$ for any constant $c$. Since \emph{verifying} an \efone allocation is in \NC \cite[Theorem 2]{garg2023fairly}, using standard approaches these \RNC algorithms can be adapted to Las Vegas algorithms that have only expected depth and work bounds, but always output \efone allocations.}
\end{restatable}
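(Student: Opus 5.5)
We will use the simplest randomized allocation: every good is assigned independently and uniformly at random to one of the $n$ agents. Assigning each good to a uniformly random agent takes $O(1)$ depth and $O(m)$ work, so this matches the stated bounds. We could also sample from $n$ in $O(\log n)$ depth if random bits must be assembled. What remains is to show the allocation is \efk{\left(\tilde{O}\left(\sqrt{m/n}\right)\right)} with high probability.

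Fix agents $a \neq b$ and define $X_g = v_a(g)(\mathbf{1}[g\to b]-\mathbf{1}[g \to a])$. Then $E[\sum_g X_g]=0$, and the envy of $a$ toward $b$ is $\sum_g X_g$. Normalize so that the values of $a$ are sorted, $v_a(g_1)\ge v_a(g_2)\ge\dots$. We need to show that, after removing the $k$ goods in $b$'s bundle that $a$ values most, $a$ no longer envies $b$.

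The key step is to split the goods into a head $H$ (the top $t$ goods for $a$) and a tail $T$ (the rest).

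\emph{Head.} The number of head goods landing in $b$'s bundle is $\mathrm{Bin}(t,1/n)$. With high probability this is at most $t/n + O(\sqrt{(t/n)\log n} + \log n)$. Removing those goods from $b$'s bundle costs at most that many removals.

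\emph{Tail.} Every tail good has value at most $v_a(g_t)\le v_a(H)/t$. Apply Bernstein or Hoeffding to the tail sum. Only goods that land with $a$ or $b$ contribute, and each does so with probability $2/n$. This gives, with high probability,
\[ v_a(T\cap A_b) - v_a(T\cap A_a) \le O\!\left(\sqrt{\tfrac{\log n}{n}\textstyle\sum_{g\in T} v_a(g)^2}+ v_{\max}\log n\right). \]
Bounding $\sum_{g\in T} v_a(g)^2 \le v_a(g_t)\,v_a(T)$, this deviation is at most $\tilde O(1)\cdot v_a(g_t)\sqrt{m/n}$, up to the relative sizes involved.

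Each unit of tail deviation of size $v_a(g_t)$ can be absorbed by removing one more good of value at least $v_a(g_t)$ from $b$'s bundle. We argue this through a deterministic charging step: the head goods in $b$'s bundle exceed those in $a$'s bundle by few, and each removed head good is worth at least $v_a(g_t)$.

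The main obstacle is making this charging work cleanly for every instance. For that reason I would avoid a fixed $t$ and instead use a threshold or dyadic bucketing argument. Group the goods into value classes $[2^{-j-1},2^{-j})\cdot v_{\max}$. Within each class, the count difference between $b$'s bundle and $a$'s bundle is $O(\sqrt{(m_j/n)\log n}+\log n)$ with high probability. Goods below $v_{\max}/m$ contribute negligibly and can be handled by a single additional removal.

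Summing the per-class count differences over classes with weights $2^{-j}$ then gives the envy bound. In each class where $b$ holds more goods than $a$, removing the excess goods of $b$ from that class cancels the class's contribution, up to a factor-2 loss. I would handle this loss by ordering removals greedily from the highest class downward and comparing against $a$'s surplus in higher classes.

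The total number of removals is $\sum_j \tilde O(\sqrt{m_j/n})$. With $O(\log m)$ classes and Cauchy–Schwarz, this is $\tilde O(\sqrt{m/n})$ plus $\mathrm{polylog}$ terms. A union bound over all $n^2$ ordered pairs $(a,b)$ and all classes completes the proof.
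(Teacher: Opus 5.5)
There is a genuine gap in your analysis, though not in your algorithm. Sending each good independently to a uniformly random agent is a legitimate alternative to the paper's scheme, but the dyadic argument you settle on does not go through. Inside a class $C_j$ of values in $[2^{-j-1},2^{-j})\,v_{\max}$, equalizing the counts of $A_a$ and $A_b$ only yields $v_a(A_b\cap C_j\setminus S_j)\le 2\,v_a(A_a\cap C_j)$. So the leftover envy from that class can be nearly $|A_a\cap C_j|\cdot 2^{-j-1}v_{\max}$, roughly $m_j/(2n)$ goods' worth. That is linear, not square-root, in $m_j/n$.

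Your greedy repair cannot pay for this loss. The surplus of $a$ in higher classes is itself a mean-zero fluctuation that may be zero or negative. If all of $a$'s values lie in a single dyadic class, there are no higher classes at all. The underlying problem is that per-class counts constrain $N_B(t)-N_A(t)$ (the number of $b$'s minus $a$'s goods among $a$'s top $t$) only at the $O(\log m)$ class boundaries. Given only that information, $b$'s goods could sit at the top of every class and $a$'s at the bottom.

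Your head/tail sketch stalls for the same reason. Head counts say nothing about head values. Also, removing further goods from $A_b$ is the wrong way to absorb the tail deviation; $a$'s own head goods should pay for it. The ``single additional removal'' for goods below $v_{\max}/m$ is also unjustified, since $A_b$ need not contain a good worth $v_{\max}$.

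The missing idea is to control \emph{every} prefix of $a$'s preference order. Sort so that $v_a(g_1)\ge\dots\ge v_a(g_m)$, set $v_a(g_{m+1}):=0$, and let $S$ be the $k$ goods of $A_b$ that $a$ values most. Abel summation gives
\[ v_a(A_a)-v_a(A_b\setminus S)=\sum_{t=1}^{m}\bigl(v_a(g_t)-v_a(g_{t+1})\bigr)\bigl(N_A(t)-\max\{0,N_B(t)-k\}\bigr), \]
so it suffices that $N_B(t)-N_A(t)\le k$ for every $t$. Each such difference is a sum of $t$ independent steps in $\{-1,0,1\}$ with $\Pr[+1]=\Pr[-1]=1/n$. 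Bernstein plus a union bound over $t\le m$ and the $n^2$ pairs then gives $k=O(\sqrt{(m/n)\log(nm)}+\log(nm))=\tilde O(\sqrt{m/n})$, with no buckets or Cauchy--Schwarz needed.

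This is the structure of the paper's proof, with a slightly different algorithm. The paper splits the goods into $m/n$ blocks of $n$ and assigns each block by an independent uniform bijection, so for a pair $(a,b)$ each block contributes one fair $\pm1$ step. The blocks are sorted by $v_a(g^i)-v_a(h^i)$. A walk that never drops more than $K$ below its start lets every losing block beyond the first $K$ be charged to an earlier winning block. The block structure buys exactly $m/n$ goods per agent and removes your additive $\log(nm)$ term.

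Your head/tail route can also be saved. Take $t=\Theta\bigl(n(\sqrt{(m/n)\log(nm)}+\log(nm))\bigr)$ and remove all of $H\cap A_b$. Then $a$ holds about $t/n$ head goods, each worth at least $v_a(g_t)$, which covers the Bernstein tail deviation $v_a(g_t)\cdot O(\sqrt{(m/n)\log(nm)}+\log(nm))$.
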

\noindent Our algorithm is based on exploiting concentration bounds for random walks. In particular, we randomly allocate our goods. In order to argue agent $a$ envies $b$ up to at most $\approx \sqrt{m}$ goods, we fix all randomness except for that corresponding to $a$ and $b$. We then use the remaining randomness to construct an appropriate random walk in which, as long as the walk does not drift more than $\approx \sqrt{m}$ from the origin, we know $a$ envies $b$ by at most $\approx \sqrt{m}$ goods.

Strengthening our fairness guarantee while weakening our runtime guarantee, we show that better-than-sequential---namely, sublinear depth---algorithms are possible for $k = m^\eps$ for any constant $\eps > 0$.
\begin{restatable}{theorem}{efm}\label{thm:efm}
    For any $\eps>0$, there is an $\Tilde{O}\left(m^{1-\eps}\right)$ depth and $O\left((nm)^{5.5}\log^3(nm)\right)$ work  randomized algorithm for additive valuations that outputs an \efk{\left(m^{\eps}/n\right)} allocation with high probability.
\end{restatable}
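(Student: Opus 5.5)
Our plan is to reduce to \Cref{thm:m/n-small} by grouping goods into ``super-goods'' and then to lose only a bounded number of goods per super-good when we translate back. The key observation is that \Cref{thm:m/n-small} has depth $\tilde O(m/n)$, so it is fast when there are few goods per agent. We therefore want an instance with roughly $m^{1-\eps}$ goods per agent. A way to manufacture many agents cheaply is to create virtual agents. Concretely, we split each real agent $a$ into $t$ copies $a^{(1)},\dots,a^{(t)}$, each with valuation $v_a$, giving $N=nt$ agents. We then run \Cref{thm:m/n-small} on the $N$ copies and the $m$ goods, and give each real agent the union of its copies' bundles. This runs in depth $O\bigl(\frac{m\log^3 m}{nt}\bigr)$, and the work bound $O((nm)^{5.5}\log^3(nm))$ suggests the intended instance has size about $nm$, which is consistent with $t\le m$.

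Choose $t$ so that $m/(nt)\approx m^{1-\eps}$, i.e.\ $t\approx m^{\eps}/n$ (when this is at least $1$; otherwise take $t=1$ and the claim is trivial, since $k<1$ is vacuous or we just run \Cref{thm:m/n-small} directly). Fairness then follows from \efone among copies. Fix real agents $a,b$. For each copy $a^{(i)}$ and each copy $b^{(j)}$, \efone gives $v_a(A_{a^{(i)}})\ge v_a(A_{b^{(j)}})-v_a(g_{ij})$ for some $g_{ij}\in A_{b^{(j)}}$. We sum over $i$ for a fixed $j$, and then pair copies up cleverly. A cleaner route is to order the copies: sum over $i=1..t$ with $j=i$, which gives
\[
v_a(A_a)=\sum_i v_a(A_{a^{(i)}})\ \ge\ \sum_j v_a(A_{b^{(j)}}) - \sum_j v_a(g_j)=v_a(A_b)-v_a(\{g_1,\dots,g_t\}),
\]
where $g_j\in A_{b^{(j)}}$ are distinct goods of $A_b$. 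Removing these $t$ goods from $A_b$ eliminates the envy, so the allocation is \efk{(t)}, i.e.\ \efk{(m^{\eps}/n)}.

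The main obstacle is making the parameters match exactly. The depth should come out as $\tilde O(m^{1-\eps})$: we get $m\log^3 m/(nt)=\tilde O(m^{1-\eps})$ as desired. The work is polynomial in $(nt)$ and $m$; since \Cref{thm:m/n-small} states work $O(m^{5.5}\log^3 m)$ in terms of $m$, possibly with hidden dependence on $N=nt\le nm$, we would check that the bound becomes $O((nm)^{5.5}\log^3(nm))$ by padding the instance with dummy zero-valued goods if needed (at most $nm$ goods total). We also need to handle the case $m<N$, where many copies get nothing; \efone is still fine there. Finally, the failure probability of \Cref{thm:m/n-small} carries over directly, so the result holds with high probability.
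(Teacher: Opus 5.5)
Your argument is correct, but it decomposes the instance differently from the paper. The paper keeps the $n$ real agents and partitions the goods into $m/(nk)$ piles of $nk$ goods each. It runs \Cref{thm:m/n-small} independently on every pile and loses at most one good per pile, so $k=\Theta(m^{1-\varepsilon})$ gives $\mathsf{EF}(m^{\varepsilon}/n)$ with depth $\tilde O(k)$. You instead keep all $m$ goods in one instance and replicate each agent $t\approx m^{\varepsilon}/n$ times, losing at most one good per paired copy $(a^{(i)},b^{(i)})$. Your summation is sound because the witnesses $g_i$ lie in the disjoint bundles $A_{b^{(i)}}$.

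Your route is a single black-box call, so it needs no union bound over parallel calls. Since $N=nt\le m^{\varepsilon}\le m$, the assumption $m\ge N$ used in the proof of \Cref{thm:m/n-small} holds automatically, so the case $m<N$ you worry about never arises. The work computed there is $O(m^2(m+N)^{3.5}\log^3 m)=O(m^{5.5}\log^3 m)$ directly, with no padding argument needed. The paper's route runs the expensive matching subroutine only on instances with $nk$ goods, for total work $\frac{m}{nk}\cdot O((nk)^{5.5}\log^3(nk))=O(m(nk)^{4.5}\log^3 m)$. That is smaller than yours when $nm^{1-\varepsilon}\ll m$; both are within the stated bound.

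Two small fixes. First, take $t=\lfloor m^{\varepsilon}/n\rfloor$, which still gives $nt\ge m^{\varepsilon}/2$ and hence depth $\tilde O(m^{1-\varepsilon})$. Second, when $m^{\varepsilon}<n$ the guarantee is not vacuous; read literally it would demand envy-freeness. The right fallback is your second option: run \Cref{thm:m/n-small} directly, which yields \efone in depth $\tilde O(m/n)\le\tilde O(m^{1-\varepsilon})$. The paper's proof also implicitly assumes $n\le m^{\varepsilon}$, through its requirement $k\in[m/n]$.
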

\noindent \Cref{thm:efm} follows by partitioning our goods into $\approx m^{\eps}$ piles and running the algorithm of \Cref{thm:m/n-small} on each pile: each agent envies each other agent by at most one good per pile, giving an \efk{\left(k\right)} allocation for $k \approx m^{\eps}$.



\subsection{Related Work}
\label{sec:relwork}
As described above, the most closely related work is that of \cite{garg2023fairly}, who initiated the study of fair division in parallel models of computation. 
In addition to the $n=2$ and $n=3$ agent results and \CC-hardness of Fixed-Order Round Robin described above, they also give \NC algorithms for \efone when agents' additive valuations collectively take on $O(1)$ distinct values, and for recognizing \efone and \efx allocations.
Finally, they consider the problem of computing an \ef allocation of $m$ goods to $n$ additive agents with \emph{minimum subsidy}, meaning a combined allocation of goods and payments that is \ef and uses the minimum total payment possible.

Beyond \cite{garg2023fairly}, the most closely related lines of work are in parallel algorithms and algorithmic fair division and allocation. We briefly survey both lines.

\paragraph*{Parallel Algorithms.}
As mentioned above, the complexity classes \NC and \RNC are problems in \PTime that efficiently parallelize; see \Cref{defn:nc} for the formal definition. We introduce several problems in \NC that will be of use to us in \Cref{sec:prelims}. By contrast, a problem is \PTime-complete if it is in \PTime and any problem in \PTime is reducible to it via a many-one logspace reduction; i.e. any problem in \PTime can be transformed into it in a highly parallel way. \PTime-completeness is widely thought to rule out \NC and \RNC algorithms \cite{greenlaw1995limits}. \CC is a sort of intermediate complexity class: $\CC \subseteq \PTime$, but it is open whether $\CC \subseteq \NC$ or $\NC \subseteq \CC$.
With that said, $\CC$-hardness is generally thought to rule out \NC algorithms \cite{subramanian1989new}; see \Cref{defn:cc} for the formal definition.

Beyond \efone allocations, other foundational problems and algorithms in social choice have been studied in the context of parallel algorithms.
Indeed, attempts to characterize the stable matching problem motivated the definition of \CC, for which it is complete \cite{subramanian1989new,mayr92complexity}.
More recently, properties of matchings under one-sided preferences were studied in the parallel setting by \cite{zheng19parallel,hu20nc}.  Likewise, parallel algorithms have been studied for several voting rules: early works showed that deciding winners of several prominent voting rules \cite{csar17winner,csar18computing} and membership in choice sets \cite{brandt09computational} are in \NC, but that others, including the single transferrable vote (STV), are \PTime-complete.
Multiwinner voting was also recently studied by \cite{fitzsimmons2025parallelizability}, who establish that prominent approval-based committee rules are \PTime-hard, but the problem is more tractable on restricted domains.



\paragraph*{Fair Allocation.}
As mentioned above, both Round Robin for additive valuations and Envy-Cycle Elimination for more general valuations establish that \efone allocations can be found in polynomial time.
Fine-grained questions about the difficulty of computing \efone allocations have received more recent attention. 
The \defn{query complexity} of computing \efone allocations, as measured by calls to agents' value functions, was shown to be $\Theta(\log m)$ for $n=2$ and $n=3$ agents with additive utilities \cite{oh21fairly}, and \cite{garg2023fairly} leverage this to show \efone is in \NC for 2 and 3 agents.
Also building on \cite{oh21fairly}, \cite{bu24logarithmic} show the \emph{comparison-based} query complexity of \efone allocations is $O(\log m)$ for $2$ and $3$ agents, but are able to extend this bound to any constant number of agents only for identical values.
The \defn{communication complexity} of finding \efone allocations was also recently shown to be $O(m \log m)$ by \cite{feige25low}.

There is also an active line of work on questions of whether \efone allocations can be provided in conjunction with other guarantees; most prominently the efficiency notion of Pareto Optimality (\po), which stipulates that no other allocation is preferred by all agents.
The existence of \efonepo allocations for additive valuations was established by \cite{caragiannis19unreasonable} via the allocation which maximizes Nash welfare, but finding this allocation is \APX-hard \cite{lee2017apxnsw}. 
More recently, constructing such \efonepo allocations was shown to be in pseudo-polynomial time \cite{barman18finding}, and in polynomial time for a constant number of agents \cite{mahara2025polynomial}. 

A distinct line of work focuses on fair allocations in structurally restricted settings, particularly in pursuit of the stronger fair allocation standard of \defn{envy-free up to any good} (\efx) \cite{caragiannis19unreasonable}. 
Such allocations have been shown to exist for all $2$- \cite{plaut20almost} and $3$-agent instances \cite{chaudhury24efx}.
Additionally, when valuations are \defn{graph-structured}, meaning that each good (edge) is valued by at most $2$ agents (vertices), \cite{christodoulou23fair} show that \efx allocations exist and can be found in polynomial time.
We present positive results for graph-structured instances in \cref{subsec:graph}, and for their constant-rank hypergraph generalizations in \Cref{subsec:3hypergraph}.
See \cite{amanatidis23fair} for a survey of recent work on related fairness notions in indivisible allocation.



\subsection{Preliminaries}
\label{sec:prelims}
This section introduces the necessary background, notation, and standard results used throughout the paper.

\paragraph*{Parallel Computation Model.}
We analyze the theoretical efficiency of our parallel algorithms in the binary fork-join model~\cite{blelloch2020optimal}, a concrete \defn{work-depth model} used to analyze many recent modern parallel algorithms.
The model is defined in terms of the two complexity measures \defn{work} and \defn{depth}~\cite{blelloch2020optimal, jaja1997introduction,CLRS}.
The \defn{work} is the total number of operations executed by the algorithm, and the \defn{depth} is the length of the longest chain of sequential dependencies.  Computations with work $W$ and depth $D$ can be executed using a randomized work-stealing scheduler in practice in $W/P+O(D)$ time with high probability on $P$ processors~\cite{BL98,ABP01}.
We also note that computations in this model can be cross-simulated in standard variants of the PRAM model with the same work (asymptotically), and losing at most a single logarithmic factor in the depth~\cite{blelloch2020optimal}.
%
\begin{definition}[\NC, \RNC]\label{defn:nc}
A problem is in \NC if it admits a deterministic parallel algorithm with polylogarithmic depth and polynomial work. A problem is in \RNC if it admits a randomized parallel algorithm with polylogarithmic depth and polynomial work that is correct with high probability.
\end{definition}
\begin{definition}[\CC]\label{defn:cc}
A problem is in \CC if it can be computed by a logspace-uniform family of polynomial-size comparator circuits.
Furthermore, a problem is said to be \CC-hard if any problem in \CC can be reduced to it via logspace reductions.
\end{definition}
\noindent
We next state several standard parallel primitives and algorithms that we will use.
\begin{lemma}[Parallel Filter;~\cite{jaja1997introduction}]\label{lemma:parallel_filter}
Given a sequence of $n$ elements and a predicate on these elements, there is a stable parallel filter that returns the subsequence of all elements satisfying the predicate, while preserving their relative order, in $O(\log n)$ depth and $O(n)$ work.
\end{lemma}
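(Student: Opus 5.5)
The plan is the standard prefix-sum (scan) construction. First, in parallel, compute for each position $i$ a flag $f_i \in \{0,1\}$ indicating whether the $i$-th element satisfies the predicate. Since each flag needs only one predicate evaluation, this takes $O(1)$ depth and $O(n)$ work, assuming constant-time predicate evaluation as in the intended use. Next, compute the exclusive prefix sums $p_i = \sum_{j<i} f_j$ together with the total $k = \sum_j f_j$. Finally, allocate an output array of length $k$. Every position $i$ with $f_i=1$ then writes its element to slot $p_i$, and all these writes happen in parallel.

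The core step is a work-efficient scan in the binary fork-join model. I would use the balanced-binary-tree up-sweep/down-sweep. In the \emph{up-sweep}, we recursively split the array into halves, compute each half's sum in parallel via fork-join, and store every subtree sum at its tree node. In the \emph{down-sweep}, we pass an offset down the tree. The root receives $0$. A node receiving offset $s$ passes $s$ to its left child and $s$ plus the left subtree's sum to its right child. At each leaf $i$, the received offset equals $p_i$. The tree has $O(n)$ nodes, each doing $O(1)$ work, and has height $O(\log n)$. Forking at each level costs $O(1)$ depth, so both sweeps take $O(\log n)$ depth and $O(n)$ work. The final parallel write over all $n$ positions is again a fork-join loop with $O(\log n)$ depth and $O(n)$ work.

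For correctness, if $i<i'$ both satisfy the predicate, then $p_{i'} \ge p_i + f_i = p_i+1$. Hence distinct selected elements go to distinct slots, and their relative order is preserved, which is exactly stability. The slots used are precisely $\{0,\dots,k-1\}$, since the $p_i$ at selected positions run through $0,1,\dots,k-1$ consecutively. So the output is exactly the desired subsequence with no gaps.

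The only subtle point is keeping the work linear rather than $O(n\log n)$. A naive doubling scan would cost $O(n \log n)$ work, so the tree-based (Blelloch-style) scan above is the step to get right. Beyond that, the argument is routine, matching the statement's attribution to \cite{jaja1997introduction}.
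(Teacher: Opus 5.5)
Your proposal is correct. The paper gives no proof and simply cites the textbook, and your flag, exclusive-scan and scatter construction with a work-efficient up-sweep/down-sweep scan is exactly the standard argument behind that citation (the only nit: in the binary fork-join model the flag computation is itself an $n$-way parallel loop costing $O(\log n)$ rather than $O(1)$ depth, as you correctly note for the final write, which does not affect the bound).
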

\noindent
We use \defn{parallel tree contraction}~\cite{miller1985parallel}, a generic framework that repeatedly contracts degree-$1$ and degree-$2$ vertices to reduce a rooted tree to a single node, while maintaining suitable summary information throughout the contractions. It can be used to support standard subtree and node-to-root path queries, typically recovered through a subsequent uncontraction phase. In particular, we use the following consequence in our constant-agent algorithm.
\begin{lemma}[Parallel Tree Contraction;~\cite{gazit1988optimal}]\label{lemma:treecontraction}
    Given a rooted tree on $n$ vertices, any node-to-root path query with an associative aggregate can be supported in $O(\log^2 n)$ depth and $O(n)$ work using parallel tree contraction.
\end{lemma}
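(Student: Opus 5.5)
The plan is to give the standard Miller--Reif rake/compress argument, organized so that every round shrinks the tree by a constant factor and costs $O(\log n)$ depth. There are $O(\log n)$ rounds, so the contraction phase takes $O(\log^2 n)$ depth in total. An uncontraction phase of the same length then recovers every node-to-root aggregate.

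First, I would binarize the tree. A vertex with $k>2$ children is replaced by a chain of $k-1$ binary vertices, and the new edges carry the identity of the associative aggregate $\oplus$. Every original node-to-root path keeps the same aggregate, and the tree grows only to $O(n)$ vertices. This is done in $O(\log n)$ depth and $O(n)$ work with prefix sums over the child lists.

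Each vertex $v$ stores a label $\ell(v)$: the aggregate of the path segment from $v$ to its current parent in the contracted tree. Initially $\ell(v)$ is the value of $v$'s original parent edge (or of $v$ itself, whichever the query needs). One round of contraction has two parts.

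\emph{Rake.} Every current leaf $v$ with parent $p$ is removed, and we record the pair $(v,p)$. Later, $v$'s answer will be $\ell(v)\oplus \mathrm{ans}(p)$.

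\emph{Compress.} Among the vertices with exactly one child, the maximal chains are colored with $O(1)$ colors by deterministic coin tossing. One color class is an independent set containing a constant fraction of chain vertices; after a constant number of recoloring steps it can be taken to be a maximal independent set on the chains. For each selected $v$ with child $c$ and parent $p$, we set $\ell(c)\gets \ell(c)\oplus\ell(v)$ and splice $v$ out, recording $(v,p)$ so that later $\mathrm{ans}(v)=\ell(v)\oplus\mathrm{ans}(p)$.

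Associativity of $\oplus$ is exactly what makes the relabeling in compress correct. The selected vertices are independent, so all splices in a round commute and can be done simultaneously.

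The standard counting step comes next: in a binary tree, leaves plus degree-$2$ chain vertices make up at least a constant fraction of all vertices. Hence every round deletes a constant fraction of the remaining vertices, and after $O(\log n)$ rounds only the root remains.

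At the end of each round, I compact the surviving vertices with Lemma~\ref{lemma:parallel_filter}. This costs $O(\log n)$ depth and work linear in the current size. The sizes decrease geometrically, so the total work is $O(n)$ and the total depth is $O(\log n)\cdot O(\log n)=O(\log^2 n)$.

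For uncontraction, we process the rounds in reverse order. Every vertex removed in a round has a recorded endpoint $p$ that survived that round, so $\mathrm{ans}(p)$ is already known. Thus each removed vertex computes its answer $\ell(v)\oplus\mathrm{ans}(p)$ in $O(1)$ depth per round, and the work is again linear.

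The main obstacle is the compress step with \emph{linear} work and deterministic guarantees. Deterministic coin tossing naively costs $O(n\log^* n)$ work per round, and it does not by itself guarantee that a constant fraction of the whole tree disappears. My first approach is to run the $O(\log^* n)$ coloring iterations only on the compacted chain vertices. Since $\log^* n=O(\log n)$, this stays within the per-round depth budget. I would then argue, as in Gazit--Miller--Teng, that the work can be amortized, for example by using a precomputed lookup table for the final color reduction, so that it remains geometric across rounds. If that accounting proves delicate, the fallback is to cite \cite{gazit1988optimal} for the optimal-work scheduling and keep the argument above for correctness and depth.
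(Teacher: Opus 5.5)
The correctness half of your plan is fine. The labels $\ell(\cdot)$ with the order $\ell(c)\oplus\ell(v)$, the independence of spliced vertices, the leaves-plus-unary counting and the $O(\log n)$ bound on the number of rounds all go through.

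The genuine gap is the $O(n)$ work bound, which your argument does not deliver; it is precisely the nontrivial content of \cite{gazit1988optimal}. Your compress step runs $\Theta(\log^* n)$ coin-tossing iterations on all current chain vertices, so round $r$ costs $\Theta(k_r\log^* n)$ work for current size $k_r$. Geometric decay of $k_r$ only gives $\sum_r k_r=O(n)$. It cannot absorb a multiplicative overhead that is already present in the first round, so what you actually get is $O(n\log^* n)$ work. The lookup-table remark does not change this, because each vertex still has to read a window of $\Theta(\log^* n)$ successor colors.

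The missing idea is accelerated cascading. First run $O(\log\log^* n)$ rounds of a slower but linear-work compress: two coin-tossing steps give an $O(\log\log n)$-coloring of the chains, and after bucketing by color you build the maximal independent set greedily one color class at a time. This costs $O(k)$ work and $O(\log n\log\log n)$ depth per round. It shrinks the tree by a factor $(\log^* n)^2$ within the $O(\log^2 n)$ depth budget, and only then do you switch to your fast compress.

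Falling back on the citation is exactly what the paper does, since it states this lemma with a citation and no proof. If you do that, the self-contained part of your write-up proves only the weaker bound, and you should say so. For the paper's single application, \Cref{lem:tree_contraction_path}, the weaker bound (or even $O(|V|\log|V|)$ work via plain pointer jumping) would in fact be harmless.

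Your depth accounting ``$\log^* n=O(\log n)$'' also implicitly charges $O(1)$ depth per coin-tossing iteration, as on a PRAM. In the binary fork-join model the paper uses, each synchronized iteration over $k$ items costs $\Theta(\log k)$ depth. Done that way, a round costs $O(\log n\log^* n)$ depth and the total is $O(\log^2 n\log^* n)$. You can restore $O(\log n)$ depth per round by letting each chain vertex read the identifiers of its next $O(\log^* n)$ successors and simulate all iterations locally after a single fork. That costs $O((\log^* n)^2)$ work per vertex, which is one more reason the cascading step is needed.

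A minor bookkeeping point: since you rake before you compress, the parent of a raked leaf can become unary and be spliced out in the same round. So the claim that every recorded endpoint survives its round is false as stated. During uncontraction, undo the compress half of each round before its rake half.
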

\noindent
We will also use the following parallel maximal independent set and vertex coloring results, which is a key ingredient in extending the constant-agent algorithm to a more general class of instances.
\begin{lemma}[Parallel Maximal Independent Set (MIS); \cite{goldberg1989ParMIS}]\label{lem:det_par_mis}
    Given a graph $G$ with $n$ vertices and $m$ edges, a maximal independent set of $G$ can be computed in $O(\log^4n)$ depth and $O((m+n)\log^2n)$ work.
\end{lemma}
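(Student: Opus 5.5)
The statement above is a cited result~\cite{goldberg1989ParMIS}, so in the paper it is simply invoked. If I had to re-derive it, I would derandomize Luby's algorithm with pairwise independence and the method of conditional expectations. Each round would shrink the number of edges in the current graph by a constant factor, and I would make each round cost $O(\log^3 n)$ depth and $O((m'+n')\log^2 n)$ work, where $m',n'$ are the current edge and vertex counts. Summing over the $O(\log n)$ rounds gives depth $O(\log^4 n)$, and since the edge counts decrease geometrically, work $O((m+n)\log^2 n)$. The stated bound also needs the vertex counts to shrink; I would enforce this by discarding isolated vertices, adding them to the independent set, at the start of each round.

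\textbf{A single round.} First group vertices by degree class $\lfloor \log_2 \deg(v)\rfloor$. Then mark each vertex $v$ independently with probability about $1/(2\deg(v))$. In each marked edge, unmark the endpoint of lower degree, breaking ties by ID. Add the surviving marked vertices $S$ to the independent set and delete $S\cup N(S)$.

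Luby's analysis uses only pairwise independence. It shows that a constant fraction of edges are incident to ``good'' vertices, and that each good vertex lands in $N(S)$ with constant probability. Hence the expected number of deleted edges is $\Omega(m')$.

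\textbf{Derandomizing the round.} Pairwise independent marks can be generated from a seed of $O(\log n)$ bits. The natural approach is to take a pairwise independent hash family over a field of size $\mathrm{poly}(n)$ and round the marking probabilities to powers of two. The difficulty is that the exact number of removed edges is not a quadratic function of the marks. I would replace it with the pessimistic estimator
\[
\Phi \;=\; \sum_{v \text{ good}} \deg(v)\Big(\sum_{u\in N(v)} X_u \;-\; \sum_{u,w\in N(v)} X_uX_w \;-\; \sum_{u\in N(v)}\sum_{w\in N(u),\,\deg(w)\ge \deg(u)} X_uX_w\Big),
\]
which is a quadratic polynomial in the mark indicators $X$. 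Its expectation is $\Omega(m')$ and it lower-bounds the number of removed edges. Because $\Phi$ only involves pairwise terms, its conditional expectation given a prefix of seed bits can be evaluated exactly.

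I would fix the seed one bit at a time, choosing each bit to keep $\mathbb{E}[\Phi]$ from decreasing. This takes $O(\log n)$ sequential steps per round. In each step the two candidate conditional expectations are computed by summing pairwise terms over all edges and length-two paths in $O(\log n)$ depth. The degree bucketing is what keeps these sums cheap: the cross terms $X_uX_w$ can then be aggregated per vertex and per degree class using prefix sums.

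\textbf{Main obstacle.} The hardest part is the work bound: the sum over length-two paths can have $\sum_v \deg(v)^2 \gg m$ terms. I would restructure the sums as $\bigl(\sum_{u\in N(v)} X_u\bigr)^2$-style expressions, so that each conditional expectation is computed in $O(m'\log n)$ work. Multiplying by $O(\log n)$ seed bits gives $O(m'\log^2 n)$ work and $O(\log^2 n)$ depth per round. The remaining $\log n$ depth factor is absorbed either by the $O(\log n)$ degree classes, if they must be handled in sequence, or by the PRAM-simulation loss mentioned in the model section. Either way the stated $O(\log^4 n)$ depth is reached.
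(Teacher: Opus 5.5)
The paper gives no argument for this lemma; it is imported from \cite{goldberg1989ParMIS}. Your derandomized-Luby sketch therefore has to stand on its own, and as written it has two gaps.

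\textbf{The estimator.} You sum over all of $N(v)$, so the pair term has expectation about $\frac12\bigl(\sum_{u\in N(v)}p_u\bigr)^2$, which can swamp the linear term. Take a star $K_{1,d}$. The center is the only good vertex, and every leaf is marked with probability $1/2$. The center's bracket then has expectation $d/2-d(d-1)/8-1/4<0$ for $d\ge 5$, so $\mathbb{E}[\Phi]<0$, not $\Omega(m')$. As in Luby's pairwise analysis, you must restrict each good $v$ to a subset $L'(v)$ of neighbors of degree at most $\deg(v)$. Their marking probabilities should sum to some $\alpha$ in a fixed constant window below $1$. Such a subset exists because each $p_u\le 1/2$ and the sum over all such neighbors is bounded below by a constant. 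The bracket then has expectation at least $\alpha(1-\alpha)/2$, and it still lower-bounds $[v\in N(S)]$.

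\textbf{The work bound.} This is the serious gap, at exactly the step you flag. Rewriting via $\bigl(\sum_u X_u\bigr)^2$ helps only if $\mathbb{E}[X_uX_w\mid\cdot]$ factors into per-vertex quantities. Unconditionally it does, since it equals $p_up_w$. Once a prefix of the seed is fixed, however, the marks are no longer pairwise independent. For $h(v)=av+b$, the conditional probability that both $u$ and $w$ are marked depends on the pair through $a(u-w)$. Each conditional expectation therefore still involves all $\Theta(|L'(v)|^2)$ pairs, and $|L'(v)|=\Theta(\deg v)$ in a regular graph. So each seed bit costs work on the order of $\sum_v\deg(v)^2$, up to $\Theta(mn)$. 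Degree bucketing with prefix sums does not change this.

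Near-linear work needs a sample space in which conditional pair probabilities depend only on coarse, groupable features of $(u,w)$. For example, let the $j$-th of the $k_v$ bits defining $v$'s mark be $\langle\bar v,a_j\rangle\oplus b_j$, with independent seeds $(a_j,b_j)$ over $\mathrm{GF}(2)$. After a prefix of $a_j$ is fixed, the joint probability is a product of per-vertex factors times a correction. That correction depends only on whether $\bar u,\bar w$ agree on the unfixed coordinates and on one already-determined parity. The pair sums can then be computed by grouping neighbors, e.g.\ by one sort of each neighborhood per round plus segmented prefix sums. This construction uses $O(\log^2 n)$ seed bits, so your depth accounting changes as well: $O(\log^3 n)$ per round and $O(\log^4 n)$ overall, with no need for the absorption argument. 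Without some such construction, your claimed $O(m'\log n)$ work per conditional expectation is unsupported.
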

\begin{lemma}[Parallel Vertex Coloring]\label{lemma:coloring}
    Given a graph with $n$ vertices, $m$ edges, and maximum degree $\Delta$, a proper $(\Delta+1)$-vertex-coloring can be computed in $O(\Delta\log^4 n)$ depth and $O(\Delta(m+n)\log^2n)$ work.
\end{lemma}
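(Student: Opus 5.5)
The plan is to obtain the coloring by repeatedly peeling off maximal independent sets with \Cref{lem:det_par_mis}, running $\Delta+1$ rounds. I would maintain the set $U$ of still-uncolored vertices, initially all of $V$. In round $i = 1, \dots, \Delta+1$, I compute a maximal independent set $I_i$ of the induced subgraph $G[U]$ using \Cref{lem:det_par_mis}, assign color $i$ to every vertex of $I_i$, and set $U \leftarrow U \setminus I_i$. Forming $G[U]$ only requires marking the vertices of $U$ and filtering the edge list to those with both endpoints in $U$, which \Cref{lemma:parallel_filter} does in $O(\log(m+n))=O(\log n)$ depth and $O(m+n)$ work. Each color class is independent by construction, so whatever coloring results is proper.

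The central step is showing that $U$ is empty after $\Delta+1$ rounds. The invariant is that after round $i$, every vertex $v$ still in $U$ has at least $i$ neighbors that have already been colored, and these have distinct colors $1,\dots,i$. This holds because $v \notin I_i$ while $I_i$ is maximal in $G[U]$, so $v$ has a neighbor in $I_i$. Since the $I_j$ are disjoint, those neighbors are distinct vertices. If some $v$ were still uncolored after round $\Delta+1$, it would have at least $\Delta+1$ neighbors, contradicting the degree bound. Hence every vertex receives a color in $\{1,\dots,\Delta+1\}$.

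For complexity, each round is one MIS call on a subgraph with at most $n$ vertices and $m$ edges, costing $O(\log^4 n)$ depth and $O((m+n)\log^2 n)$ work, plus the filtering overhead, which is dominated. Over $\Delta+1$ rounds this gives $O(\Delta\log^4 n)$ depth and $O(\Delta(m+n)\log^2 n)$ work, matching the claim. I do not see a genuine obstacle. The only points that need care are the maximality argument above and confirming that the MIS bound is monotone in graph size, so subgraphs never cost more than $G$ itself.
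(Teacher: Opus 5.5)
Your proposal is correct and follows essentially the same route as the paper's proof: repeatedly peel off an MIS of the subgraph induced by the uncolored vertices, observe that each surviving vertex has a neighbor in the new MIS (so it loses a neighbor each round, giving at most $\Delta+1$ rounds), and multiply the bounds of \Cref{lem:det_par_mis} by $\Delta+1$. Your explicit invariant and your accounting of the filtering overhead with \Cref{lemma:parallel_filter} make precise details that the paper leaves implicit.
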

\begin{proof}
    Repeatedly compute an MIS in the graph induced by the currently uncolored vertices, assign these vertices a new color, and remove them. Since each color class is an independent set, the resulting coloring is proper.
    Moreover, in any round, each remaining vertex is either selected in the MIS or has at least one neighbor that is selected. Thus, in every round that a vertex survives, its degree decreases by at least one, resulting in at most $\Delta+1$ rounds in total. The work and depth bounds follow by \Cref{lem:det_par_mis}.
\end{proof}
\noindent
We note that better work and depth bounds are known for several of the above primitives if one allows randomization. Here we use deterministic variants in order to establish our NC results. Next, we state the randomized primitives used in the paper. Some of our algorithms will require sampling uniformly random permutations and computing maximum matchings. We use the following results.
\begin{restatable}[Parallel Permutation Sampling; Section 6, \cite{blelloch2020optimal}]{lemma}{permutation}\label{lemma:permutation}
    There is an algorithm that runs in $O(\log n)$ depth and $O(n)$ expected work and samples a uniformly random permutation of $[n]$ with high probability.
\end{restatable}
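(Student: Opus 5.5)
The natural approach is a parallel Fisher--Yates (Knuth) shuffle implemented with random priorities. Each element $i \in [n]$ independently draws a key $r_i$ uniformly from $[n^c]$ for a suitable constant $c \ge 3$. With probability at least $1 - \binom{n}{2} n^{-c} \ge 1 - n^{-(c-2)}$ all keys are distinct. Conditioned on distinctness, the keys are exchangeable, so the ordering of $[n]$ by key is a uniformly random permutation. Drawing the keys costs $O(1)$ depth and $O(n)$ work.

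The real content is turning keys into a permutation within $O(\log n)$ depth and $O(n)$ expected work. A comparison sort would cost $\Theta(n\log n)$ work, so I would instead sort the integer keys with a randomized bucketing step. First, write $r_i = q_i n + s_i$ with $q_i \in [n^{c-1}]$ and $s_i \in [n]$, and scatter the elements into $n$ buckets by their low part $s_i$. Each bucket then has expected size $O(1)$ and maximum size $O(\log n/\log\log n)$ with high probability. Bucket sizes come from a parallel counting sort / prefix sum, which costs $O(\log n)$ depth and $O(n)$ work and allows the scatter through the same prefix-sum machinery as \Cref{lemma:parallel_filter}. Second, sort within each bucket by the full key. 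A simple quadratic rank computation inside each bucket costs $O(\sum_b |B_b|^2)$ work, which is $O(n)$ in expectation because each bucket size is binomial with constant mean. Its depth is $O(\log\log n)$, since each rank is a sum over at most $O(\log n)$ comparisons.

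A sort on only the low digit does not give a global order, though. So instead I would use the direct analogue of the construction in \cite{blelloch2020optimal}. Run the Knuth shuffle sequence of swaps $H[i] \leftrightarrow H[j_i]$ with $j_i$ uniform in $[i]$, and resolve the dependencies in parallel rounds. In each round, every index $i$ that is currently the largest pending index touching both cells $i$ and $j_i$ performs its swap. The resulting dependence structure is a random tree, because each $j_i$ is a uniform pointer to an earlier index, and its depth is $O(\log n)$ with high probability; this is the standard random recursive tree height bound. Once priorities are resolved, the total work is $O(n)$ in expectation. The output is exactly the Fisher--Yates permutation and hence uniform. The probabilistic failure, where the round count exceeds $O(\log n)$, occurs with probability $n^{-c}$.

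The main obstacle is bounding the number of rounds of this dependence-resolution process. The plan there is to show that the chain of swaps that any cell must wait on is a path in the random recursive tree defined by $i \mapsto j_i$, and then apply the known $O(\log n)$ whp depth bound for such trees via a Chernoff argument on ancestor indicators. Since the statement is quoted from \cite{blelloch2020optimal}, I would ultimately cite their analysis for this step.
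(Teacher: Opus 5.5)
The paper gives no proof of this lemma. It imports the bound from \cite{blelloch2020optimal}, whose contribution is precisely $O(\log n)$ depth in the binary fork-join model the paper works in. Your final route, a parallel Knuth shuffle whose cost is governed by its dependence depth, is the natural one, but two steps fail as written.

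First, the dependence structure is not the recursive tree $i\mapsto j_i$. Every iteration $k$ with $j_k=\ell$ swaps with cell $\ell$, so all of these must run one after another, in decreasing order, before iteration $\ell$ runs. A waiting chain therefore moves through siblings as well as parents. For instance, cell $1$ alone forces a chain of about $\ln n$ iterations (those with $j_k=1$), all of which sit at depth $1$ in the recursive tree. The true dependence graph is the in-forest in which iteration $\ell$'s two predecessors are its smallest-index child and its next-larger sibling. This is the first-child/next-sibling transform of the recursive tree, and it behaves like a random binary search tree. Its height is still $O(\log n)$ w.h.p., but a Chernoff bound on ancestor indicators does not bound it. A repair: follow the chain from $k$ by scanning $t=k-1,k-2,\dots$, revealing only whether $t$ equals the current target $c$ or whether $j_t=c$. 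Conditioned on what has been revealed, each $t$ joins the chain with probability at most $2/t$. So the chain length is dominated by a sum of independent Bernoulli$(\min\{1,2/t\})$ variables, and Chernoff plus a union bound over $k$ gives $O(\log n)$ w.h.p.

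Second, the synchronous-round implementation does not meet the claimed costs in the binary fork-join model. Each round is a parallel loop over the pending iterations plus a per-cell priority write. In that model such a loop costs depth logarithmic in the size of the pending set, and that set stays polynomially large for $\Omega(\log n)$ rounds, so the rounds alone cost $\Theta(\log^2 n)$ depth rather than $O(\log n)$. The $O(n)$ work claim is also unsupported: each round costs the number of pending iterations (plus compaction), and you would have to prove that this number decays geometrically.

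To get the lemma's bounds, build the dependence forest explicitly so that each iteration knows its unique successor. Group the iterations by target $j_k$ and order each group by index, e.g.\ by a semisort followed by small sorts, in $O(n)$ expected work and $O(\log n)$ depth w.h.p. Then run the forest asynchronously: a finished iteration does a test-and-set on its successor, and the later of the at most two predecessors to arrive executes it. This uses $O(n)$ work and $O(\log n)$ depth plus the forest height, which is $O(\log n)$ w.h.p.\ by the corrected argument above.

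Your abandoned bucketing idea would give a global order if you bucketed by the high-order part of the key instead of the low-order part. However, distributing $n$ items into $n$ buckets is itself an integer-sorting task, not a plain prefix sum.
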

\begin{theorem}[Parallel Maximum Matching;  \cite{Mulmuley1987Matching}]\label{thm:matching}
    Given a graph $G$ with $n$ vertices and $m$ edges, there is a randomized algorithm that runs in depth $O\left(\log^2n\right)$ and work $O\left(m\cdot n^{3.5}\log^2n\right)$ and computes a maximum matching of $G$ with probability at least $1/2$. 
\end{theorem}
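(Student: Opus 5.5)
The plan is to follow the isolation-lemma approach of Mulmuley, Vazirani and Vazirani: reduce maximum matching to perfect matching, then find a perfect matching with determinants. I will treat the arithmetic cost of parallel determinant and inversion as a black box; these are known to run in $O(\log^2 n)$ depth.

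\textbf{Step 1: the size of a maximum matching.} Let $\nu$ be the size of a maximum matching of $G$, and form the Tutte matrix $T$ of $G$. Lov\'asz's theorem says that $\operatorname{rank} T = 2\nu$. I substitute independent uniformly random values from a field of size $\mathrm{poly}(n)$ for the indeterminates. By Schwartz--Zippel, the rank is preserved with high probability. The rank of the resulting numeric matrix is computable in $O(\log^2 n)$ depth.

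\textbf{Step 2: reduce to perfect matching.} I add $n-2\nu$ dummy vertices, each adjacent to every original vertex. The new graph $G'$ has a perfect matching. Removing the dummy edges from any perfect matching of $G'$ leaves a matching of $G$ of size at least $\nu$, which is therefore maximum.

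\textbf{Step 3: isolate a perfect matching.} Assign each edge $e$ of $G'$ an independent weight $w_e$ drawn uniformly from $\{1,\dots,2|E(G')|\}$. By the isolation lemma, the minimum-weight perfect matching is unique with probability at least $1/2$; this is the source of the success probability in the statement. I then replace the $(i,j)$ entry of the Tutte matrix by $\pm 2^{w_{ij}}$ to obtain an integer matrix $B$. If the minimum-weight perfect matching is unique and has weight $W$, then $2^{2W}$ is the highest power of two dividing $\det B$.

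\textbf{Step 4: read off the matching.} I compute $\det B$ and the adjugate of $B$, giving all cofactors $\det B_{ij}$, in parallel. Edge $ij$ belongs to the isolated matching exactly when $\det B_{ij}\, 2^{w_{ij}} / 2^{2W}$ is odd. Each such parity test is independent, so all edges are tested in $O(1)$ additional depth. I then verify that the output is a matching of the right size, which lets failures be detected.

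\textbf{Main obstacle: the work bound.} The entries of $B$ have $O(m)$ bits, and determinants of such matrices have $O(nm)$ bits. I would bound the work of a parallel inversion (for example, via Csanky's or Berkowitz's method, or via fast matrix multiplication) by roughly $n^{3.5}$ arithmetic operations in depth $O(\log^2 n)$. Each operation is on $O(m)$-word numbers, which gives total work $O(m\, n^{3.5}\log^2 n)$. Getting the exact exponent requires choosing the inversion routine carefully. Since this is a cited result, I would defer those constants to \cite{Mulmuley1987Matching}.
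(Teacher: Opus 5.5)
The paper gives no proof of this statement; it is quoted from \cite{Mulmuley1987Matching}. Your correctness argument is the standard one behind it. Lov\'asz's rank identity, padding with universal dummy vertices, the isolation lemma, the $2^{2W}$ divisibility lemma and the parity test on $\det B_{ij}\,2^{w_{ij}}/2^{2W}$ do yield a maximum matching.

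The gap is in the work bound, which is the real quantitative content of the statement. Your construction does not deliver it, even if you take the inversion cost from the citation. Your padded graph $G'$ has $m+n(n-2\nu)$ edges, not $m$, and you draw the weights from $\{1,\dots,2|E(G')|\}$. (Giving all dummy edges one fixed weight is not an option: as soon as there are two dummies, swapping their partners produces a tie.) So the entries $\pm 2^{w_{ij}}$ of $B$ have $\Theta(m+n(n-2\nu))$ bits rather than $O(m)$. An inversion routine costing $n^{3.5}\log^2 n$ times the entry bit-length, which is how the stated $m\,n^{3.5}\log^2 n$ arises when weights are $O(m)$, then gives work $O\bigl((m+n^2)\,n^{3.5}\log^2 n\bigr)$. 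For a sparse graph with a small maximum matching, a star for instance, this exceeds the claimed bound by a factor $\Theta(n)$.

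Deferring the exact exponent to the citation therefore does not rescue your construction. You need a reduction to perfect matching that keeps the edge set of size $O(m)$, and you must account for its cost too. One option is to restrict to a vertex set $S$ with $|S|=2\nu$ indexing a nonsingular principal submatrix of the random Tutte matrix, so that $G[S]$ has a perfect matching. Your accounting is also internally inconsistent. You note the determinants have $O(nm)$ bits but charge each operation as acting on $O(m)$-word numbers. And the final $\log^2 n$ factor is a processors-times-time product, not the operation count you describe.

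A smaller point concerns the probability $1/2$. You have two failure sources, the Schwartz--Zippel rank estimate in Step 1 and isolation in Step 3, so your analysis only guarantees success probability $1/2-O(n/|F|)$. A Step 1 failure is also not caught by your final check. Random substitution can only lower the rank, so you would get $\nu'<\nu$. Then $G'$ still has a perfect matching, and you output a valid matching of the wrong size $\nu'$. Drawing the weights from $\{1,\dots,4|E(G')|\}$ makes isolation fail with probability at most $1/4$, which restores success probability $1/2$ at the cost of a constant factor in the bit-length.
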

\noindent
We will use the following standard random walk fact for our \efk{\left(k\right)} algorithm for $k  \approx \sqrt{m}$.
\begin{restatable}[Random Walk Fact]{lemma}{hoeffding}\label{lemma:hoeffding}
    An $N$-step random walk on $\mathbb{Z}$ beginning at $0$ that takes a $+1$ step with probability $p\geq 1/2$ and a $-1$ step with probability $1-p$ goes below $-\sqrt{cN\log (dN)}$ with probability at most $(dN)^{-c}$ for any $c,d\geq1$.
\end{restatable}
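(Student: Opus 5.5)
The plan is to avoid a union bound over the $N$ time steps and instead control the running minimum of the walk all at once with an exponential supermartingale and Ville's (Doob's) maximal inequality. Let $X_1,\dots,X_N$ be the i.i.d.\ steps, with $X_i=+1$ with probability $p\geq 1/2$ and $X_i=-1$ otherwise, and let $S_t=\sum_{i\le t}X_i$. Fix $\lambda>0$ and $\theta>0$. Since $p\ge 1/2$,
\[
\mathbb{E}\left[e^{-\theta X_i}\right]=pe^{-\theta}+(1-p)e^{\theta}\le \cosh\theta\le e^{\theta^2/2}.
\]
Hence $M_t:=\exp\left(-\theta S_t-t\theta^2/2\right)$ is a nonnegative supermartingale with $M_0=1$.

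If the walk ever reaches $S_t\le-\lambda$ for some $t\le N$, then at that time $M_t\ge \exp(\theta\lambda-N\theta^2/2)$. Ville's inequality then gives
\[
\Pr\left[\min_{t\le N}S_t\le-\lambda\right]\le \exp\left(-\theta\lambda+N\theta^2/2\right).
\]
Choosing $\theta=\lambda/N$ yields the bound $\exp\left(-\lambda^2/(2N)\right)$. This is the whole-path analogue of Hoeffding's inequality, with no loss of a factor $N$.

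The last step is to substitute $\lambda=\sqrt{cN\log(dN)}$ and compare the result with the target $(dN)^{-c}$. This comparison is where I expect the \textbf{main obstacle}, and I would check it first. The substitution gives $\exp\left(-\tfrac{c}{2}\log(dN)\right)=(dN)^{-c/2}$. The exponent is therefore off by a factor of $2$ from what the statement claims.

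I do not expect a sharper argument to close this gap for natural logarithms. For $p=1/2$, the reflection principle gives $\Pr[\min_t S_t\le-\lambda]\approx 2\Pr[S_N\le-\lambda]$. That quantity is of order $(dN)^{-c/2}/\sqrt{\log(dN)}$, which is larger than $(dN)^{-c}$ once $N$ is large. So before writing the proof I would verify this numerically, for example with $c=d=1$.

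If the check confirms the gap, the exponent $c$ in the statement cannot be recovered as written. The honest fix is a constant adjustment, and the Ville bound above can be reused verbatim in either form:
\begin{itemize}
\item keep the threshold and prove failure probability $(dN)^{-c/2}$; or
\item keep the failure probability $(dN)^{-c}$ and use threshold $\sqrt{2cN\log(dN)}$.
\end{itemize}
Either version suffices for the later application to \efk{\left(\tilde{O}\left(\sqrt{m/n}\right)\right)}, because $c$ is an arbitrary constant there and only the order $\sqrt{N\log N}$ of the threshold matters.

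Concretely, I would write the proof as the supermartingale computation above, followed by an explicit remark that the constant inside the square root should be $2c$ for the stated probability.
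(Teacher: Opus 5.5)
Your diagnosis is correct, and the problem is in the statement, not your method. As written the lemma is false, and the paper's proof reaches $(dN)^{-c}$ only through a misapplied constant. The paper applies Hoeffding's inequality to the endpoint $X=S_N$ and writes the tail as $\exp(-t^2/N)$. For $N$ independent steps in $[-1,1]$, Hoeffding actually gives $\exp\bigl(-2t^2/(4N)\bigr)=\exp\bigl(-t^2/(2N)\bigr)$. At $t=\sqrt{cN\log(dN)}$ this is $(dN)^{-c/2}$, exactly your factor of $2$. Your proposed check comes out as you predict. With $c=d=1$, $p=1/2$ and $N=10^6$ one has $\sqrt{\ln N}\approx 3.72$, so $\Pr[S_N\le-\sqrt{N\ln N}]\approx 1-\Phi(3.72)\approx 10^{-4}$, far above $N^{-1}=10^{-6}$. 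Reading $\log$ as $\log_2$ does not save the statement either, since the exponent becomes $c/(2\ln 2)\approx 0.72c<c$.

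Beyond the constant, your route differs from the paper's, and it is the one the application actually needs. The paper controls only the final position. Its reduction $\Pr(|X|>t)\le\Pr(|X-\mathbb{E}X|>t)$ also holds only in the one-sided form $\Pr(X<-t)\le\Pr(X-\mathbb{E}X<-t)$; it fails, e.g., for $p=1$. The proof of \Cref{thm:efrootm}, however, uses the running-minimum event: the walk started at $K$ must never drop below $0$, i.e.\ $\sum_{t\le T}X_t\ge -K$ for every $T$. An endpoint bound gives this only after a union bound over $T$ (losing a factor $N$) or a reflection argument. Your exponential supermartingale with Ville's inequality gives the path-wise bound $\exp(-\lambda^2/(2N))$ directly, with the same exponent as the endpoint Hoeffding bound.

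When you plug the corrected lemma into \Cref{thm:efrootm}, the constants need adjusting. With the paper's choice $c=3$, $d=n$, $N=m/n$, your first variant gives only $m^{-3/2}$ per ordered pair. That does not survive the union bound over up to $n^2$ pairs when $n$ is large. Use $c=6$ there, or equivalently your second variant with $K=\lceil\sqrt{6(m/n)\log m}\rceil$. This restores $m^{-3}$ per pair and still yields $\mathsf{EF}\bigl(\tilde{O}(\sqrt{m/n})\bigr)$.
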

\begin{proof}
    Let $X_t\in\{-1,1\}$ be the value of the step of the random walk at time $t$ and let $X:=\sum_{t\in[N]}X_t$. It suffices to prove that $|X|>\sqrt{cN\log(dN)}$ with probability at most $(dN)^{-c}$. We have
    \begin{align*}
        \operatorname{Pr}\left(|X|> \sqrt{cN\log(dN)}\right)
        &\leq\operatorname{Pr}\left(|X-\mathbb{E}[X]|> \sqrt{cN\log(dN)}\right)  
        \\&\leq \exp\left(-\sqrt{\frac{cN\log (dN)}{N}}^2\right)  
        \\ &= \left(\frac{1}{dN}\right)^{-c}
    \end{align*}
    where the first line is because $\mathbb{E}[X]=\sum_t\mathbb{E}[X_t]\geq\sum_t 0\geq0$ and the second line is by Hoeffding's inequality. Then with probability at most $(dN)^{-c}$ the random walk goes below $-\sqrt{cN\log(dN)}$. 
\end{proof}

\paragraph*{Fair Allocation Background.}
Having defined our model for parallel computation, we now formally define the instances and relevant notations for the allocation problem. 
\begin{definition}[Allocation Instances with Additive Valuations]\label{defn:instances}
    An instance of the allocation problem is the tuple $\mathcal{I}=\left(M,N,\{v_i\}_{i\in[N]}\right)$ where $M$ is the set of $m$ goods, $N$ is the set of $n$ agents, and $v_i:2^M\to\mathbb{R}_{\geq0}$ is agent $i$'s \defn{additive} value function over goods, i.e.\ for any $S\subseteq M$ and any $i$ we have $v_i(S)=\sum_{j\in S}v_i(j)$. Given $\mathcal{I}$, an \defn{allocation} $\mathcal{A}$ is a partition of the goods into $n$ parts where each part is assigned to a unique agent, and $\cA(i)$ denotes the part assigned to agent $i$. A \defn{partial allocation} $\cA$ is a partition of a subset of the goods where each part is assigned to a unique agent.
\end{definition}
\noindent For the rest of the paper, we assume additive valuations for the agents. We will assume without loss of generality that $m/n$ is an integer by adding goods that no agent positively values. Note this only increases the number of goods by at most a multiplicative factor of $2$, so any bounds in terms of $m$ are only affected by a factor of $2$. Given an instance of the allocation problem, we say that an agent's \textit{\textbf{preference list}} is an ordering of the goods where the first good is that agent's highest valued good and so on, with ties broken arbitrarily. An agent's preference list is \textbf{\textit{complete}} if that agent has a positive value for all goods, and \textbf{\textit{incomplete}} otherwise.

The following characterization of the given allocation instances will also be useful, which we study in \Cref{subsec:graph,subsec:3hypergraph}.
\begin{definition}[Induced Hypergraph]\label{defn:graphhypergraph}
    Given an allocation instance $\mathcal{I}$, we say $\cI$ induces the hypergraph $G$ which is defined as follows: make a vertex for each agent, and for each good $g$ that is positively valued by a subset of agents $S\subseteq N$, we add a hyperedge $g$ containing the vertices $S$.
\end{definition}
\noindent Hence the agent and good sets $N,M$ are interchangeable with the vertex and edge sets $V(G),E(G)$ respectively. When convenient we will directly refer to an allocation instance as its induced hypergraph. We next define the fairness guarantee that we'll be studying:
\begin{definition}[Envy-Free Up to One Good (\efone)]\label{defn:ef1}
    An allocation $\cA$ is envy-free up to one good \textsf{(EF1)} iff for every pair of agents $i,j$ there exists a good $s\in \cA(j)$ such that $v_i(\cA(i))\geq v_i(\cA(j)\setminus \{s\})$.
 \end{definition}
\noindent Next we formally define the Round Robin algorithm that achieves \efone.
\begin{definition}[Round Robin]
    The Round Robin procedure takes as input an instance $\mathcal{I}$ and set of permutations of the agents $\{\sigma_i\}_{i\in[ m/n]}$. For each round $i\in[ m/n]$, agents choose their highest-valued remaining good in the order given by $\sigma_i$. Fixed-Order Round Robin is the Round Robin procedure where all the $\sigma_i$'s are identical. 
\end{definition}
\noindent 
\noindent We next define the envy-cycle elimination algorithm of \cite{lipton04approximately}, which also achieves \efone.
\begin{definition}[Envy Graph and Envy-Cycle Elimination]\label{defn:envycycle}
Given a partial allocation $\mathcal{A}=(A_1,\ldots,A_n)$, the \defn{envy graph} is the directed graph on the agents, with a directed edge $(i,j)$ if agent $i$ envies agent $j$, i.e., if $v_i(A_i) < v_i(A_j)$. A vertex of indegree $0$ in the envy graph is called a \defn{source}.

The \defn{envy-cycle elimination} procedure starts from the empty allocation and maintains the envy graph of the current partial allocation. At each step, it starts with an acyclic envy graph, selects an arbitrary unallocated good $g$, and assigns it to a source vertex of the current envy graph. If this creates a directed cycle $(i_1,i_2,\ldots,i_k)$, then the procedure \defn{eliminates} the cycle by cyclically permuting the bundles along the cycle; that is, agent $i_a$ receives the bundle previously held by agent $i_{a+1}$ for each $a\in[k-1]$, and agent $i_k$ receives the bundle previously held by agent $i_1$. The envy graph is then updated, and this cycle-elimination step is repeated until the envy graph becomes acyclic again.
\end{definition}
\begin{restatable}[\cite{lipton04approximately}]{lemma}{envycycle}
\label{lemma:envycycle}
    Given a partial allocation $\cA$ that is \efone, repeated application of envy-cycle elimination on its corresponding envy graph produces a partial allocation $\cA'$ whose envy graph is acyclic and such that $\cA'$ is still \efone.
\end{restatable}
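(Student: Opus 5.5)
The plan is the standard envy-cycle elimination argument of \cite{lipton04approximately}, applied to an arbitrary \efone partial allocation rather than to one built good-by-good. Two things must be shown. First, one elimination step (cyclically permuting bundles along a directed cycle) preserves \efone. Second, repeating such steps must terminate, which forces the envy graph to become acyclic.

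For preservation of \efone, I will use the fact that an elimination step does not change the multiset of bundles; it only changes who holds which bundle. Fix agents $i,j$ after the swap. The bundle $j$ now holds was previously held by some agent, so it is an old bundle $A_k$. For $i$, there are two cases.
\begin{itemize}
\item If $i$ is off the cycle, $i$ keeps $A_i$. The \efone condition of $i$ toward $A_k$ held before, so it still holds.
\item If $i$ is on the cycle, $i$ now holds $A_{i'}$, where $(i,i')$ is a cycle edge. Hence $v_i(A_{i'})>v_i(A_i)$. If $A_k=A_i$, then $i$ does not envy that bundle at all. Otherwise the old \efone condition gives a good $s\in A_k$ with $v_i(A_i)\ge v_i(A_k\setminus\{s\})$, and therefore $v_i(A_{i'})>v_i(A_k\setminus\{s\})$.
\end{itemize}
In both cases \efone is preserved. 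This uses only additivity, and the same argument works after any number of steps.

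For termination, I will use a potential argument. Each agent's value for their own bundle weakly increases under an elimination step: agents off the cycle keep their bundle, and agents on the cycle strictly increase. So the potential $\Phi=\sum_i v_i(\text{own bundle})$ strictly increases with every step. Since only bundles are permuted, there are at most $n!$ possible assignments of the fixed set of bundles to agents. $\Phi$ strictly increases, so no assignment repeats, and the process stops after finitely many steps. It stops only when no directed cycle remains, so the final envy graph is acyclic. This gives $\cA'$.

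The only delicate point is the self-comparison case $A_k=A_i$ in the \efone step, together with stating cleanly that the set of bundles is invariant. Both are routine, so I expect no real obstacle. A tighter bound on the number of steps, via counting edges in the envy graph, is unnecessary for this lemma.
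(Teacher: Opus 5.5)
Your proof is correct. The paper does not prove this lemma and only cites \cite{lipton04approximately}, and your \efone-preservation argument is the standard one: the bundles are only permuted, no agent's value for its own bundle decreases, and the case $A_k=A_i$ is handled separately.

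Where you depart from the classical argument is termination. You use the potential $\Phi=\sum_i v_i(\text{own bundle})$ and the fact that there are at most $n!$ reassignments. The classical argument instead counts envy edges:
\begin{itemize}
\item An agent off the cycle keeps its bundle and envies exactly the same bundles as before, so its out-degree is unchanged.
\item An agent $i$ on the cycle now holds $A_{i'}$ with $v_i(A_{i'})>v_i(A_i)$. Any bundle it envies afterwards it already envied before, and it no longer envies $A_{i'}$, so its out-degree drops by at least one.
\end{itemize}
Each elimination on a cycle $C$ therefore removes at least $|C|\ge 2$ edges. This gives at most $n(n-1)/2$ eliminations, compared with your $n!$.

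Your bound is enough for the lemma as stated. It is also enough for the paper's use of the lemma on hyperedges with $|e|\le r=O(1)$ agents, where $r!$ is a constant. The edge-counting bound is what you would need to say that envy-cycle elimination takes polynomially many steps for general $n$.
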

\noindent We will also study the following weaker fairness notion.
\begin{definition}[Envy-Free Up to $k$ Goods (\efk{\left(k\right)})]
    An allocation $\cA$ is \efk{\left(k\right)} iff for every pair of agents $i,j$ there exists $k$ goods $S:=\{s_1,s_2,\dots,s_k\}\subseteq \cA(j)$ such that $v_i(\cA(i))\geq v_i(\cA(j)\setminus S)$.
\end{definition}

\section{A Faster NC Algorithm for \efone with Two Agents}

In this section, we give an $O(\log m)$ depth and $O(m)$ work algorithm for \efone allocations with $2$ agents, assuming additive utilities and apply this to graph instances. As earlier mentioned, our algorithm is based on a parallel implementation of a certain recursive game.

More formally, we are given an allocation instance $\mathcal{I}=(M,N,\{v_i\}_i)$ with $2$ agents $N=\{a_1,a_2\}$ and $m$ a power of $2$ (which we may assume without loss of generality by padding with  goods to which both agents give value $0$). When convenient, we will also refer to these agents as $1$ and $2$. 
We label the goods $M$ arbitrarily as $\left\{g^0_i\right\}_{i\in[m]}$, and recursively construct a binary tree $T$ bottom-up as follows: at level $0$ are the leaves, which are $\left\{g^0_i\right\}_{i\in[m]}$. Let $m_0=m$ and $m_{l+1}= m_l/2$. At level $l\geq 1$ of the tree we have nodes $\left\{g^l_i\right\}_{i\in[m_l]}$. For $i\in[m_l]$, node $g^l_i$ has exactly $2$ children $g^{l-1}_{2i-1},g^{l-1}_{2i}$. It is clear this construction produces a balanced binary tree. 

The agents $a_1,a_2$ then play the following game on $T$, starting at the root $r$. At the root $r$, $a_1$ gets to choose first one of the two subtrees $L(r),R(r)$ from the root based on a rule described below; without loss of generality suppose $a_1$ chooses $L(r)$, so $a_2$ must choose $R(r)$. We then repeat for $r$'s children, where $a_1$ chooses first in the game associated with $L(r)$, and $a_2$ chooses first in the game associated with $R(r)$. This continues recursively down the tree, until $a_1,a_2$ have chosen all of the leaves. We obtain an allocation $\cA$ from the leaves that each agent chooses in the game by allocating to agent $i$ the goods corresponding to the leaves it chose in the game. 

This game admits the following alternate description which we will later use to prove the resulting allocation is \efone. 
First, at the root $a_1$ chooses a root--leaf path and claims the corresponding good. This root--leaf path corresponds to a sequence of first moves made by $a_1$.
Removing this root--leaf path disconnects the tree into $\log m$ disjoint subtrees, each with $1, 2, 4, \ldots, m/2$ leaves. 
Agent $a_2$ then moves first on \emph{all} of these disjoint subtrees, choosing root--leaf paths, claiming $\log m$ goods, and producing still more subtrees with between $1$ and $m/4$ leaves. Agent $a_1$ then moves first on these subtrees, and this continues until all goods are claimed.
This is illustrated in \Cref{fig:binary}. 

\begin{figure}[h]
    \centering
    \begin{subfigure}[t]{0.32\textwidth}
        \centering
        \includegraphics[width=.95\linewidth]{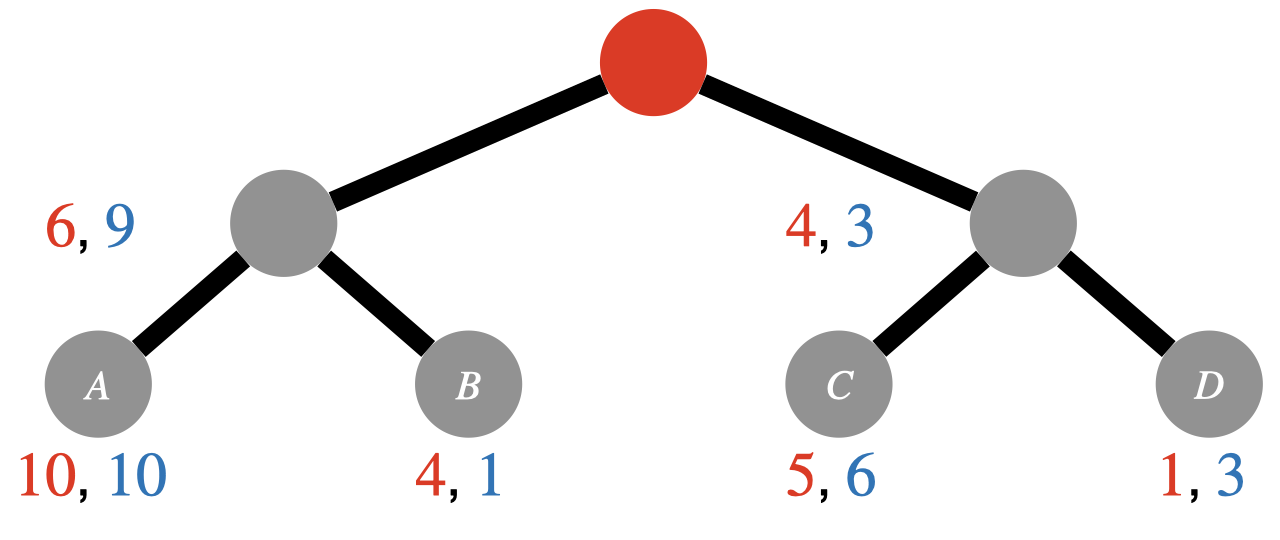}
        \subcaption{Binary tree $T$.}
        \label{subfig:binary1}
    \end{subfigure}%
    ~
    \begin{subfigure}[t]{0.32\textwidth}
        \centering
        \includegraphics[width=.95\linewidth]{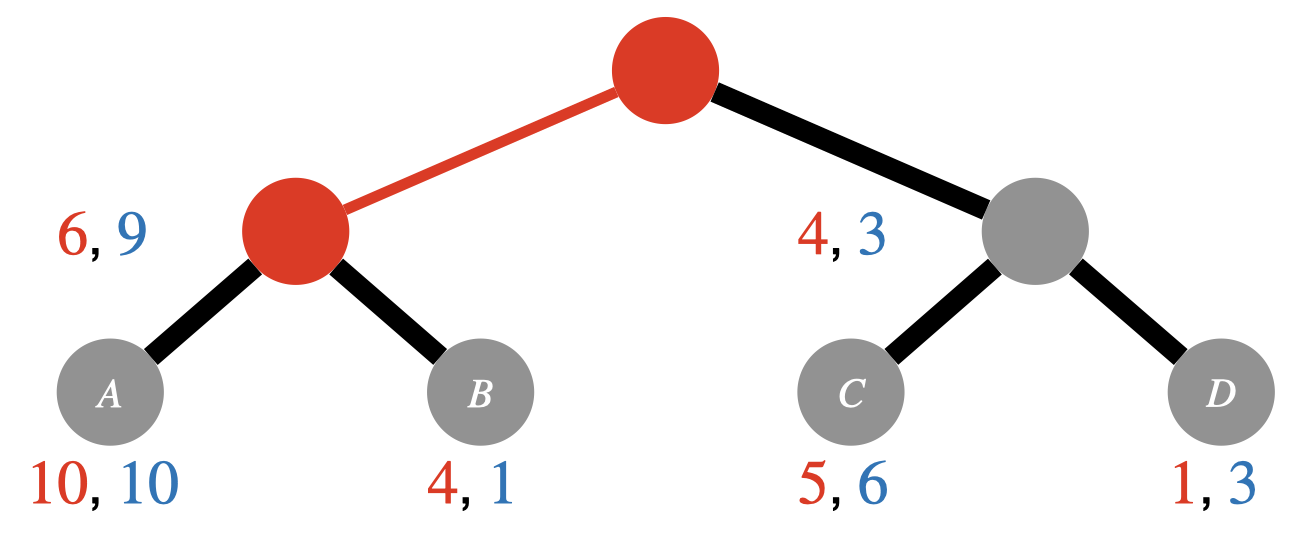}
        \subcaption{Red chooses the left subtree.}
        \label{subfig:binary2}
    \end{subfigure}%
    ~
    \begin{subfigure}[t]{0.32\textwidth}
        \centering
        \includegraphics[width=.95\linewidth]{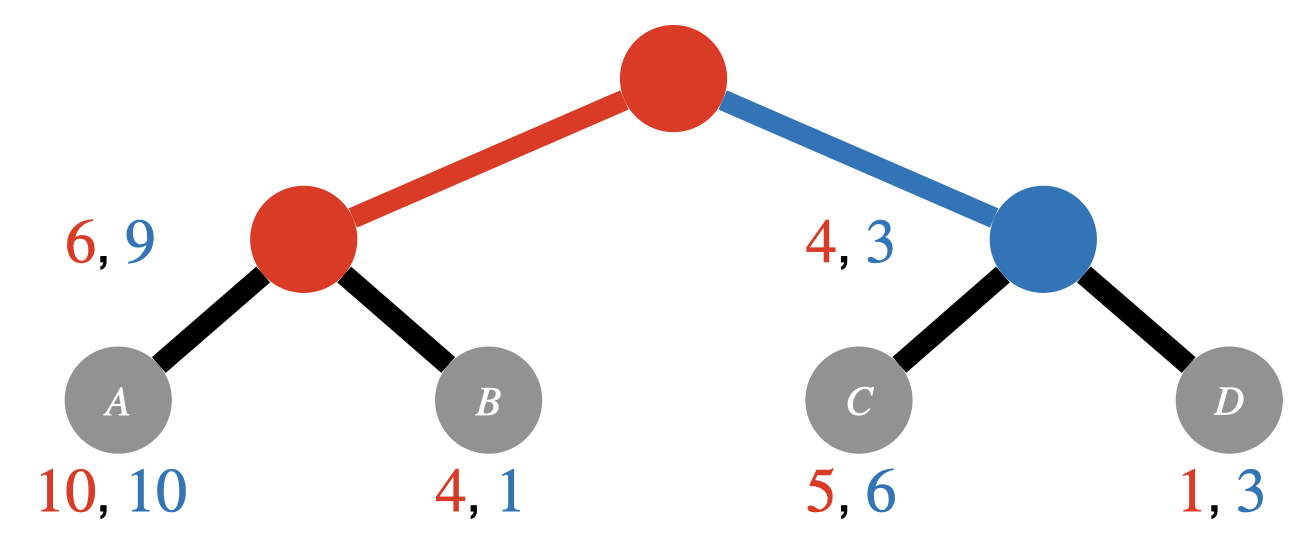}
        \subcaption{Blue chooses the right subtree.}
        \label{subfig:binary3}
    \end{subfigure}%
    
    \begin{subfigure}[t]{0.32\textwidth}
        \centering
        \includegraphics[width=.95\linewidth]{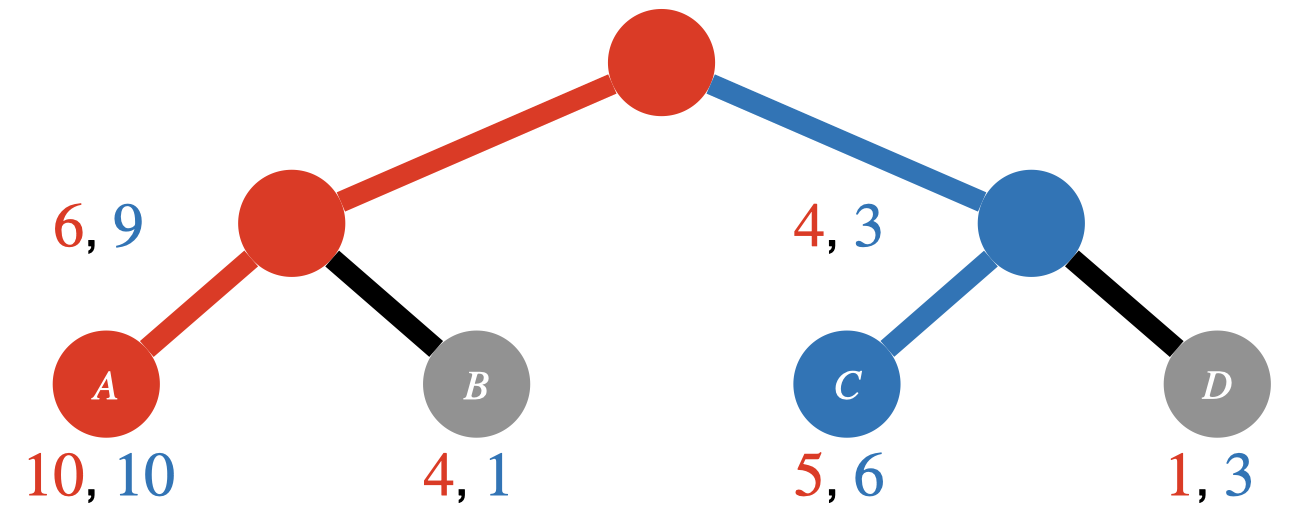}
        \subcaption{The next level of first choices.}
        \label{subfig:binary4}
    \end{subfigure}%
    ~
    \begin{subfigure}[t]{0.32\textwidth}
        \centering
        \includegraphics[width=.95\linewidth]{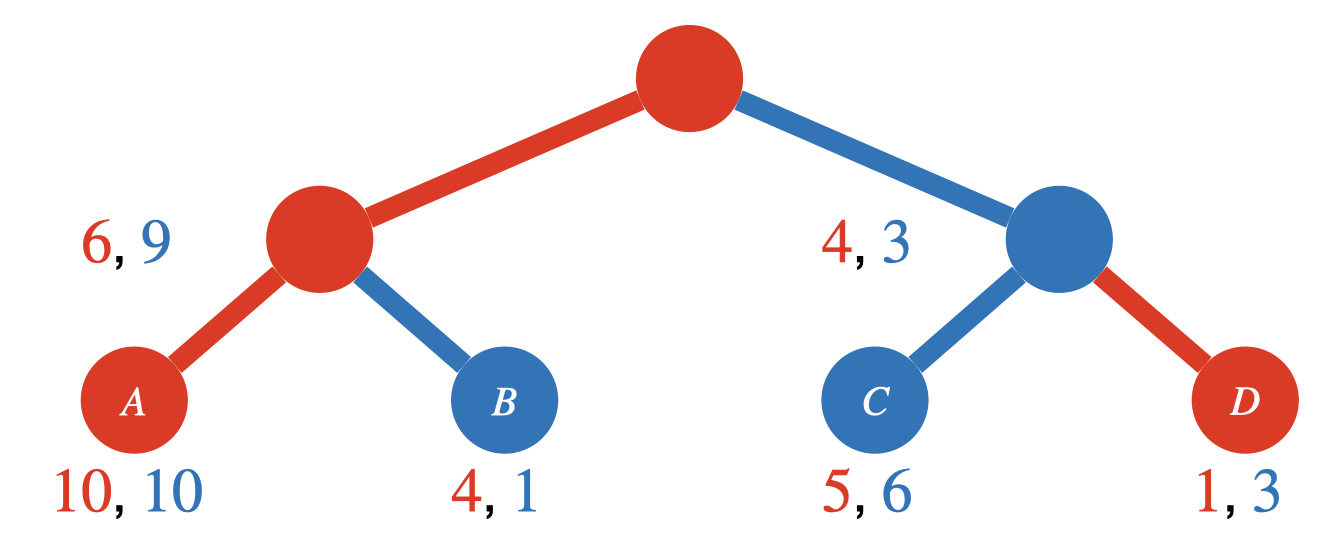}
        \subcaption{The next level of second choices.}
        \label{subfig:binary5}
    \end{subfigure}%
    ~
    \begin{subfigure}[t]{0.32\textwidth}
        \centering
        \includegraphics[width=.95\linewidth]{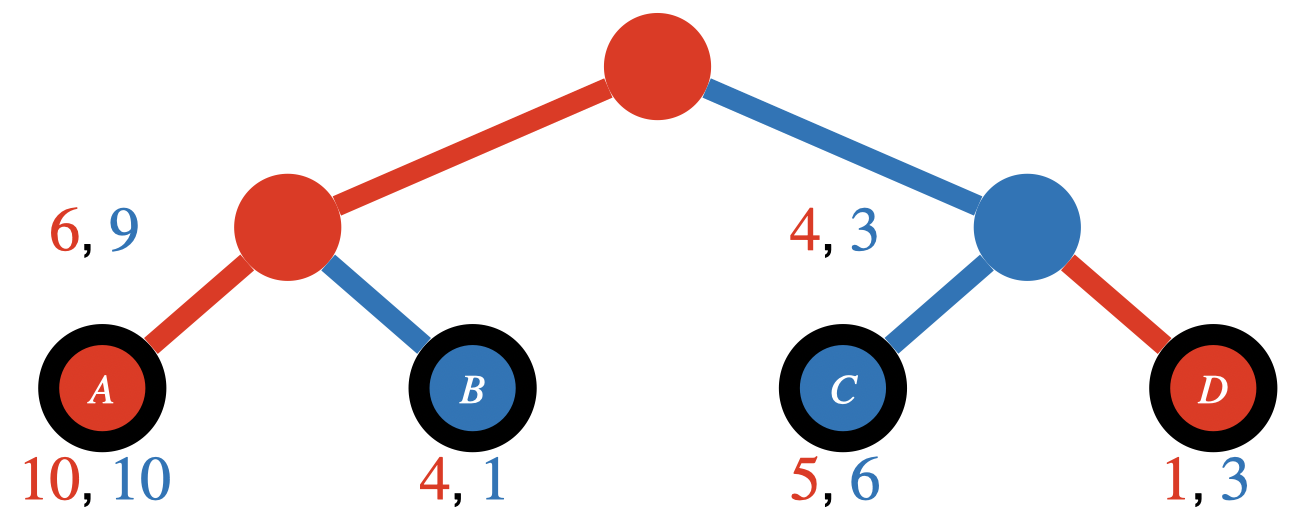}
        \subcaption{Allocation $\cA$ is given by the leaves.}
        \label{subfig:binary6}
    \end{subfigure}%
    \caption{The algorithm for two agents red and blue, where we arbitrarily let red choose first at the root. We construct a balanced binary tree $T$ whose leaves correspond to the goods, labeled $A,B,C,D$. The two agents' values of the goods and utility gaps are given in red and blue by the corresponding nodes. Each node is highlighted with the color corresponding to the agent that chooses first at it, and each edge is highlighted with the color corresponding to the agent that chose the subtree descending it. 
    } 
    \label{fig:binary}
\end{figure}
\noindent
We now describe the agent strategies that define the game. 
For $i \in \{1,2\}$, let $\neg i$ denote the agent in $\{1,2\}$ that is not $i$.
At each node $u$ the agent who moves first will choose the subtree that maximizes the following function on $T$.
\begin{definition}[Utility Gap]\label{defn:gap}
    Let $\cA_i(u,j)$ denote the allocation that agent $i$ receives if $j$ moves first at node $u$; then we denote the \defn{utility gap} for $i$ in subgame $(u,j)$ by
    \[
        \gap_i^j(u) := v_{i}(\cA_i(u,j))- v_i(\cA_{\neg i}(u,j)).
    \]
\end{definition}
\noindent In other words, $\gap_i^j(u)$ denotes how much more agent $i$ prefers their allocation to that of the other agent, in the allocation obtained if agent $j$ moves first at node $u$. (Note that $i$ and $j$ can be the same agent.) To recap: our algorithm outputs the allocation $\cA$ that arises when each agent $i$ who moves first in each subgame $u$ chooses the child in $\{L(u),R(u)\}$ that maximizes $\gap_i^i(\cdot)$, breaking ties deterministically.

When both agents play this strategy, the utility gap exhibits recursive structure; namely, the gap of one node is the sum of the gaps of its children.
\begin{observation}
\label{obs:recursive-gap-formula}
    Let $C_j(u)\in \{L(u),R(u)\}$ be the child that agent $j$ chooses when moving first at $u$ in this game, and let $D_j(u)$ be the other child.
    Then for additive utilities, the utility gap obeys
    \[
        \gap_i^j(u) = \gap_i^j(C_j(u)) + \gap_i^{\neg j}(D_j(u)).
    \]
\end{observation}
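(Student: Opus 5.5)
The identity follows by unfolding the definition of the game at $u$ and then using additivity. The game is defined recursively. When agent $j$ moves first at $u$, it takes the child $C_j(u)$ and moves first in the subgame rooted there. Agent $\neg j$ is then forced into $D_j(u)$ and moves first in the subgame rooted at $D_j(u)$. No good moves between the two subtrees, since every leaf of $T$ below $u$ lies in exactly one of $L(u)$ and $R(u)$. So the first step is to record, for each agent $k\in\{1,2\}$, the disjoint-union decomposition
\[
\cA_k(u,j) \;=\; \cA_k\bigl(C_j(u),\,j\bigr)\;\sqcup\;\cA_k\bigl(D_j(u),\,\neg j\bigr).
\]
This holds both for $k=i$ and for $k=\neg i$. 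Here $\cA_k(w,\ell)$ denotes the set of leaves below $w$ that agent $k$ receives when agent $\ell$ moves first at $w$. We also need that the play inside each child subtree depends only on that subtree and on who moves first there. This is true because the strategy at each node $w$ maximizes $\gap_\ell^\ell$ over the children of $w$, and $\gap_\ell^\ell$ is itself determined by the subgame below $w$. The whole claim therefore reduces to checking this decomposition inductively from the leaves up.

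The second step applies additivity of $v_i$ to both terms in \Cref{defn:gap}:
\[
v_i(\cA_i(u,j)) = v_i(\cA_i(C_j(u),j)) + v_i(\cA_i(D_j(u),\neg j)),
\]
and similarly for $v_i(\cA_{\neg i}(u,j))$. Subtracting the second expansion from the first and grouping the terms by subtree gives $\gap_i^j(C_j(u)) + \gap_i^{\neg j}(D_j(u))$, which is the claimed formula.

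The only step that needs care is the first: that the subgame played inside $C_j(u)$, with $j$ moving first, is the same game whose gap $\gap_i^j(C_j(u))$ was defined in isolation. I would handle this by strong induction on the level of $u$. The base case is a node whose children are leaves; there $\cA_k(w,\ell)$ is simply the leaf $w$ if $k=\ell$ and empty otherwise. In the inductive step the strategies at $u$ and below are deterministic functions of the subtree values, with deterministic tie-breaking, so the decomposition holds. No other obstacle is expected, and the observation holds for every $i$ and $j$, including $i=j$.
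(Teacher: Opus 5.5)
Your proposal is correct and follows the paper's proof. Both use the decomposition $\cA_k(u,j) = \cA_k(C_j(u),j)\sqcup\cA_k(D_j(u),\neg j)$, then additivity of $v_i$, then regrouping the terms by subtree. Your extra induction, checking that each child subgame is played exactly as it would be in isolation, is harmless; the paper leaves this implicit because it follows directly from the recursive, bottom-up definition of the game.
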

\begin{proof}
    If $j$ moves first, then
    \[
        \cA_i(u,j) = \cA_i(C_j(u),j) + \cA_i(D_j(u),\neg j).
    \]
    Then by utility additivity,
    \begin{align}
        \gap_i^j(u) &= v_i(\cA_i(u,j)) - v_i(\cA_{\neg i}(u,j)) \notag \\
        &= v_i(\cA_i(C_j(u),j) + \cA_i(D_j(u),\neg j)) - v_i(\cA_{\neg i}(C_j(u),j) + \cA_{\neg i}(D_j(u),\neg j)) \notag  \\
        &= \gap_i^j(C_j(u)) + \gap_i^{\neg j}(D_j(u)). \qedhere
    \end{align}
\end{proof}
\noindent 
To analyze the allocation $\cA$ that results from this game, we will leverage other important properties of the utility gap $\gap$. In particular, we use the fact that a agent can only achieve positive utility by going first and the amount of gap they gain by going first is at least the amount of gap they lose by going second.
\begin{lemma}\label{lemma:gap}
    For any node $u\in V(T)$ and agent $i \in \{1,2\}$, we have that $\gap_i^i(u) \geq 0$ and $\gap_i^i(u) \geq -\gap_i^{\neg i}(u)$.
\end{lemma}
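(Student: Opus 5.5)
We will prove both inequalities together by induction on the level of $u$ in $T$, using the recursion of \Cref{obs:recursive-gap-formula} and the fact that agent $i$, when moving first at $u$, picks the child maximizing $\gap_i^i(\cdot)$.

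\emph{Base case.} At a leaf there is nothing to choose, so we start at level $1$, where $u$ has two leaf children $x,y$. Moving first, agent $i$ takes its favorite $x$ and the other agent receives $y$. Hence $\gap_i^i(u)=v_i(x)-v_i(y)\ge 0$. If instead $\neg i$ moves first and takes some good, $i$ receives the other one, so $\gap_i^{\neg i}(u)\in\{v_i(x)-v_i(y),\,v_i(y)-v_i(x)\}$. Therefore $-\gap_i^{\neg i}(u)\le |v_i(x)-v_i(y)|=\gap_i^i(u)$. Equivalently, we may treat a leaf $\ell$ as a one-good subgame with $\gap_i^i(\ell)=v_i(\ell)$ and $\gap_i^{\neg i}(\ell)=-v_i(\ell)$, and the argument below then covers level $1$ automatically.

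\emph{Inductive step.} Let $u$ have children $L,R$, and assume both claims hold at $L$ and $R$. Write $C=C_i(u)$ and $D=D_i(u)$, so that $\gap_i^i(C)\ge \gap_i^i(D)$ by the choice rule.

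For the first claim, \Cref{obs:recursive-gap-formula} gives $\gap_i^i(u)=\gap_i^i(C)+\gap_i^{\neg i}(D)$. By the inductive hypothesis at $D$ we have $\gap_i^{\neg i}(D)\ge -\gap_i^i(D)$, and by the choice rule $\gap_i^i(C)\ge\gap_i^i(D)$. Combining,
\[
\gap_i^i(u)\ge \gap_i^i(C)-\gap_i^i(D)\ge 0 .
\]

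For the second claim, let $C'=C_{\neg i}(u)$ and $D'=D_{\neg i}(u)$ be the children chosen and left by $\neg i$. Then $\gap_i^{\neg i}(u)=\gap_i^{\neg i}(C')+\gap_i^{i}(D')$, and we must show $\gap_i^i(u)+\gap_i^{\neg i}(u)\ge 0$. This step looks like the main obstacle, because $\neg i$ chooses according to its own gap, which is unrelated to $v_i$. The resolution is that the bound does not depend on which child $\neg i$ picks, since $\{C',D'\}=\{L,R\}=\{C,D\}$. We split into two cases.

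\emph{Case $C'=D$, $D'=C$.} Then
\[
\gap_i^i(u)+\gap_i^{\neg i}(u)=\gap_i^i(C)+\gap_i^{\neg i}(D)+\gap_i^{\neg i}(D)+\gap_i^i(C)=2\bigl(\gap_i^i(C)+\gap_i^{\neg i}(C)\bigr)-2\gap_i^{\neg i}(C)+2\gap_i^{\neg i}(D).
\]
It is simpler to regroup directly as $\bigl(\gap_i^i(C)+\gap_i^{\neg i}(D)\bigr)\cdot 2$. Since $\gap_i^i(C)\ge\gap_i^i(D)\ge -\gap_i^{\neg i}(D)$, the sum is nonnegative.

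\emph{Case $C'=C$, $D'=D$.} The sum equals $\bigl(\gap_i^i(C)+\gap_i^{\neg i}(C)\bigr)+\bigl(\gap_i^i(D)+\gap_i^{\neg i}(D)\bigr)$, and each bracket is nonnegative by the inductive hypothesis.

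In both cases $\gap_i^i(u)\ge -\gap_i^{\neg i}(u)$, which completes the induction.
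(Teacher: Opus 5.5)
Your proof is correct and follows the paper's argument essentially step for step. It uses induction over $T$, obtains the first claim from the choice rule combined with the second claim at $D_i(u)$, and splits the second claim into two cases: the no-contention case, where your computation amounts to $\gap_i^{\neg i}(u)=\gap_i^i(u)$ (the paper reads this off from the two allocations coinciding), and the contention case, where you sum the inductive hypothesis at both children.
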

\begin{proof}  
    We show this by induction on the number of levels of the binary tree. On a $1$-level tree (i.e.\ a leaf node $g$), 
    suppose that $i$ chooses first; then $i$ chooses the good and $\neg i$ gets nothing; hence
    \[
        \gap_i^i(g) = v_i(\cA_i(g,i)) - v_i(\cA_{\neg i}(g, i)) = v_i(g) - v_i(\emptyset) = v_i(g) \geq 0.
    \]
    If $\neg i$ moves first then they take the good, and so
    $\gap_i^{\neg i}(g) = v_i(\emptyset) - v_i(g) = - v_i(g)$. Hence the second claim holds on 1-level trees with equality. 

    For the inductive step, suppose $u$ is an internal node with children $L$ and $R$, and the inductive hypotheses apply for all $i\in \{1,2\}$ on $L$ and $R$.
    We first address gap nonnegativity. 
    When both agents play the prescribed strategy, \Cref{obs:recursive-gap-formula} implies
    \begin{align*}
        \gap_i^i(u) &= \gap_i^i(C_i(u)) + \gap_i^{\neg i}(D_i(u)) \\
        &\geq \gap_i^i(D_i(u)) + \gap_i^{\neg i}(D_i(u)) \\
        &\geq 0,
    \end{align*}
    where the last step follows from the inductive hypothesis (second claim) on $i$ and $D_i(u)$.

    We now show inductive step for the second claim.
    First, suppose there is no contention: $i$ and $\neg i$ choose different subtrees, and so $C_i(u) = D_{\neg i}(u)$. Then
    \[
        \cA_i(u,i) = \cA_i(C_i(u),i) + \cA_i(D_i(u), \neg i) = \cA_i(D_j(u),i) + \cA_i(C_{\neg i}(u),\neg i) = \cA_i(u,\neg i),
    \]
    (and similarly for $\cA_{\neg i}(u, i) = \cA_{\neg i}(u,\neg i)$), and so $\gap_i^i(u) = \gap_i^{\neg i}(u)$ by the definition of the utility gap.
    Since $\gap_i^i(u) \geq 0$, the claim holds.

    For the remaining case, suppose $i$ and $\neg i$ share a preferred subtree: $C_i(u) = C_{\neg i}(u)$. 
    Then 
    \begin{align*}
        \gap_i^i(u) &= \gap_i^i(C_i(u))+\gap_i^{\neg i}(D_i(u)) 
        && \text{by \Cref{obs:recursive-gap-formula} }  \\
        &\geq -\gap_i^{\neg i}(C_i(u)) + \gap_i^{\neg i}(D_i(u))
        && \text{by hypothesis on $C_i(u)$} \\
        &\geq -\gap_i^{\neg i}(C_i(u)) - \gap_i^i(D_i(u)) 
        && \text{by hypothesis on $D_i(u)$} \\
        &= - \left(\gap_i^{\neg i}(C_{\neg i}(u)) + \gap_i^i(D_{\neg i}(u)) \right)
        && \text{by shared preferred subtree} \\
        &= -\gap_i^{\neg i}(u) 
        && \text{by \Cref{obs:recursive-gap-formula}},
    \end{align*}
    where for the inequalities we applied the inductive hypothesis (second claim) first on $C_i(u)$, then on $D_i(u)$.
    This concludes the proof.
    %
    %
    %
    %
\end{proof}
\noindent We are ready to prove that the algorithm obtains an \efone allocation.
\twoagents*
\begin{proof}
    We will show that the allocation found by the tree game is \efone for two agents with additive utilities, and moreover that its outcome can be computed in $O(m)$ work and $O(\log m)$ depth.

    Let $i$ be the agent who plays first at the root $r$. 
    By \Cref{lemma:gap} applied to the root node, we have 
    \[
        \gap_i^i(r) = v_{i}(\cA_i(r,i))- v_i(\cA_{\neg i}(r,i)) \geq 0.
    \]
    Since $\cA_{\neg i}(r, i) = M \setminus \cA_i(r,i)$ is precisely the other agent's final allocation according to $A$, this means that $i$ prefers $i$'s allocation to their opponent's.

    To establish \efone, we must therefore show that $\neg i$ envies $i$ up to at most one good. 
    To see this, consider the root-leaf path of first choices that $i$ makes, which leads to a single good. 
    Removing this good $g^*$ and corresponding path splits the game into $\log m$ disjoint subgames $r_1, \ldots, r_{\ell}$, where $\neg i$ plays first in all of them. 
    Letting $M(u)$ be the goods in the subtree of $u$, by the argument above we have $v_{\neg i}(\cA_{\neg i}(r_k,\neg i))- v_{\neg i}(\cA_{i}(r_k,\neg i)) \geq 0$ for all $k \in [\ell]$; summing over the disjoint subgames and using the additivity of utility,
    \begin{align*}
        v_{\neg i}(\cA_{\neg i}(r,i)) = \sum_k v_{\neg i}(\cA_{\neg i}(r_k,\neg i)) 
        \geq \sum_k v_{\neg i}(\cA_{i}(r_k,\neg i)) = v_{\neg i}(\cA_{i}(r, i) \setminus \{g^*\}).
    \end{align*}
    Since $\neg i$ does not envy $i$'s allocation when $g^*$ is withheld, $\neg i$ envies $i$ up to at most one good in the final allocation.

    Finally, the outcome of the game can be computed in depth $O(\log m)$ and work $O(m)$ by computing for each node $u\in V(T)$ the values $\gap^1_1(u),\gap_1^2(u),\gap_2^1(u),\gap_2^2(u)$ via a bottom-up dynamic program over $T$, where $T$ is a tree with depth $\log m$ and $|V(T)|=O(m)$.
\end{proof}
\subsection{An \texorpdfstring{$O(\log m)$}{O(logm)}-Depth Algorithm for Graph Instances}
\label{subsec:graph}
As an application of our faster two-agent algorithm, we give an equally fast \NC algorithm for instances in which each good is positively valued by at most two agents. By \Cref{defn:graphhypergraph}, these instances induce (non-simple) graphs, where the vertices correspond to agents and edges correspond to goods. In particular, there may be parallel edges and self-loops. Also note that for these instances we do not constrain the number of agents $n$, and it can be arbitrarily large.
While an \NC algorithm for two agents was previously known, it was not observed that this could be used to obtain \NC algorithms for graph-structured instances. 
\begin{corollary}[Graphs]
    There is an algorithm for allocation instances where each good is positively valued by at most $2$ agents that outputs an \efone allocation with depth $O(\log m)$ and work $O(m)$.
\end{corollary}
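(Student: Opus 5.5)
Our plan is to reduce the graph instance to many independent two-agent instances, one per pair of adjacent agents, and to run the algorithm of \Cref{thm:2-agents} on each of them simultaneously. First we preprocess. Goods valued by no agent are given to an arbitrary agent, which creates no envy because nobody values them. A self-loop at agent $i$, meaning a good positively valued only by $i$, is given to $i$. This does not increase anyone's envy toward $i$ beyond what it otherwise would be, since no other agent values the good, and it only helps $i$. The remaining goods are edges $\{i,j\}$ with $i\neq j$. We group them by endpoint pair, writing $E_{ij}$ for the set of goods valued positively by exactly $i$ and $j$. Since a stable parallel sort or semisort on the pair key produces this grouping, we can use \Cref{lemma:parallel_filter}-type primitives. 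In the purely deterministic setting a sort costs $O(m\log m)$ work, so we would instead bucket by pair index with a hashing or semisorting primitive in $O(m)$ work and $O(\log m)$ depth. Making this bookkeeping step linear-work is the main place to be careful.

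Next, for every pair $\{i,j\}$ we run the two-agent tree game of \Cref{thm:2-agents} on $E_{ij}$ with agents $i,j$, and all pairs run in parallel. Because $\sum_{ij}|E_{ij}|\le m$, the total work is $O(m)$. The depth is $O(\log \max|E_{ij}|)=O(\log m)$. Each good is then assigned according to the outcome of its own pair game.

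The main step is proving that the combined allocation is \efone. Fix agents $i\neq j$. Every good in $\cA(j)$ that $i$ values positively must lie in $E_{ij}$. Goods of $\cA(j)$ from other pairs, or from self-loops of $j$, have $v_i=0$, because each good is valued by at most its two endpoints. Hence $v_i(\cA(j))=v_i(\cA(j)\cap E_{ij})$. On the other side, $v_i(\cA(i))\ge v_i(\cA(i)\cap E_{ij})$ by nonnegativity and additivity. The two-agent result applied to $E_{ij}$ gives a good $s$ with $v_i(\cA(i)\cap E_{ij})\ge v_i((\cA(j)\cap E_{ij})\setminus\{s\})$. Chaining these inequalities gives \efone for the pair $(i,j)$. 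If $E_{ij}=\emptyset$, then $i$ values nothing in $\cA(j)$, and the condition holds trivially; an empty bundle for $j$ can be handled by the convention that removing a good from an empty set leaves the empty set.

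The only real obstacle is achieving the $O(m)$ work and $O(\log m)$ depth bound for the grouping step together with the padding to powers of two. Padding each $E_{ij}$ to a power of two at most doubles its size, so the total size stays $O(m)$. The fairness argument itself is a direct consequence of \Cref{thm:2-agents} and additivity.
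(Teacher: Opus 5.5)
Your proposal is correct and matches the paper's proof: self-loops go to their owner, the two-agent algorithm of \Cref{thm:2-agents} runs in parallel on each $E_{ij}$, and \efone follows because $i$ gives zero value to everything in $\cA(j)$ outside $E_{ij}$. The paper does not address the grouping and padding bookkeeping at all and simply treats the $E_{ij}$ as given; the linear-work hashing or semisorting you invoke is randomized, so your work bound holds only in expectation unless the input already arrives grouped by agent pair.
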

\begin{proof}
    We may assume that there are no goods that are positively valued by $0$ agents, and we observe that goods that are positively valued by exactly $1$ agent correspond to self-loops in $G$. 
    
    For a vertex $i$, let $E_i$ be the set of self-loops on $i$, and for a pair of vertices $i,j$, let $E_{i,j}$ be the set of parallel edges between them. Then for each vertex $i$, the algorithm allocates $E_i$ to agent $i$ obtaining an allocation $\cA_i$, and for each unordered pair of vertices $i,j$, the algorithm runs the algorithm for two agents (\Cref{thm:2-agents}) in parallel on the instance $\cI_{i,j}:=\{E_{i,j}, \{i,j\},\{v_i,v_j\}\}$ to get an allocation $\cA_{i,j}$ of the goods $E_{i,j}$ to the agents $i,j$. The algorithm returns $\cA:=\bigcup_i\cA_i \cup \bigcup_{i,j}\cA_{i,j}$. Allocating the self-loops can be done in constant time, so by \Cref{thm:2-agents}, the algorithm in total has depth $O(\log m)$ and work $\sum_{i,j}O(|E_{i,j}|)\leq O(m)$.

    It remains to show that the algorithm outputs an \efone allocation. Fix any pair of agents $i,j$. Agent $i$ has $0$ value for any goods not in $E_{i,j}$ that were allocated to agent $j$ and vice-versa, since each good is valued by at most $2$ agents in this instance. Then by \Cref{thm:2-agents}, we have that there is a good $g\in \cA(j)\cap E_{i,j}$ such that $ v_i(\cA(i))\geq v_i(\cA(i)\cap E_{i,j})\geq v_i(\cA(j)\cap E_{i,j}\setminus\{g\}) = v_i(\cA(j)\setminus\{g\})$, and vice-versa. 
\end{proof}

\section{An \NC Algorithm for \texorpdfstring{$O(1)$}{O(1)} Agents}\label{subsec:constantagents}

In this section, we show that computing an \efone allocation when $n=O(1)$ is in \NC, improving upon the algorithm of \cite{garg2023fairly} which handled only $n\leq 3$. We then apply this result to obtain \NC algorithms for sparse hypergraph instances and, in particular, \NC algorithms when each agent values at most $\polylog(m)$ goods and each good is valued by $O(1)$ agents. 

More specifically, rather than prove \Cref{thm:reachabilityreduction} (our \NC algorithms for $O(1)$ agents), we prove the more general result below.
\begin{restatable}{theorem}{reachabilityreductionGen}\label{thm:reachabilityreductionGen}
    There is an $O(n^2\log^2 m)$ depth and $O(nm^{n+1})$ work algorithm that, given agents with additive valuations and a fixed ordering on these agents, outputs the \efone allocation of Round Robin with this ordering.
\end{restatable}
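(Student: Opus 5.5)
Fix the order $1,2,\dots,n$ and give each agent a strict preference list (ties broken by index). The plan follows the introduction: model Round Robin as a walk in a configuration graph of polynomial size, then recover the unique path from the initial configuration with parallel tree contraction.

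\textbf{Configurations.} A configuration records the agent $t$ about to pick, together with a vector $(f_1,\dots,f_n)$, where $f_j$ is the position in agent $j$'s preference list of its favorite remaining good. The key structural claim is that the remaining set of goods is determined by this vector. I would prove by induction over turns that, at any moment of Fixed-Order Round Robin, a good $g$ is taken if and only if $g$ lies strictly above position $f_j$ in some agent $j$'s list. The forward direction holds because every good above $j$'s current favorite must be gone. For the converse, every taken good was taken by some agent $j$ as that agent's favorite at the time. That good therefore lies above $j$'s current favorite position, since favorite positions only move down.

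Hence the remaining set is $R(f)=\{g:\ \mathrm{pos}_j(g)\ge f_j \text{ for all } j\}$. The number of configurations is at most $n(m+1)^n$, which is polynomial for constant $n$.

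\textbf{Transitions.} For each configuration $(t,f)$, in parallel, agent $t$ takes the good $g$ at position $f_t$ of its list. The new vector $f'$ is computed per agent $j$ as the smallest position $p\ge f_j$ whose good lies in $R(f)\setminus\{g\}$. Membership of a single good in $R(f)$ costs $O(n)$ comparisons. Each $f'_j$ is then a parallel minimum over $O(m)$ candidates, using $O(nm)$ work and $O(\log m)$ depth. Over all $n$ agents this gives $O(n^2 m)$ work per configuration. I would need to tighten this to meet the stated $O(nm^{n+1})$ total, which may require charging more carefully or a simpler update rule. Each configuration then gets a single out-edge to $(t+1 \bmod n, f')$, labelled by the pair $(t,g)$. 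Terminal configurations, where the goods are exhausted, get a self-marker and no out-edge.

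\textbf{Recovering the path.} Since out-degree is at most one and $f$ strictly progresses (one good is removed at each step), the graph is acyclic and hence a forest whose roots are terminal configurations. The actual run is the node-to-root path from the initial configuration $(1,(1,\dots,1))$. I would apply \Cref{lemma:treecontraction} with the associative aggregate of list concatenation of labels, or with one counter per agent, giving each good on the path to its chooser. This costs $O(\log^2 N)$ depth on $N=O(nm^n)$ nodes, that is $O(n^2\log^2 m)$ depth after absorbing $\log N=O(n\log m)$, and work linear in $N$. The aggregate can alternatively be realised by marking path nodes and filtering them with \Cref{lemma:parallel_filter}. The output is exactly Round Robin's allocation, which is \efone by the classical argument.

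\textbf{Main obstacle.} The main obstacle is the characterization lemma that the remaining set equals $R(f)$. It must be checked carefully in the presence of incomplete lists and zero-valued goods, which occupy the tails of the lists, and when some agent's list is exhausted, which I would encode as $f_j=m+1$. The second concern is matching the precise work bound.
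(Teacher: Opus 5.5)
Your architecture is the paper's: a polynomial-size configuration space indexed by preference-list positions, an out-degree-one successor graph built in parallel, and tree contraction plus a parallel filter to extract the path from the initial configuration. Your characterization of the remaining goods as $R(f)$ is correct and plays the role of the paper's $\mathcal{P}(C)$. The paper instead indexes each agent by one past its last pick, and makes each edge a full round rather than a single pick.

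The genuine gap is the one you flagged yourself. As written you have $n(m+1)^n$ states, each costing $O(n^2m)$ work, for $O(n^3m^{n+1})$ in total. That does not prove the stated $O(nm^{n+1})$ bound, which tracks $n$ explicitly. Two separate factors of $n$ have to go.

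\textbf{The scan factor.} Do not redo the $O(n)$ membership test inside each agent's scan. Precompute the inverse permutations $\mathrm{pos}_j$ once. Then compute the indicator vector of $R(f)$ for all $m$ goods once per configuration, in $O(nm)$ work. Each $f'_j$ becomes a minimum over $O(m)$ positions with $O(1)$ lookups, so a configuration costs $O(nm)$.

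\textbf{The turn-index factor.} The turn index is redundant. On reachable states the mover is agent $((m-|R(f)|)\bmod n)+1$, and $|R(f)|$ comes from the same indicator vector. So index nodes by $f$ alone, which leaves $(m+1)^n$ nodes. Alternatively, follow the paper and use one edge per round. Filter $\mathcal{P}(C)$ out of every list once, in $O(nm)$ work and $O(\log m)$ depth. Then simulate the $n$ picks sequentially in $O(n^2)$, since each agent skips at most $n-1$ entries of its filtered list. Either way the total is $O(m^n)\cdot O(nm+n^2)=O(nm^{n+1})$ for $n\le m$.

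\textbf{Two smaller repairs.} First, your acyclicity reason (one good is removed per step) only holds on reachable states. For an arbitrary $f$, the good $\pi_t(f_t)$ need not lie in $R(f)$, and then $R(f')=R(f)$. Use instead that $f'\ge f$ coordinatewise with $f'_t>f_t$, so $\sum_j f_j$ strictly increases on every edge. This is what makes the whole graph, unreachable nodes included, a forest for tree contraction. Second, use the mark-and-filter extraction rather than list concatenation. Concatenated lists are not constant-size aggregates, so they would void the linear-work guarantee of \Cref{lemma:treecontraction}.
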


\noindent It is clear that \Cref{thm:reachabilityreduction} follows immediately from \Cref{thm:reachabilityreductionGen} so for the rest of the section we focus on proving \Cref{thm:reachabilityreductionGen}. We will then apply \Cref{thm:reachabilityreductionGen} to obtain the aforementioned sparse hypergraph results.

At a high level, we view Fixed-Order Round Robin as a deterministic process on a polynomial-sized state space of partial allocations. We show that, given a configuration describing the current positions in the agents' preference lists, the next configuration after one round of Round Robin with a fixed agent ordering on the residual instance can be computed in $O(\log m)$ depth and $O(m)$ work. 
We will precompute in parallel the successor of every valid configuration, of which there are $O(m^n)$.
This defines a directed graph, which we refer to as the \emph{reachability graph}, in which every configuration has out-degree 1, and the configurations reachable from the source form a directed path corresponding exactly to the Round Robin execution. We then recover this path in parallel using parallel tree contraction.

\paragraph*{Notation.}
Let $\pi_i = (\pi_i(1), \pi_i(2), \ldots, \pi_i(m))$ denote the preference list of agent $i$, where $\pi_i(j)$ is the $j$th preferred good of agent $i$. We define a \defn{configuration} to be a tuple $C=(c_1,c_2,\ldots,c_n)$ of indices with $c_i\in [m+1]$ for each $i\in [n]$, which we interpret as a partial allocation state, or residual instance, in which:
\begin{itemize}
    \item Agent $i$ picks goods according to its \defn{residual preference list} $(\pi_i(c_i), \pi_i(c_i+1),\ldots, \pi_i(m))$, i.e., the suffix of $\pi_i$ starting at position $c_i$.
    \item Every good appearing in the prefix $(\pi_i(1), \pi_i(2), \ldots, \pi_i(c_i-1))$ for any agent $i$ is assumed to be already allocated. We denote this set of pre-allocated goods by
    \begin{align*}
        \cP(C):= \bigcup_{i=1}^n \left\{\pi_i(1),\pi_i(2),\ldots, \pi_i(c_{i}-1)\right\}.
    \end{align*}
\end{itemize}
Observe that the configuration $(1,1,\ldots,1)$ corresponds to the initial state (no goods allocated).

\paragraph*{Simulating a Single Round of Round Robin.}
To construct the edges of the reachability graph, we must determine, for each configuration $C=(c_1,c_2,\ldots,c_n)$, the configuration obtained after one round of Fixed-Order Round Robin on the corresponding residual instance. Given such a configuration $C$, let $C'=(c'_1,\ldots,c'_n)$ denote its successor configuration, where, if agent $i$ is assigned the good $\pi_i(t_i)$ during this round, then $c'_i=t_i+1$. Note that if any agent is not assigned a good, then the configuration $C$ has no successor. The next lemma shows that this successor configuration can be computed efficiently in parallel.

\begin{lemma}\label{lemma:single_round_residual}
Given a configuration $C$, the configuration $C'$ obtained by simulating one round of Fixed-Order Round Robin with ordering $\sigma$ on the agents on the residual instance can be computed in $O(\log m + n^2)$ depth and $O(nm + n^2)$ work.
\end{lemma}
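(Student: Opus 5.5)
Given $C$, we simulate the $n$ picks of the round one after another in the order $\sigma$. The point is that each pick can be done in $O(\log m + n)$ depth with one parallel scan over the agent's preference list. In total this is $n$ sequential steps, each being one $O(\log m)$-depth, $O(m)$-work scan plus $O(n)$ bookkeeping, which gives the claimed $O(\log m + n^2)$ depth and $O(nm+n^2)$ work. For this accounting, the $n$ scans have to be precomputed in parallel and only the $O(n)$-depth bookkeeping kept sequential.

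\emph{Precomputation.} First we compute the pre-allocated set $\cP(C)$ as a Boolean array of length $m$. Every good $g$ checks in parallel whether $\mathrm{rank}_i(g) < c_i$ for some agent $i$, using inverse permutations of the $\pi_i$ (computed once in $O(1)$ depth and $O(nm)$ work). This takes $O(nm)$ work and $O(\log n)$ depth. Next, for every agent $i$, we find the available goods in its residual list $\pi_i(c_i),\ldots,\pi_i(m)$ that are not in $\cP(C)$. Within a single round at most $n-1$ additional goods can be taken before agent $i$ picks, so agent $i$ only needs its first $n$ available residual goods. We obtain this list $L_i=(\ell_{i,1},\ldots,\ell_{i,n})$ by applying the parallel filter (\Cref{lemma:parallel_filter}) to the residual list with predicate ``not in $\cP(C)$'' and truncating to $n$ entries. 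This costs $O(\log m)$ depth and $O(m)$ work per agent, and all agents run in parallel, for $O(nm)$ work in total. If some $L_i$ has fewer available entries than the number of agents picking before or at $i$'s turn, the pick might fail, and we report that $C$ has no successor.

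\emph{Sequential picks.} We then process the agents in order $\sigma(1),\ldots,\sigma(n)$, maintaining the set $S$ of at most $n$ goods picked so far in this round. Agent $i=\sigma(k)$ takes the first entry of $L_i$ that is not in $S$. It has to check at most $k\le n$ candidates against at most $n$ picked goods; done in parallel, each step has $O(\log n)$ depth, or $O(n)$ naively, and $O(n)$ work per candidate batch. Summed over the round, this is $O(n^2)$ depth and $O(n^2)$ work. Finally we set $c'_i = \mathrm{rank}_i(\text{picked good}) + 1$.

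\emph{Correctness.} The good agent $i$ picks must be its highest-ranked good outside $\cP(C)\cup S$, and a good outside $\cP(C)$ at residual rank greater than the $n$-th such good can never be chosen. The one subtle point is consistency: we must ensure that $\cP(C)$, as defined by the prefixes, is exactly the set of goods allocated in the residual view. This holds by the definition of a configuration's residual instance.

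The main obstacle is the bound on work. If the $k$-th agent naively rescanned its full list each time, the work would be $O(nm)$ per agent but the depth would gain a $\log m$ factor per pick, giving $O(n\log m)$ depth. The truncated lists $L_i$ avoid this: all $O(\log m)$-depth work happens once, in parallel across agents, and only $O(n)$-size operations remain sequential.
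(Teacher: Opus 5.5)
Your proposal is correct and takes essentially the same approach as the paper: compute $\cP(C)$, run one parallel filter per agent (all agents in parallel) to remove pre-allocated goods, then simulate the round sequentially, using that each agent skips at most $n-1$ goods taken earlier in the round. The truncation to $n$ entries and the upfront no-successor test are harmless; the test is valid because every filtered list is exactly $M\setminus\cP(C)$ in $\pi_i$-order. To actually get $O(n^2)$ work in the sequential phase, test membership in $S$ by marking picked goods in your Boolean array rather than by comparing each candidate against all of $S$.
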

\begin{proof}
    In order to simulate a single round of Round Robin, we first update the residual preference lists of the agents by removing all pre-allocated goods corresponding to configuration $C$. We begin by computing the set $\cP(C)$. Then, for each agent, we filter these goods out from its preference list using a parallel filter. By \Cref{lemma:parallel_filter}, this takes $O(\log m)$ depth and $O(m)$ work per agent, for a total of $O(\log m)$ depth  and $O(nm)$ work overall. 

    Given the filtered preference lists, we then simulate one round of Round Robin sequentially according to $\sigma$. Observe that, in a filtered list, an agent only needs to skip over goods picked by agents preceding it in the current round. Since there are at most $n-1$ such goods, each agent advances by at most $n$ positions in its filtered list. It follows that this sequential simulation takes $O(n^2)$ depth and work. During this step, we either determine the good assigned to each agent, and hence the successor configuration $C'$, or detect that some agent is not assigned any good, in which case no successor exists.

    Overall, the total depth is $O(\log m+n^2)$ and the total work is $O(nm+n^2)$.
\end{proof}

\paragraph*{Reachability Graph.}
We construct a directed graph $G=(V,E)$ whose vertices encode configurations and whose edges encode valid Round Robin transitions, as follows:

\begin{itemize}
    \item The vertex set $V$ contains one node for every possible configuration $C \in [m+1]^n$. The source node corresponds to the configuration $s:=(1,1,\ldots,1)$. The total number of vertices is $|V|=(m+1)^n = O(m^n)$.
    \item Each node $u \in V$, corresponding to a configuration $C$, stores the partial allocation produced by simulating one round of Fixed-Order Round Robin with fixed agent order $\sigma$ on the residual instance defined by $C$.
    \item For any two nodes $u,v \in V$, there is a directed edge from $u$ to $v$ if and only if the Round Robin simulation at $u$ allocates a good to every agent, and $v$ is the resulting successor configuration. If some agent is not assigned a good in that round, then we do not add any out-edge from $u$.
\end{itemize}
Since Fixed-Order Round Robin is deterministic, every node has outdegree at most one. Moreover, every transition allocates at least one new good, and hence strictly increases the number of allocated goods. It follows that the subgraph reachable from $s$ is a directed path. One can recover this path efficiently using parallel tree contraction as follows.

\begin{lemma}\label{lem:tree_contraction_path}
Let $G=(V,E)$ be the reachability graph, and let $s$ be the source configuration. Then the unique directed path in $G$ obtained by repeatedly following outgoing edges from $s$ can be computed in $O(\log^2 |V|)$ depth and $O(|V|)$ work.
\end{lemma}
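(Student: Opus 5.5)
The plan is to turn $G$ into a rooted forest and reduce path recovery to a node-to-root query, which \Cref{lemma:treecontraction} handles. Every node of $G$ has outdegree at most one, and every edge strictly increases the number of allocated goods. Concretely, the potential $|\cP(C)|$ increases along each edge whose source lies on a real Round Robin trajectory. For arbitrary configurations, which may be inconsistent, I would instead use the potential $\sum_i c_i$: each transition sets $c'_i = t_i+1 > c_i$ for every agent, so the sum strictly increases. Hence $G$ is acyclic, and the functional graph given by the out-edges is a forest in which each node's parent is its unique out-neighbour. Roots are the nodes with no out-edge. Orienting each edge from child to parent therefore gives a rooted forest on $V$. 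Adding a virtual super-root, made the parent of every sink, turns it into a single rooted tree on $|V|+1$ vertices. Building the parent array is immediate from the stored successors, in $O(1)$ depth and $O(|V|)$ work.

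Next comes the path itself. The walk from $s$ obtained by following out-edges is exactly the node-to-root path from $s$ in this tree. I would mark each vertex by whether it lies on this path. The key observation is that $v$ lies on the path from $s$ to the root iff $v$ is an ancestor of $s$, iff $s$ lies in the subtree of $v$. Equivalently, I would run the node-to-root path query with the associative aggregate ``sum'' over indicator weights. Assign weight $1$ to every vertex and compute, for every vertex $v$, its depth (the sum of weights along its path to the root) using \Cref{lemma:treecontraction}. Also compute, for each vertex, whether its subtree contains $s$. This is an OR aggregate over the subtree, which is likewise supported by tree contraction, and it also follows from a query with a single weight $1$ placed on $s$ and summed over subtrees. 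Both steps cost $O(\log^2|V|)$ depth and $O(|V|)$ work.

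Finally, the path vertices are those $v$ whose subtree contains $s$. They are ordered by decreasing depth, starting with $s$. I would place each such $v$ at index $\mathrm{depth}(s)-\mathrm{depth}(v)$ of an output array, which takes $O(1)$ depth per vertex. A parallel filter (\Cref{lemma:parallel_filter}) then extracts the path in $O(\log|V|)$ depth and $O(|V|)$ work. The partial allocations stored at these nodes can be concatenated along the path to give the Round Robin allocation.

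The step needing most care is the justification that $G$ has no cycles on arbitrary, possibly unreachable configurations, since the tree structure depends on it. The $\sum_i c_i$ potential settles this. The other subtle point is that \Cref{lemma:treecontraction} is stated for node-to-root queries rather than subtree queries. If I want to avoid subtree queries, I would instead compute ancestor membership via Euler-tour-free reasoning: mark $s$, then propagate the flag ``some descendant is $s$'' using the uncontraction phase. Alternatively, I could simply use depths: label each vertex with its root's identity via the same node-to-root machinery, since ``min'' is associative. Then pointer-jump from $s$ through the parent array, recording depth-indexed ancestors in $O(\log|V|)$ doubling rounds. That recovers the ancestors of $s$ within the stated bounds up to work, and I would settle for this if the subtree aggregate proved awkward.
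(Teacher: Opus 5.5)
Your proposal is correct and is essentially the paper's argument. Both turn the out-degree-one acyclic graph into a rooted tree by adding a super-sink, and both observe that the walk from $s$ is exactly the set of ancestors of $s$. Both then recover these ancestors by propagating the mark ``subtree contains $s$'' through tree contraction and uncontraction, and extract them with a parallel filter; your depth labels, which also give the path order, are a harmless extra.

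Two small remarks. First, the paper's potential $|\mathcal{P}(C)|$ does increase along every edge, even from unreachable configurations: each agent's new good lies outside $\mathcal{P}(C)$ but inside $\mathcal{P}(C')$, so switching to $\sum_i c_i$ is valid but unnecessary. Second, your pointer-jumping fallback would cost $O(|V|\log|V|)$ work, so only the tree-contraction route meets the stated bound.
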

\begin{proof}
Since every node of $G$ has out-degree at most one, and every edge strictly increases the number of allocated goods, $G$ forms a directed forest. By adding directed edges from all sinks to a new super-sink node $t$, we obtain a rooted directed tree. Thus, it suffices to recover the unique directed path from $s$ to $t$.
We recover this path using parallel tree contraction (\Cref{lemma:treecontraction}). Initially, we mark only the vertex $s$. During each contraction step, whenever several vertices are merged into a single cluster, we mark the new cluster if and only if it contains a marked vertex. Thus, throughout the contraction phase, the unique cluster containing $s$ remains marked. Once the tree has been contracted to a single root cluster, we perform the standard uncontraction phase, propagating the mark back only through those clusters that were marked during contraction. It follows that, after uncontraction, exactly the vertices on the $s$--$t$ path are marked, and hence exactly the vertices on the directed path from $s$ to the sink in $G$ are marked.

We can then apply a parallel filter to the vertices of $G$ to extract the marked nodes. By \Cref{lemma:treecontraction}, the tree contraction step takes $O(\log^2 |V|)$ depth and $O(|V|)$ work, while the final filtering requires $O(\log |V|)$ depth and $O(|V|)$ work. The stated bounds follow.
\end{proof}


\noindent We are now ready to prove that the algorithm obtains a Fixed-Order Round Robin allocation, and thus, an \efone allocation.

\reachabilityreductionGen*
\begin{proof}
    We first argue correctness. Starting from the source configuration $s=(1,\ldots,1)$, consider the unique path obtained by repeatedly following outgoing edges. By construction, each node on this path corresponds to the configuration at the beginning of a round of Fixed-Order Round Robin, and stores exactly the partial allocation produced in that round. Thus, by induction on the number of rounds, the $r$th node on this path is exactly the configuration reached after $r$ rounds of Round Robin. Since every transition allocates at least one previously unallocated good, this path must terminate. Therefore, the unique directed path starting at $s$ in $G$ exactly corresponds to the execution of Fixed-Order Round Robin, and thus the corresponding allocation is \efone.

    We now analyze the complexity. By \Cref{lemma:single_round_residual}, the outgoing edge for each configuration, together with its associated partial allocation, can be computed in $O(n^2+\log m)$ depth and $O(nm+n^2)$ work. Computing these independently in parallel for all $O(m^n)$ configurations takes $O(n^2+ n\log m)$ depth and $O(nm^{n+1})$ total work. By \Cref{lem:tree_contraction_path}, the required path and corresponding allocation can be computed in $O(n^2\log^2 m)$ depth and $O(m^n)$ work. Thus, overall, the algorithm runs in $O(n^2\log^2 m)$ depth and $O(nm^{n+1})$ work.
\end{proof}

\subsection{An \NC Algorithm for Sparse \texorpdfstring{$O(1)$}{O(1)}-Rank Hypergraph Instances}
\label{subsec:3hypergraph}
In this section, we compute \efone allocations for instances inducing hypergraphs of rank $r=O(1)$ and bounded edge degree $\Delta = \polylog(n,m)$.
The high-level approach is similar to the algorithm on graph instances in \Cref{subsec:graph}.
The main new difficulty is that, unlike in the graph case, two hyperedges may intersect in more than one agent. As a result, the envy relations created by one hyperedge need not be compatible with the other, and thus, we cannot process all hyperedges independently.

To handle this, we consider the graph whose vertices are the distinct hyperedges, with two hyperedges adjacent if and only if they intersect (i.e., the line graph of the input hypergraph). Since this graph has maximum degree at most $\Delta$, it admits a proper $(\Delta+1)$-coloring. We process the color classes sequentially. Hyperedges of the same color are pairwise disjoint, and hence can be handled independently in parallel.

For a distinct hyperedge $e$, let $M_e$ denote the set of goods inducing $e$. Before processing a color class, we determine, for each hyperedge $e$ in that class, a suitable ordering of the agents in $e$ from the current envy graph. Concretely, we consider the envy graph induced by the agents of $e$ under the current partial allocation. If this induced envy graph contains a cycle, we repeatedly apply the standard sequential envy-cycle elimination algorithm (\Cref{defn:envycycle}) restricted to the agents of $e$, until the induced envy graph becomes acyclic. Since $|e|\le r=O(1)$, this takes $O(1)$ depth and work. We then fix any topological ordering of the resulting DAG, and use this ordering to allocate the goods in $M_e$ via the constant-agent algorithm from the previous subsection. Note that, unlike the previous subsection, the ordering is no longer fixed globally, and may change from one color class to the next.

We now formalize the above discussion. Observe that, below, when $\Delta=\polylog(m,n)$, the algorithm is an \NC algorithm.

\begin{theorem}\label{thm:hypergraph_nc}
There is an algorithm for hypergraph instances of rank $r=O(1)$ and edge degree at most $\Delta$, that computes an \efone allocation in $O(\Delta\log^4 m)$ depth and $O(m^{r+1})$ work.
\end{theorem}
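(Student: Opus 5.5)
The plan is to analyze the algorithm just described: color the line graph of distinct hyperedges, process color classes in sequence, and within a class handle each hyperedge $e$ independently. For each $e$ we run envy-cycle elimination restricted to $e$, fix a topological order of the induced envy DAG, and allocate $M_e$ by Fixed-Order Round Robin with that order via \Cref{thm:reachabilityreductionGen}. Goods valued by no one can be given away arbitrarily. We maintain the invariant that after each color class the current partial allocation is \efone; the final allocation then covers all goods and is \efone.

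The key step is preserving this invariant when one hyperedge $e$ is processed. Fix agents $i,j$. If neither is in $e$, nothing changes for them. If exactly one is in $e$, the other agent values every good of $M_e$ at zero. So if $i\notin e$, agent $i$ sees no change. If $i\in e$ and $j\notin e$, then $i$ only gains value. In both cases the invariant persists, provided the permutation step did not break it.

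The permutation step is \Cref{lemma:envycycle} applied to the envy graph restricted to $e$. Rotating bundles along a cycle keeps the multiset of bundles fixed and weakly increases each rotated agent's value. Hence every agent's \efone comparisons against other bundles remain valid: an agent compares against the same bundles as before (one of which is now possibly its old bundle), and its own value only increased. I would check this bookkeeping carefully, including agents outside $e$, whose comparisons merely relabel whose bundle is whose.

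For $i,j\in e$, take the topological order and suppose $j$ precedes $i$ in it. If $i$ precedes $j$, then $i$ does not envy $j$ before the round (that is what the order is chosen to give, and I expect this to be the main subtlety). Round Robin also gives $i$ at least as much as $j$ in $i$'s valuation on $M_e$, so $i$ stays envy-free toward $j$. If $j$ precedes $i$, then $i$ does not envy $j$ before the round: a topological order puts $j$ first when the edge $(j,i)$ is present, so there is no edge $i\to j$. The standard Round Robin argument gives $v_i(\text{$i$'s part of } M_e)\ge v_i(\text{$j$'s part of } M_e\setminus\{g\})$, with $g$ the first good $j$ picks. Adding the pre-round non-envy yields \efone. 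The point to verify is the orientation: agents that are not envied go first, i.e., sources first, which is the standard topological order. In both cases the pre-round relation between $i$ and $j$ is envy-free, so \efone with the single good $g$ is removed. Since hyperedges in a color class are disjoint, no agent participates in two simultaneous updates, and the argument is per-hyperedge.

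For complexity, the line graph has at most $m$ vertices and maximum degree at most $\Delta$ as in the setup (or $r\Delta$; the constant is absorbed). By \Cref{lemma:coloring}, it is colored in $O(\Delta\log^4 m)$ depth and $O(\Delta m\log^2 m)$ work. There are $\Delta+1$ classes. Each class costs $O(1)$ for cycle elimination plus $O(r^2\log^2 m)$ depth by \Cref{thm:reachabilityreductionGen}, with work $\sum_e O(|M_e|^{r+1})\le O(m^{r+1})$ per class. This totals $O(\Delta\log^2 m)$ depth and $O(\Delta m^{r+1})$ work. Since $\Delta\le m$ and the coloring work is smaller, the work bound should be stated as, or absorbed into, $O(m^{r+1})$ up to that factor. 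The depth is dominated by the coloring's $O(\Delta\log^4 m)$.
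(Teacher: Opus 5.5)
\paragraph{The gap: the topological order does not give pre-round non-envy in both directions.}
The gap is in your pair analysis for $i,j\in e$, which you yourself flag as the main subtlety. When $i$ precedes $j$ in the topological order, the order rules out only an edge $j\to i$. That is, it guarantees that $j$ does not envy $i$. It says nothing against an edge $i\to j$; in fact an edge $i\to j$ is exactly what forces $i$ before $j$.

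So your claim that ``$i$ does not envy $j$ before the round'' in this case is false. So is your summary that in both cases the pre-round relation is envy-free: that would make the allocation restricted to $e$ envy-free, and cycle elimination only makes the envy graph acyclic. For a concrete failure, suppose $j$ received the single good of an earlier hyperedge $\{i,j\}$ that both agents value. When a later hyperedge such as $\{i,j,k\}$ is processed, the envy graph contains the edge $i\to j$, so $i$ goes first. After Round Robin on equally valued goods, $i$ still envies $j$. Hence your conclusion that ``$i$ stays envy-free toward $j$'' fails.

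The repair uses the invariant you already maintain. After cycle elimination, the current allocation $A$ is still \efone by \Cref{lemma:envycycle}. So there is $g\in A_j$ with $v_i(A_i)\ge v_i(A_j\setminus\{g\})$. Meanwhile, the Round Robin allocation $B$ of $M_e$ gives the earlier agent $v_i(B_i)\ge v_i(B_j)$. Adding these gives \efone for $(i,j)$, with the removed good taken from the old bundle $A_j$ rather than from $M_e$.

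Your other case is correct: the later agent does not envy the earlier one before the round and loses at most one good in Round Robin. The pairing of ``\efone plus envy-free'' in one direction with ``envy-free plus \efone'' in the other is exactly the paper's argument. It is also the reason the order is chosen so that later agents do not envy earlier ones.

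\paragraph{The work bound.}
Your work accounting also falls short of the stated bound. Bounding each class by $O(m^{r+1})$ and multiplying by $\Delta+1$ gives $O(\Delta m^{r+1})$, and that factor of $\Delta$ cannot simply be absorbed. Instead, sum over all hyperedges at once. Each hyperedge is processed in exactly one class, and the sets $M_e$ are disjoint, so the total is $\sum_e O(|M_e|^{r+1})\le O((\sum_e |M_e|)^{r+1})=O(m^{r+1})$, as in the paper.
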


\begin{proof}
Let $E$ denote the set of distinct hyperedges, and for each $e\in E$ let $M_e$ denote the set of goods inducing $e$. Compute a proper $(\Delta+1)$-coloring of the hyperedges in $E$, and process the resulting color classes one after another. For each color class, all hyperedges in the class are pairwise disjoint, and processing a color class means allocating the goods in the hyperedges in this color class.

We prove correctness by induction on the number of processed color classes. At the outset, before any color class is processed the allocation is empty, and hence trivially \efone. Now suppose that after processing the first $c-1$ color classes, the current allocation is \efone. 
Let this allocation be denoted by $A$, and consider the $c$th color class.

Fix a hyperedge $e$ in this color class. First, we update the allocation $A$ by repeatedly eliminating cycles in the envy graph induced by the agents of $e$, restricted to these agents alone, until the induced envy graph becomes acyclic. Since envy-cycle elimination preserves \efone (\Cref{lemma:envycycle}), the resulting allocation $A$ remains \efone. 
We now fix any topological ordering $\sigma_e$ of the resulting induced envy DAG on $e$, and let $B$ denote the allocation of the goods in $M_e$ produced by the constant-agent algorithm using the ordering $\sigma_e$.

Consider any two agents $i,j\in e$, and suppose that $i$ precedes $j$ in $\sigma_e$. Since $\sigma_e$ is a topological ordering, there is no edge from $j$ to $i$ in the induced envy graph under $A$, and hence $j$ does not envy $i$ under $A$.
Since $A$ is \efone, agent $i$ envies $j$ by at most one good under $A$. On the other hand, by the guarantee of the constant-agent algorithm, the allocation $B$ is \efone on the agents of $e$, and moreover, since $i$ precedes $j$, agent $i$ does not envy $j$ under $B$, while agent $j$ envies $i$ by at most one good under $B$.

We claim that the combined allocation $A\cup B$ is still \efone on the pair $(i,j)$. Since $i$ envies $j$ by at most one good under $A$, there exists a good $g\in A_j$ such that
\begin{align*}
v_i(A_i)\ge v_i(A_j\setminus\{g\}).
\end{align*}
Since $i$ does not envy $j$ under $B$, we also have
\begin{align*}
v_i(B_i)\ge v_i(B_j).
\end{align*}
By additivity,
\begin{align*}
v_i(A_i\cup B_i)\geq v_i((A_j\setminus\{g\})\cup B_j),
\end{align*}
and hence $i$ envies $j$ by at most one good under $A\cup B$. Similarly, since $j$ does not envy $i$ under $A$, and envies $i$ by at most one good under $B$, there exists a good $h\in B_i$ such that
\begin{align*}
v_j(A_j\cup B_j)\geq v_j(A_i\cup (B_i\setminus\{h\})),
\end{align*}
so $j$ also envies $i$ by at most one good under $A\cup B$. Therefore, the combined allocation remains \efone on the agents in the edge $e$.

Moreover, every good in $M_e$ is valued only by the agents of $e$. Therefore, allocating the goods in $M_e$ affects only the envy relations involving agents in $e$. Since hyperedges in the current color class are pairwise disjoint, these updates are independent across hyperedges in the class, and hence all hyperedges of the class can be processed in parallel without interference. It follows that, after processing the entire color class, the allocation remains \efone (although there may be envy cycles present in the envy graph). This completes the induction.

We now analyze the complexity. By \Cref{lemma:coloring}, computing the $(\Delta+1)$-coloring takes $O(\Delta\log^4m)$ depth and $O(\Delta m\log^2m)$ work. Fix a hyperedge $e$, and let $m_e:=|M_e|$. Since $|e|\le r=O(1)$, the envy-cycle elimination and the computation of a topological ordering both take $O(1)$ work and depth by \Cref{lemma:envycycle}. The constant-agent algorithm, applied to the agents of $e$ and the goods in $M_e$, takes $O(\log^2 m_e)$ depth and $O(m_e^{r+1})$ work by \Cref{thm:reachabilityreductionGen}. Since the hyperedges within a color class are processed independently in parallel, the work of one color class is $\sum_{e} O(m_e^{r+1})$, where the sum ranges over hyperedges in that class, while the depth is $O(\log^2 m)$.

Summing over all color classes, the total work is
\begin{align*}
    \sum_{e\in \mathcal{E}} O(m_e^{r+1}) \le O\left(\left(\sum_{e\in \mathcal{E}} m_e\right)^{r+1}\right)
    \;=\; O(m^{r+1}),
\end{align*}
since $\sum_{e\in\mathcal{E}} m_e = m$ and $r\ge 1$ is constant. There are $\Delta+1$ color classes, so the total depth during the allocation stage is $O(\Delta\log^2 m)$. This proves the theorem.
\end{proof}
\noindent 
This algorithm can be seen as a generalization of our algorithm for graphs; see \Cref{fig:hypergraph}.

\begin{figure}[t]
    \centering
    \begin{subfigure}[t]
        {0.49\textwidth}
        \centering
        \includegraphics[width=.9\linewidth]{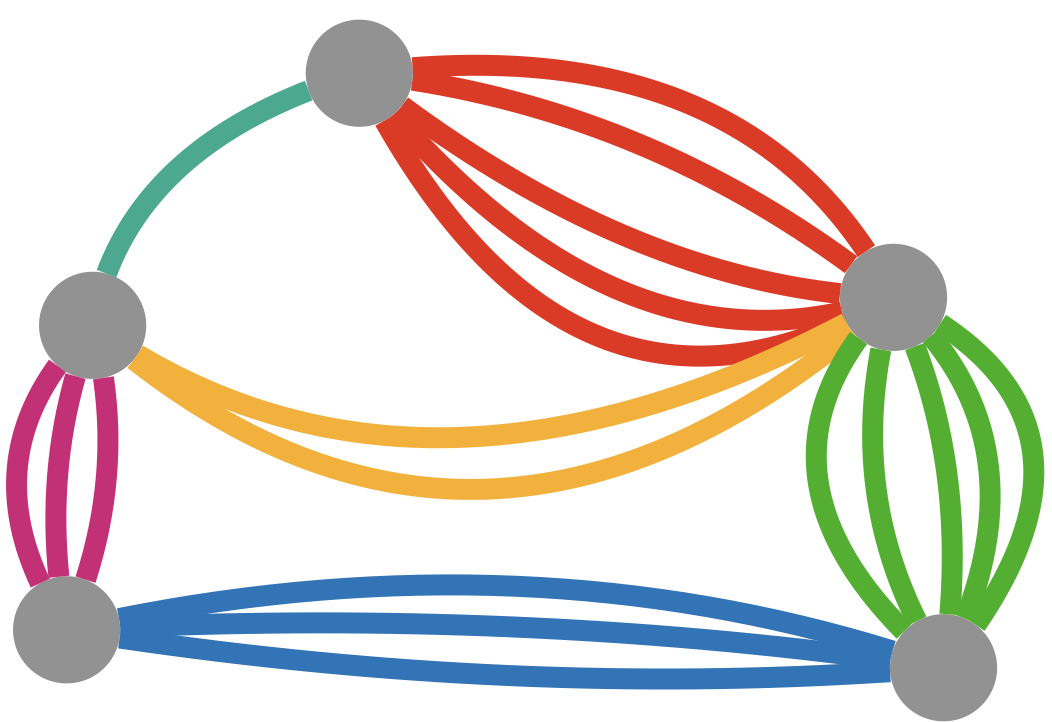}
        \subcaption{Graph $G$.}
        \label{subfig:graph}
    \end{subfigure}%
    ~
    \begin{subfigure}[t]
        {0.49\textwidth}
        \centering
        \includegraphics[width=.95\linewidth]{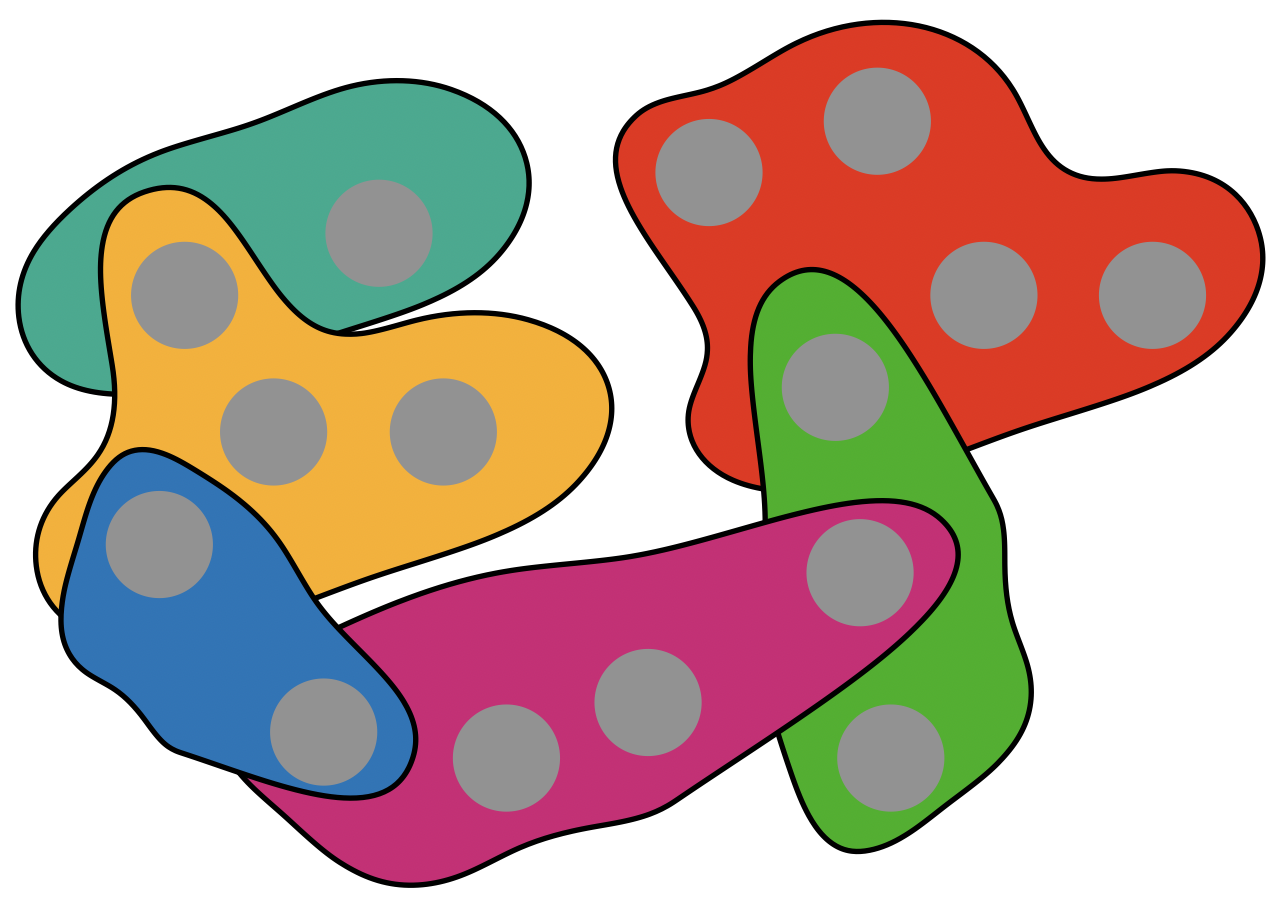}
        \subcaption{Hypergraph $H$.}
        \label{subfig:hypergraph}
    \end{subfigure}%
    \caption{A graph $G$ induced by instance $\cI:=\{M,N,\{v_i\}_i\}$ where vertices are agents and edges are goods (\Cref{subfig:graph}), and hypergraph $H$ induced by instance $\cI':=\{M',N',\{v'_i\}_i\}$ where vertices are agents and hyperedges are goods (\Cref{subfig:hypergraph}). Self-loops have been omitted for simplicity, and the hyperedges may correspond to multiedges. The algorithm for graphs (resp. hypergraphs) runs the algorithm for two (resp. $O(1)$) agents for each color class of (hyper)edges in parallel.}
    \label{fig:hypergraph}
\end{figure}

\subsection{Bypassing Fixed-Order Round Robin \CC-Hardness in \NC}
The \CC-hardness of \cite{garg2023fairly} for Fixed-Order Round Robin holds even when each good is positively valued by $3$ agents and each agent positively values at most $3$ goods. Equivalently, these are hypergraph instances with rank $3$ and maximum edge degree $6$. Thus, by giving up on simulating Fixed-Order Round Robin, our hypergraph algorithm bypasses their hardness result.
\bypassHard*
\begin{proof}    
This corresponds to a hypergraph instance with rank $r = O(1)$, and $\Delta = \polylog(m,n)$, and thus the proof follows by \Cref{thm:hypergraph_nc}.
\end{proof}




\section{Depth \texorpdfstring{$\tilde{O}(m/n)$}{tilde-O(m/n)} Algorithms with Polynomial Work}\label{sec:m/n}
In this section we give an algorithm for computing an \efone allocation with depth roughly $m/n$.
To obtain this we use the fact that computing a maximum matching is in \RNC \cite{Karp1986Constructing,Mulmuley1987Matching}.
\mnsmall*
\begin{proof}
    Given an allocation instance, construct a maximum matching instance as follows: create a complete bipartite graph $G$ with the agents on one side and the goods on the other. Specifically $V(G)=M\sqcup N$ and $E(G)$ contains edge $(a,g)$ for every agent $a\in N$ and good $j\in M$. The weight of edge $(a,g)$ between agent $a$ and good $g$ is $m-p_a(g)$, where $p_a(g)$ is the \textit{index} of good $g$ in agent $a$'s preference list (with ties broken arbitrarily). Let $\mathcal{M}\subseteq E(G)$ denote a maximum matching of this instance (so $|\mathcal{M}|=n$). 
    
    Then the \efone allocation algorithm iteratively computes a maximum matching $\cM_i$ in $G_i$ for $i\in[m/n]$ where $G_1=G$ and $G_i=G_{i-1}-(V(\cM_{i-1})\cap M)$ for each $i>1$. For each $\cM_i$ we obtain an allocation $\cA_i$ in which $\cA_i(a)=\{g\}$ iff $(a,g)\in\cM_i$ and return $\cA:=\bigcup_i\cA_i$. To solve the matching instances, we apply \Cref{thm:matching} to each $G_i$. 
    Furthermore, the algorithm of \Cref{thm:matching} is randomized with failure probability $1/2$, so we run it $c\log\frac{m}{n}$ times for any constant $c>1$ to reduce the failure probability to $n/m^c$ in each iteration. We say that the algorithm succeeds if none of the iterations failed to compute a maximum matching of their corresponding instance. As there are $m/n$ iterations, the probability that the algorithm succeeds is at least $1-1/m^{c-1}$ by a union bound.   
    Since $|V(G_i)|\leq |V(G_1)=|M|+|N|=m+n$ and $|E(G_i)|\leq |E(G_1)|=mn$ with $m\geq n$, then we have a parallel algorithm that runs in depth $O\left(\frac{m}{n}\log^2(m+ n)\cdot \log m\right)=O\left(\frac{m}{n}\log^3m\right)$ and work $O\left(\frac{m}{n}(mn)(m+n)^{3.5}\log^2m\cdot \log m \right)=O\left(m^{5.5}\log^3m\right)$. 

    It remains to show that $\cA$ is \efone if the algorithm succeeds. To do this, we prove that at any iteration $i$, the partial solution $\bigcup_{j\in[i]}\cA_j$ is \efone. Suppose this is not true; let $i$ be the earliest iteration such that $\bigcup_{j\in[i]}\cA_j$ is not \efone. Clearly $i>1$ because there must more than $1$ good allocated to each agent for the partial allocation to not be \efone. Furthermore, $\bigcup_{j\in[i]}\cA_j$ must be envy free up to 2 goods, since $\bigcup_{j\in[i-1]}\cA_j$ was \efone and $\cA_i$ only allocates one more good to each agent. Let $a$ be an agent that envies some other agent $b$ by $2$ goods $g,h$. Note that we must have $g\in\cM_{j}$ for some $j<i$ and $h\in\cM_i$, since iteration $i$ is the first iteration that isn't \efone. Let $f$ be the good that is matched to agent $a$ in $\cM_j$. Then if we had replaced the edge $(a,f)$ with $(a,h)$ in $\cM_{j}$, then we obtain a feasible matching (since $h$ was unmatched by iteration $i>j$) with strictly higher weight than $\cM_{j}$ (since good $h$ appears earlier than good $f$ in $a$'s preference list), contradicting the fact that $\cM_j$ is a maximum matching in $G_j$. 
\end{proof}

\noindent
Hence we have \RNC algorithms for \efone allocations for instances with $n\geq m/\polylog(m)$, and sublinear-depth algorithms for instances with $n\geq m^\eps$ for any constant $\eps>0$. 

\subsection{Bypassing Fixed-Order Round Robin \CC-Hardness in \RNC}
As mentioned above, Fixed-Order Round Robin is \CC-hard even when each good is positively valued by $3$ agents and each agent positively values at most $3$ goods. Applying our depth $\approx m/n$ algorithm, we are able to bypass this hardness even when we only have a polylog bound on the number of goods that each agent positively values.

\bypassHardAgain*
\begin{proof}
    We can assume without loss of generality that each good is positively valued by at least $1$ agent, since we can arbitrarily and efficiently allocate any good that is not positively valued by any agent to any agent. Then the number of goods $m$ is at most $\polylog(m,n)$. Running \Cref{thm:m/n-small} then gives an \RNC \efone allocation algorithm.
\end{proof}

\section{Algorithms for Relaxations of \efone}\label{sec:efK}

In this section we give parallel algorithms for \efk{\left(k\right)}, one with $k=\tilde{O}\left(\sqrt{m}\right)$ in RNC (\Cref{subsec:efrootm}) and one with $k=m^\eps$ in sublinear depth for any constant $\eps>0$ (\Cref{subsec:efm}). Both algorithms are based on partitioning the instance into disjoint instances and obtaining \efone allocations for each in parallel before combining them together. The second algorithm uses the algorithm with depth roughly $m/n$ that we gave in \Cref{sec:m/n}.

\subsection{\texorpdfstring{\efk{\left(\sqrt{m}\right)}}{efrootm} Allocations in \RNC}\label{subsec:efrootm}
Formally, the algorithm first arbitrarily partitions the $m$ goods into $\frac{m}{n}$ parts $\{M_i\}_{i\in[\frac{m}{n}]}$. We define an allocation instance with $N$ and $M_i$ for every $i$. Note $|M_i|=|N|=n$ for every $i$. Then for each instance $i$, we uniformly randomly sample a permutation over $n$ which corresponds to an allocation $\cA_i$ from the $n$ goods in $M_i$ to the $n$ agents. We return the allocation $\cA:=\bigcup_i \cA_i$.

Partitioning the instance can be done in constant depth, so the backbone of the computation is just sampling a random permutation over $n$.  
\noindent We have what we need to prove the following:
\efrootm*
\begin{proof}
    Fix a pair of agents $a,b$, one of the instances $i$, and the two goods $g^i,h^i\in M_i$ that were assigned to $a$ and $b$ in instance $i$. Crucially, we do not assume anything on which of $g^i$ and $h^i$ actually were assigned to $a$ and $b$; we only condition on these two goods being the goods that were assigned to either $a,b$ in instance $i$. Without loss of generality assume that $v_a(g^i)\geq v_a(h^i)$. By our fixing of the randomness and the fact that the assignment of goods to agents for instance $i$ was sampled uniformly randomly, observe that with probability $p\geq 1/2$ we have $a$ is assigned $g^i$ and $b$ is assigned $h^i$, and vice-versa. We can fix the randomness and define $g^i$ and $h^i$ in the same way for each instance $i$, where the same property holds. For an instance $i$, we say that $a$ received its preferred good if $a$ was assigned $g^i$.
    
    For the sake of analysis, sort the instances according to $V_a(i):=v_a(g^i)-v_a(h^i)$. That is, we reorder the instances such that instance $1$ maximizes $v_a(g^i)-v_a(h^i)$ over all $i$, instance $2$ maximizes $v_a(g^i)-v_a(h^i)$ over the remaining instances, and so on. With this re-sorting of the instances, we define the following random walk on $\mathbb{Z}$ with respect to $a$: the walk starts at $K:=\left\lceil \sqrt{\frac{3m\log m}{n}}\right\rceil$ and takes a $+1$ step at time $t$ if $a$ received its preferred good in instance $t$ (where we use the new sorted order of the instances) and a $-1$ step otherwise. This is equivalent to an $\frac{m}{n}$-step random walk starting at $K$ where $a$ takes a $+1$ step with probability $p$ and a $-1$ step with probability $1-p$ at each time $t$. Applying \Cref{lemma:hoeffding} with $N=m/n,c=3,d=n$ to this random walk, we have with probability at most $1/m^3$ that the random walk never goes below $0$.
    
    We next prove that if the random walk never goes below $0$, then $a$ envies $b$ by at most $K$ goods. Define $\pm1$ random variables $X_t$ for each $t\in\left[\frac{m}{n}\right]$ where $X_t$ is the step of the random walk at time $t$ and let $X=\sum_t X_t$. Intuitively, the claim follows from the way we ordered the instances, which defined the order of the steps in the random walk. Note that for the random walk to never step below $0$, we must have $\sum_{t\in[T]}X_t\geq-K$ for any $T$. In other words, for any $T$, if we ignore the first $K$ of the $-1$ steps by time $T$, then there is an injective mapping $\cM$ of the remaining $-1$ steps by time $T$ to the $+1$ steps by time $T$ such that each $-1$ step that occurred at time $t$ is mapped to a $+1$ step that occurred at time $t'<t$. For a time step $i$ let $\cM(i)$ denote its mapping by $\cM$ and let $\operatorname{dom}(\cM)$ denote the domain of $\cM$, i.e.\ the set of $-1$ steps $i$ that were mapped to an earlier $+1$ step by $\cM$. So we have that $a$ received $h^i$ in instance $i$ and $a$ received $g^j$ in instance $\cM(i)$ where $\cM(i)<i$.
    Then by our re-sorting of the instances, we have $\sum_{i\in \operatorname{dom}(\cM)}v_a(g^{\cM(i)})-v_a(h^{\cM(i)}) \geq \sum_{i\in\operatorname{dom}(\cM)}v_a(g^i)-v_a(h^i)$, where agent $a$ received all of the $g^{\cM(i)}$'s and all of the $h^i$'s in these two sums. The only instances we have not considered yet were the $K$ ignored instances that correspond to the first $K$ of the $-1$ steps in the random walk; these correspond to a subset $S\subseteq \cA(b)$ of $K$ goods, each of the form $g^i$ for some instance $i$. Then we have 
    \begin{align*}
        v_a\left(\cA(a)\right) = \sum_{i\in \operatorname{dom}(\cM)}v_a(g^{\cM(i)})-v_a(h^{\cM(i)}) \geq \sum_{i\in\operatorname{dom}(\cM)}v_a(g^i)-v_a(h^i)= v_a\left(\cA(b)\setminus S\right)
    \end{align*}
    proving the claim that $a$ envies $b$ by at most $K$ goods if the random walk never went below $0$. Observe that an equivalent claim to the previous claim is that if $|X|<K$ then $a$ envies $b$ by at most $K$ goods. This is because the event that $a$ goes below $0$ (having started at $K$) is the event that $X<K$ which is a subset of the event that $|X|\geq K$. Then with probability at most $1/m^3$ agent $a$ envies $b$ by more than $\sqrt{\frac{3m\log m}{n}}$ goods. After union bounding over all ordered pairs of agents we have that $\cA$ is \efk{\left(\sqrt{\frac{3m\log m}{n}}\right)} with probability at least $1-1/m$.

    The algorithm's runtime is that of sampling a uniformly random permutation over $n$ for each of the $m/n$ instances so by \Cref{lemma:permutation} its depth is $O(\log n)$ and its work is $O(n)\cdot m/n=O(m)$. 
\end{proof}

\subsection{\texorpdfstring{\efk{\left(m^\eps\right)}}{ekmeps} Allocations with Sublinear Depth}\label{subsec:efm}
The algorithm for computing \efk{\left(m^\eps\right)} allocations is similar to the \efk{\left(\sqrt{m}\right)} algorithm in that it breaks up the instance into disjoint instances and solves each in parallel. The difference is that it partitions the instances such that in each instance the number goods is \textit{close} to the number of agents, but not equal. Hence, we use our depth $m/n$ algorithm.

Given some $k\in[m/n]$, the algorithm first arbitrarily partitions the $m$ goods into $\frac{m}{nk}$ parts $\{M_i\}_{i\in[\frac{m}{nk}]}$. We define an allocation instance with $N$ and $M_i$ for every $i$. Then for each instance $i$, we run the algorithm of \Cref{thm:m/n-small} to obtain an \efone allocation $\cA_i$. We return the allocation $\cA:=\bigcup_i \cA_i$. 
\efm*
\begin{proof}
    We use the above algorithm. Fix any pair of agents $a,b$. By \Cref{thm:m/n-small}, $a$ envies $b$ by at most $1$ good $g_i\in M_i$ for each instance $\cA_i$ (note if $a$ doesn't envy $b$ then $g_i$ can be any good). Since there are $\frac{m}{nk}$ instances and the $M_i$'s are disjoint, there exists a subset $S\subseteq \cA(b),|S|\leq\frac{m}{nk}$ of goods such that 
    \begin{align*}
        v_a(\cA(a)) = \sum_i v_a\left( \cA_i(a)\right)\geq \sum_i v_a\left( \cA_i(b)\setminus g_i\right) = v_a(\cA(b)\setminus S)
    \end{align*}
    Hence $\cA$ is an \efk{\left(\frac{m}{nk}\right)} allocation.
    By \Cref{thm:m/n-small}, the algorithm's depth is $O(nk\log^3 (nk)/n) = O\left(k\log^2(nk)\right)$ and its work is $O\left((nk)^{5.5}\log^3(nk)\right)$. The algorithm from \Cref{thm:m/n-small} can be made to fail with probability at most $1/m^2$ (by setting $c=3$ in the proof of \Cref{thm:m/n-small}); by union bounding over the $\frac{m}{nk}$ parts we have that the probability that the algorithm succeeds for all of the parts is at least $1-1/m$.
    The theorem follows by setting $k=\Theta\left(m^{1-\eps}\right)$.
\end{proof}

\section{\CC-Completeness of Fixed-Order Round Robin}
We show that computing Fixed-Order Round Robin \efone allocations is in \CC by a reduction to the stable matching problem, a canonical problem for the class \CC.

\begin{definition}[Stable Matching]~\label{def:stable_matching}
    In the stable matching problem with complete preferences, we are given a bipartite graph with parts $A$ and $B$, where each vertex in $A$ has a strict preference ordering over all vertices in $B$ and vice-versa. The goal is to compute a matching $M$ such that there is no unmatched pair $a\in A,b\in B$ for which $a$ prefers $b$ to its partner in $M$ and $b$ prefers $a$ to its partner in $M$. Such a matching is called \defn{stable}.
\end{definition}
\begin{theorem}[\cite{greenlaw1995limits}]\label{thm:stable_matching_cc_hard}
    Stable matching is \CC-complete.
\end{theorem}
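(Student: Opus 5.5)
This theorem is a known classification, so I would assemble it from its two directions and not attempt a self-contained argument. Since comparator circuits are the defining model, I would route both directions through the \emph{Comparator Circuit Value} problem (CCV): given a comparator circuit, a Boolean input assignment and a designated output wire, decide its value. CCV is \CC-complete by definition (\Cref{defn:cc}). So the plan is to give logspace reductions in both directions between CCV and (a decision version of) stable matching, e.g.\ ``is the pair $(a,b)$ in the man-optimal stable matching?''.

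\textbf{Hardness (CCV $\le$ stable matching).} I would build the stable matching instance gate by gate.
\begin{itemize}
\item Each wire carries a Boolean value. Each comparator gate $(x,y)\mapsto(x\wedge y,\,x\vee y)$ is replaced by a constant-size gadget of men and women.
\item The preferences in a gadget are arranged so that, in the man-optimal stable matching, the gadget's ``output'' men are matched to their first or second choices according to the AND and OR of the ``input'' men's states.
\item Consecutive gadgets along a wire are chained. The gadget at time $t$ on a wire is linked to the gadget at time $t+1$ by having the relevant agents rank each other, so that a man's state (matched high or low) propagates forward.
\item Input wires are fixed by forcing gadgets: a man whose first choice also ranks him first encodes $1$, and a man forced to his second choice encodes $0$.
\item Preferences are then completed arbitrarily below all relevant entries.
\end{itemize}
Because each gadget is local and indexed by (gate, wire), the construction is logspace. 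Correctness follows by induction over gate order: in the man-proposing deferred-acceptance execution, the propagation of rejections mimics the circuit.

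\textbf{Membership (stable matching $\le$ CCV).} This is the direction I expect to be the main obstacle. Gale--Shapley has long sequential proposal chains, and nothing as simple as a single pass of comparators evaluates it. My route would follow Mayr--Subramanian in two steps.
\begin{itemize}
\item First, reduce stable matching to the stability problem for networks of ``X-gates'', which are nonexpansive monotone gates. The man-optimal stable matching is the least fixed point of a monotone operator on the lattice of stable matchings.
\item Second, show that the least stable configuration of such a network can be computed by a polynomial-size comparator circuit. The key observation is that an X-gate's update is a composition of min/max operations, and iterating a monotone min/max operator from the bottom element reaches the least fixed point after polynomially many rounds, each of which unrolls into a layer of comparators.
\end{itemize}
The delicate point is the boundedness argument. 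I would try to prove that each man's pointer into his preference list only moves forward and changes at most $m$ times, and that the corresponding values can be encoded in unary on $O(m)$ wires per agent so that ``advance pointer'' becomes a sorting-network-like comparator step. This gives the polynomial unrolling depth needed. Combining both directions gives \CC-completeness under logspace reductions, which is the form used in the reduction for Fixed-Order Round Robin that follows.
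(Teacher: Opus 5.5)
The paper does not prove this statement; it imports it from Mayr--Subramanian (via Greenlaw--Hoover--Ruzzo), so deferring to the literature is what the paper does. The membership sketch you add, however, rests on a false premise: X-gates are not monotone. In Subramanian's network, the gate for a pair $(m,w)$ reads $x$ on $m$'s rail and $y$ on $w$'s rail and outputs $x\wedge\neg y$ and $\neg x\wedge y$. Its update is therefore not a composition of min and max, and your least-fixed-point iteration does not apply as written.

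The missing idea is to complement every woman's rail. With $y'=\neg y$ the gate becomes $(x,y')\mapsto(x\wedge y',\,x\vee y')$, an ordinary comparator. This is exactly where bipartiteness is used: for roommates no consistent complementation exists, which fits the fact that roommates instances can have no stable matching. The monotone operator then acts on the lattice of Boolean wire configurations, not on the lattice of stable matchings. The substantive fact you are citing is Subramanian's correspondence between its fixed points and stable matchings, under which the least fixed point is the man-optimal matching.

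Once this is in place, the step you call delicate is easy. From the all-zero configuration each wire segment flips at most once, so $O(|A||B|)$ synchronous rounds suffice, and the rails already are the unary encoding of the men's positions. What actually lets each round unroll into a single layer of comparators is that the network is fan-out free: every gate output feeds exactly one gate input, and fresh constant wires resupply the external inputs each round. For a general monotone min/max operator this unrolling would fail, because a comparator circuit only permutes the multiset of wire values and cannot copy a signal.

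For hardness, you describe what the gadgets should achieve rather than giving them, and the gadget is the entire content of that direction. The natural route is to run the correspondence backwards and view the comparator circuit as an acyclic network of complemented X-gates. The obstacle is then that a circuit wire receives the minimum at some gates and the maximum at others. In a marriage network, by contrast, the minimum always continues along a man's rail and the maximum along a woman's. So you need a way to hand a signal between the two kinds of rail. For instance, placing right after a man's relevant entries a fresh woman who ranks him first transfers his signal onto her complemented rail, and symmetrically for women.

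Finally, ``CCV is \CC-complete by definition'' holds for Mayr--Subramanian's definition of \CC as the logspace closure of CCV. Under the paper's \Cref{defn:cc} (uniform families of comparator circuits), placing CCV in \CC requires a universal comparator circuit, as constructed by Cook--Filmus--Le.
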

\noindent
\cite{garg2023fairly} showed that computing Fixed-Order Round-Robin allocations is \CC-hard even on instances where each agent positively values at most $3$ goods and each good is positively valued by at most $3$ agents.
\begin{theorem}[\cite{garg2023fairly}]\label{thm:EF1-CC-hard}
    Fixed-Order Round Robin is \CC-hard.
\end{theorem}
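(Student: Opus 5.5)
The plan is a logspace reduction from the Comparator Circuit Value Problem (CCVP), which is \CC-complete by definition of \CC. Fix a comparator circuit with wires $1,\dots,w$ and gates $\gamma_1,\dots,\gamma_s$ in topological order. Each gate $\gamma_t=(i,j)$ replaces the wire values $(x_i,x_j)$ by $(x_i\wedge x_j,\,x_i\vee x_j)$. The task is to build, in logspace, an instance with additive valuations and a fixed agent ordering such that a designated agent's Round Robin bundle reveals the circuit output.

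\emph{Encoding.} For every wire $i$ and every time $t$ at which wire $i$ is touched, introduce a ``token'' good $x_{i,t}$. The intended invariant is that wire $i$ carries value $1$ after gate $t$ exactly when $x_{i,t}$ is still unallocated at the moment the next gate acting on wire $i$ is simulated.

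\emph{Gate gadget.} For a gate $\gamma_t=(i,j)$, introduce a constant number of agents, each positively valuing at most $3$ goods. A representative design uses two agents $p_t,q_t$.
\begin{itemize}
\item Agent $p_t$ values $x_{i,t'} > x_{j,t''} > z_t$, where $t',t''$ are the previous times wires $i,j$ were touched and $z_t$ is a private dummy good.
\item Agent $q_t$ values $x_{i,t'} > x_{j,t''} > x_{i,t}$ or a similar list, and $p_t$ precedes $q_t$ in the order.
\item The output tokens $x_{i,t},x_{j,t}$ are consumed (meaning the output is $0$) exactly when one of these agents ``falls through'' to them.
\end{itemize}
A case check over the four input combinations should show that the number of surviving input tokens is conserved, and that the first survivor is routed so that wire $i$ gets the AND and wire $j$ gets the OR. Circuit inputs are set by a preliminary agent who takes $x_{i,0}$ iff $x_i=0$. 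The output wire is read off by checking whether a final agent receives its last token.

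\emph{Timing.} The fixed ordering places all gadget agents of $\gamma_1$ before those of $\gamma_2$, and so on, within a single round. Every agent positively values at most $3$ goods and every good appears in at most $3$ lists. After an agent's valued goods are exhausted it only takes zero-value filler goods, and we pad with enough such goods. The key consequence is that all meaningful choices happen in the first round, so Round Robin in that round is just serial dictatorship along the gate order. This makes the invariant provable by induction on $t$.

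\emph{Expected obstacle.} The crux is designing the gate gadget so that it is correct under all four input patterns while respecting the degree bounds. Tie-breaking and zero-value goods must not let a later agent steal a token out of order. I would first try the two-agent gadget above and verify it by exhaustive case analysis. If it fails, I would add a third agent per gate to copy tokens, i.e.\ fanout via fresh tokens, which keeps degrees at most $3$. Logspace computability is routine, since each gadget depends only on the local gate and the previous occurrence of each wire.
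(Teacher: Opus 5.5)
The paper does not prove this statement; it imports it from Garg and Psomas, so your proposal has to stand on its own. The architecture is fine: round 1 of Fixed-Order Round Robin is a serial dictatorship, it finishes before round 2 starts, and a comparator circuit can be pushed through it gate by gate using presence/absence tokens.

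The gap is that the gate gadget, which is where all the content of the proof lives, is never actually supplied, and the one you write down is wrong. No agent in your gadget lists $x_{j,t}$, so the OR output can never be consumed and wire $j$ always reads $1$. Concretely, on input $(0,0)$, $p_t$ takes $z_t$, $q_t$ takes $x_{i,t}$, and $x_{j,t}$ survives; the other three cases happen to come out right. Saying a case check ``should show'' correctness, with a fallback ``if it fails,'' leaves the reduction unproved. Your proposed fallback also targets a non-issue: comparator circuits have no fanout, because each token $x_{i,t}$ is read only by the next gate on wire $i$, so nothing needs copying.

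A two-agent gadget that does work is $r_t$ with list $x_{i,t'} > x_{j,t''} > x_{j,t}$, followed by $q_t$ with list $x_{j,t''} > x_{i,t}$. Let $A,B$ indicate whether the input tokens are present at $r_t$'s turn.
\begin{enumerate}
\item $r_t$ consumes $x_{j,t}$ exactly when $\neg A\wedge\neg B$.
\item $x_{j,t''}$ is gone by $q_t$'s turn exactly when $\neg B$ holds or $r_t$ took it (i.e.\ $\neg A\wedge B$), that is, exactly when $\neg(A\wedge B)$. So $x_{i,t}$ survives iff $A\wedge B$.
\end{enumerate}
The degree bounds hold. Each agent lists at most $3$ goods. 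A token $x_{w,t}$ is listed by at most one agent before the next gate on wire $w$ (its producing gadget, or the input agent when $t=0$), and by at most two agents after it (the consuming gadget, or the final agent).

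Two further points need fixing. First, every non-final agent's last listed good is a fresh token that no earlier agent lists. Hence no one falls back on zero-valued goods during round 1, and this, not padding, is what prevents arbitrary tie-breaking from stealing a live token. Second, padding does not exhaust listed goods after round 1. For example, $p_t$ takes $z_t$ in round 2 whenever an input was present, and final tokens of wires with no later gate can be picked in rounds 2--3. So final bundles do not record round-1 token status in general. What you need is only that the read-off is a round-1 event. Put the final agent last in the order and let it value only the output token; then its bundle contains that token iff the token was free at its round-1 turn, and your induction on $t$ only has to concern round 1.
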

\noindent
We now show that Fixed-Order Round Robin is in \CC, and thus, it follows that computing a Fixed-Order Round Robin \efone allocation is \CC-Complete.

\cccomplete*
\begin{proof}
    We make use of stable matching as defined above.
    
    \textbf{Reduction.} Given an allocation instance with $m$ goods, $n$ agents with their (possibly incomplete) preferences $v$ on the goods, and a permutation $\sigma$ on the agents, we construct an instance of the stable matching problem as follows: let $A$ contain $m/n$ (numbered) copies of each agent (thus, $|A|=m$), and $B$ contain the set of goods. The preference order for each agent copy is the same as the preference order for that agent on the good set; if an agent's preference list is incomplete then we arbitrarily order the remaining goods in its ordering of the goods. The preference order for each good is the permutation $\sigma$ repeated $m/n$ times, where the $i$th appearance of an agent corresponds to the $i$th copy of that agent; hence $B$ has identical preference lists. It is easy to verify that the entire reduction can be carried out in logarithmic space.
    
    \textbf{Correctness.} 
    We first show that any stable matching instance where all vertices on one side have identical preference lists has a unique solution. Let $a$ be the first vertex in this common preference list. In any stable matching, $a$ must be matched to its most preferred partner. Otherwise, if $b$ is its most preferred partner, then $b$ is matched to some other vertex that appears later than $a$ in the common preference list, and hence $a$ and $b$ form a blocking pair. Removing $a$ and its matched partner, the same argument applies to the remaining instance. By induction, the stable matching is unique, and is obtained by considering the vertices in the common preference order and assigning each vertex its most preferred unmatched partner.

    Applying this to our constructed instance, we obtain that the unique stable matching is formed by first matching the first copy of each agent in the order $\sigma$, then the second copy of each agent in the same order, and so on, where each copy is matched to its most preferred unmatched good. But this is exactly the allocation produced by Fixed-Order Round Robin: i.e., in each round, the agents choose in the order $\sigma$, and each agent receives its most preferred good among those that remain.


    
    Hence, by \Cref{thm:stable_matching_cc_hard}, Fixed-Order Round Robin is in \CC. Combining with the fact that Fixed-Order Round Robin is also \CC-Hard by \Cref{thm:EF1-CC-hard}, we then have that Fixed-Order Round Robin is \CC-Complete.
\end{proof}
\noindent We remark that there exists a parallel algorithm for stable matching whose depth is sublinear in the sum of all preference lists \cite{Feder2000ASublinear}.
Unfortunately, our reduction from Fixed-Order Round Robin to stable matching constructs an instance of stable matching where all agents have complete preferences and both sides of the bipartite graph have size $m$, so the sum of preference lists is $O(m^2)$. Running the parallel stable matching algorithm gives a $\sqrt{m^2}=m$ depth algorithm, which is not interesting.

\section{Conclusion}
We extended several results from \cite{garg2023fairly} and showed that \efone allocations can be computed efficiently in parallel for a much larger class of instances than what was previously understood. The main open question that remains is the following:
\begin{quote} \centering
    \textit{Do algorithms for \efone allocations with sublinear depth and polynomial work\\ exist for all instances with additive valuations?}
\end{quote}
\noindent
Even a parallel algorithm with polynomial work whose depth is sublinear on the total sum of preference list lengths would be nontrivial, and would be analogous to the sublinear stable matching algorithm of \cite{Feder2000ASublinear}.  

Also, recall that \Cref{thm:reachabilityreduction} and \Cref{thm:m/n-small} together give \NC algorithms when the number of agents is $O(1)$ or \RNC algorithms when the number of agents is $\Omega(m/\polylog(m))$. Hence, we have shown that \efone allocations can be computed efficiently in parallel when the number of agents is small enough or large enough. A natural next goal in addressing  the above question is to give efficient parallel algorithms for \efone allocations for an intermediate number of agents; for instance, can \efone allocations be found in \NC for $n = \polylog(m)$ agents with additive valuations? 

\bibliographystyle{alpha}
\bibliography{references}

@string{teac = "ACM Transactions on Economics and Computation (TEAC)"}

@string{jacm = "Journal of the ACM (JACM)"}

@string{siamjc = "SIAM Journal on Computing (SICOMP)"}

@string{siamdm = "SIAM Journal on Discrete Mathematics (SIDMA)"}

@string{tcs = "Theoretical Computer Science (TCS)"}

@string{tocs = "Theory of Computing Systems (TOCS)"}

@string{ipl = "Information Processing Letters (IPL)"}

@string{jcss = "Journal of Computer and System Sciences (JCSS)"}

@string{toct = "ACM Transactions on Computation Theory (TOCT)"}

@string{or = "Operations Research"}

@string{ai = "Artificial Intelligence"}

@string{mlq = "Mathematical Logic Quarterly (MLQ)"}

@string{combinatorica = "Combinatorica"}

@string{jpe = "Journal of Political Economy"}

@string{cambridge = "Mathematical Proceedings of the Cambridge Philosophical Society"}

@string{spaa = "ACM Symposium on Parallelism in Algorithms and Architectures (SPAA)"}

@string{focs = "IEEE Symposium on Foundations of Computer Science (FOCS)"}

@string{ipdps = "IEEE International Parallel and Distributed Processing Symposium (IPDPS)"}

@string{opodis = "International Conference on Principles of Distributed Systems (OPODIS)"}

@string{aamas = "International Conference on Autonomous Agents and Multiagent Systems (AAMAS)"}

@string{ec = "ACM Conference on Economics and Computation (EC)"}

@string{wine = "Conference on Web and Internet Economics (WINE)"}

@string{fsttcs = "Foundations of Software Technology and Theoretical Computer Science (FSTTCS)"}

@string{aaai = "AAAI Conference on Artificial Intelligence (AAAI)"}

@string{ijcai = "International Joint Conference on Artificial Intelligence (IJCAI)"}

@string{adt = "Algorithmic Decision Theory (ADT)"}

@string{ecai = "European Conference on Artificial Intelligence (ECAI)"}

@inproceedings{garg2023fairly,
  title={Fairly Allocating Goods in Parallel},
  author={Garg, Rohan and Psomas, Alexandros},
  booktitle=aamas,
  year={2025}
}

@article{Feder2000ASublinear,
	author = {Tom{\'a}s Feder and Nimrod Megiddo and Serge A. Plotkin},
	issn = {0304-3975},
	journal = tcs,
	number = {1},
	pages = {297-308},
	title = {A sublinear parallel algorithm for stable matching},
	volume = {233},
	year = {2000},
}

@article{Karp1986Constructing,
	author = {Karp, R. M. and Upfal, E. and Wigderson, A.},
	doi = {10.1007/BF02579407},
	id = {Karp1986},
	isbn = {1439-6912},
	journal = combinatorica,
	number = {1},
	pages = {35--48},
	title = {{Constructing a Perfect Matching is in Random NC}},
	volume = {6},
	year = {1986},
}

@article{Mulmuley1987Matching,
	author = {Mulmuley, Ketan and Vazirani, Umesh V. and Vazirani, Vijay V.},
	id = {Mulmuley1987},
	isbn = {1439-6912},
	journal = combinatorica,
	number = {1},
	pages = {105--113},
	title = {{Matching is as Easy as Matrix Inversion}},
	volume = {7},
	year = {1987},
}

@article{lee2017apxnsw,
  title={APX-hardness of maximizing Nash social welfare with indivisible items},
  author={Lee, Euiwoong},
  journal=ipl,
  volume={122},
  pages={17--20},
  year={2017},
  publisher={Elsevier}
}

@book{greenlaw1995limits,
  title={Limits to parallel computation: P-completeness theory},
  author={Greenlaw, Raymond and Hoover, H James and Ruzzo, Walter L},
  year={1995},
  publisher={Oxford university press}
}

@book{jaja1997introduction,
	author = {JaJa, J.},
	publisher = {Addison Wesley},
	title = {An Introduction to Parallel Algorithms},
	year = {1997}
}

@inproceedings{blelloch2020optimal,
    author = {Blelloch, Guy E. and Fineman, Jeremy T. and Gu, Yan and Sun, Yihan},
    title = {Optimal Parallel Algorithms in the Binary-Forking Model},
    year = {2020},
    booktitle = spaa
}

@book{CLRS,
	author    = {Thomas H. Cormen and
	Charles E. Leiserson and
	Ronald L. Rivest and
	Clifford Stein},
	title     = {Introduction to Algorithms},
	publisher = {MIT Press},
	year      = {2009},
}

@Article{ABP01,
  author="Arora, N. S.
  and Blumofe, R. D.
  and Plaxton, C. G.",
  title="Thread Scheduling for Multiprogrammed Multiprocessors ",
  journal=tocs,
  year="2001",
  volume="34",
  number="2",
}

@article{BL98,
  author    = {Robert D. Blumofe and
  Charles E. Leiserson},
  title     = {Space-Efficient Scheduling of Multithreaded Computations},
  journal   = siamjc,
  volume    = {27},
  number    = {1},
  year      = {1998},
}

@inproceedings{fitzsimmons2025parallelizability,
  author       = {Zack Fitzsimmons and
                  Zohair Raza Hassan and
                  Edith Hemaspaandra},
  title        = {On the Parallelizability of Approval-Based Committee Rules},
  booktitle    = ecai,
  year         = {2025}
}

@inproceedings{mahara2025polynomial,
  title     = {A Polynomial-Time Algorithm for Fair and Efficient Allocation with a Fixed Number of Agents},
  author    = {Mahara, Ryoga},
  booktitle = wine,
  year      = {2025}
}

@article{budish11combinatorial,
    author = {Budish, Eric},
    title = {The Combinatorial Assignment Problem: Approximate Competitive Equilibrium from Equal Incomes},
    journal = jpe,
    volume = {119},
    number = {6},
    pages = {1061-1103},
    year = {2011}
}

@article{oh21fairly,
  author       = {Hoon Oh and
                  Ariel D. Procaccia and
                  Warut Suksompong},
  title        = {Fairly Allocating Many Goods with Few Queries},
  journal      = siamjc,
  volume       = {35},
  number       = {2},
  pages        = {788--813},
  year         = {2021}
}

@inproceedings{lipton04approximately,
  author       = {Richard J. Lipton and
                  Evangelos Markakis and
                  Elchanan Mossel and
                  Amin Saberi},
  title        = {On approximately fair allocations of indivisible goods},
  booktitle    = ec,
  pages        = {125--131},
  year         = {2004}
}

@inproceedings{feige25low,
  author       = {Uriel Feige},
  title        = {Low communication protocols for fair allocation of indivisible goods},
  booktitle    = ec,
  pages        = {358--382},
  year         = {2025}
}

@book{foley66resource,
  title={Resource allocation and the public sector},
  author={Foley, Duncan Karl},
  year={1966},
  publisher={Yale University}
}

@article{caragiannis19unreasonable,
  author       = {Ioannis Caragiannis and
                  David Kurokawa and
                  Herv{\'{e}} Moulin and
                  Ariel D. Procaccia and
                  Nisarg Shah and
                  Junxing Wang},
  title        = {The Unreasonable Fairness of Maximum Nash Welfare},
  journal      = teac,
  volume       = {7},
  number       = {3},
  pages        = {12:1--12:32},
  year         = {2019},
}

@inproceedings{barman18finding,
  author       = {Siddharth Barman and
                  Sanath Kumar Krishnamurthy and
                  Rohit Vaish},
  title        = {Finding Fair and Efficient Allocations},
  booktitle    = ec,
  year         = {2018},
}

@inproceedings{christodoulou23fair,
  author       = {George Christodoulou and
                  Amos Fiat and
                  Elias Koutsoupias and
                  Alkmini Sgouritsa},
  title        = {Fair allocation in graphs},
  booktitle    = ec,
  publisher    = {{ACM}},
  year         = {2023}
}

@article{chaudhury24efx,
  author       = {Bhaskar Ray Chaudhury and
                  Jugal Garg and
                  Kurt Mehlhorn},
  title        = {{EFX} Exists for Three Agents},
  journal      = jacm,
  volume       = {71},
  number       = {1},
  pages        = {4:1--4:27},
  year         = {2024}
}

@article{plaut20almost,
  author       = {Benjamin Plaut and
                  Tim Roughgarden},
  title        = {Almost Envy-Freeness with General Valuations},
  journal      = siamdm,
  volume       = {34},
  number       = {2},
  pages        = {1039--1068},
  year         = {2020}
}

@article{mayr92complexity,
  author       = {Ernst W. Mayr and
                  Ashok Subramanian},
  title        = {The Complexity of Circuit Value and Network Stability},
  journal      = jcss,
  volume       = {44},
  number       = {2},
  pages        = {302--323},
  year         = {1992}
}

@article{cook14complexity,
  author       = {Stephen A. Cook and
                  Yuval Filmus and
                  Dai Tri Man Le},
  title        = {The complexity of the comparator circuit value problem},
  journal      = toct,
  volume       = {6},
  number       = {4},
  pages        = {15:1--15:44},
  year         = {2014}
}

@book{brandt16handbook,
  editor       = {Felix Brandt and
                  Vincent Conitzer and
                  Ulle Endriss and
                  J{\'{e}}r{\^{o}}me Lang and
                  Ariel D. Procaccia},
  title        = {Handbook of Computational Social Choice},
  publisher    = {Cambridge University Press},
  year         = {2016}
}

@article{othman16complexity,
  author       = {Abraham Othman and
                  Christos H. Papadimitriou and
                  Aviad Rubinstein},
  title        = {The Complexity of Fairness Through Equilibrium},
  journal      = teac,
  volume       = {4},
  number       = {4},
  pages        = {20:1--20:19},
  year         = {2016}
}

@inproceedings{bu24logarithmic,
  author       = {Xiaolin Bu and
                  Zihao Li and
                  Shengxin Liu and
                  Jiaxin Song and
                  Biaoshuai Tao},
  title        = {Logarithmic Comparison-Based Query Complexity for Fair Division of
                  Indivisible Goods},
  year         =  {2024},
  booktitle    = wine
}

@article{aziz24best,
  author       = {Haris Aziz and
                  Rupert Freeman and
                  Nisarg Shah and
                  Rohit Vaish},
  title        = {Best of Both Worlds: Ex Ante and Ex Post Fairness in Resource Allocation},
  journal      = or,
  volume       = {72},
  number       = {4},
  pages        = {1674--1688},
  year         = {2024}
}

@article{amanatidis23fair,
  author       = {Georgios Amanatidis and
                  Haris Aziz and
                  Georgios Birmpas and
                  Aris Filos{-}Ratsikas and
                  Bo Li and
                  Herv{\'{e}} Moulin and
                  Alexandros A. Voudouris and
                  Xiaowei Wu},
  title        = {Fair division of indivisible goods: Recent progress and open questions},
  journal      = ai,
  volume       = {322},
  pages        = {103965},
  year         = {2023}
}

@inproceedings{afshinmehr25efx,
  author       = {Mahyar Afshinmehr and
                  Alireza Danaei and
                  Mehrafarin Kazemi and
                  Kurt Mehlhorn and
                  Nidhi Rathi},
  title        = {{EFX} Allocations and Orientations on Bipartite Multi-graphs: {A} Complete Picture},
  booktitle    = aamas,
  year         = {2025}
}

@inproceedings{sgouritsa25existence,
  author       = {Alkmini Sgouritsa and
                  Minas Marios Sotiriou},
  title        = {On the Existence of {EFX} Allocations in Multigraphs},
  booktitle    = aamas,
  year         = {2025}
}

@inproceedings{bhaskar25extending,
  author       = {Umang Bhaskar and
                  Yeshwant Pandit},
  title        = {Extending {EFX} Allocations to Further Multi-Graph Classes},
  booktitle    = fsttcs,
  year         = {2025}
}

@inproceedings{zheng19parallel,
  author       = {Xiong Zheng and
                  Vijay K. Garg},
  title        = {Parallel and Distributed Algorithms for the Housing Allocation Problem},
  booktitle    = opodis,
  year         = {2019}
}

@book{subramanian1989new,
  title={A new approach to stable matching problems},
  author={Subramanian, Ashok},
  year={1989},
  publisher={Stanford University}
}

@incollection{gazit1988optimal,
    title={Optimal tree contraction in the EREW model},
    author={Gazit, Hillel and Miller, Gary L and Teng, Shang-Hua},
    booktitle={Concurrent Computations: Algorithms, Architecture, and Technology},
    pages={139--156},
    year={1988},
    publisher={Springer}
}

@inproceedings{miller1985parallel,
    title={Parallel tree contraction and its application},
    author={Miller, Gary L and Reif, John H},
    booktitle=focs,
    volume={26},
    pages={478--489},
    year={1985}
}

@inproceedings{Misra2025EFGraph,
	address = {Cham},
	author = {Misra, Neeldhara and Sethia, Aditi},
	booktitle = adt,
	pages = {258--272},
	title = {Envy-Free and Efficient Allocations for Graphical Valuations},
	year = {2025}
}

@inproceedings{csar17winner,
  author       = {Theresa Csar and
                  Martin Lackner and
                  Reinhard Pichler and
                  Emanuel Sallinger},
  title        = {Winner Determination in Huge Elections with MapReduce},
  booktitle    = aaai,
  year         = {2017}
}

@inproceedings{csar18computing,
  author       = {Theresa Csar and
                  Martin Lackner and
                  Reinhard Pichler},
  title        = {Computing the Schulze Method for Large-Scale Preference Data Sets},
  booktitle    = ijcai,
  year         = {2018}
}

@inproceedings{hu20nc,
  author       = {Changyong Hu and
                  Vijay K. Garg},
  title        = {{NC} Algorithms for Popular Matchings in One-Sided Preference Systems
                  and Related Problems},
  booktitle    = ipdps,
  pages        = {759--768},
  year         = {2020}
}

@article{brandt09computational,
  author       = {Felix Brandt and
                  Felix A. Fischer and
                  Paul Harrenstein},
  title        = {The Computational Complexity of Choice Sets},
  journal      = mlq,
  volume       = {55},
  number       = {4},
  pages        = {444--459},
  year         = {2009}
}

@article{goldberg1989ParMIS,
	author = {Goldberg, Mark and Spencer, Thomas},
	doi = {10.1137/0218029},
	journal = siamjc,
	number = {2},
	pages = {419-427},
	title = {A New Parallel Algorithm for the Maximal Independent Set Problem},
	volume = {18},
	year = {1989},
}

\end{document}